\definecolor{darkgreen}{rgb}{0.1,0.5,0.1}
\theoremstyle{plain}
\newtheorem{thm}{Theorem}
\newtheorem{cor}[thm]{Corollary}
\theoremstyle{definition}
\newtheorem{defn}[thm]{Definition}
\newtheorem*{rem}{Remark}
\newcommand{\ket}[1]{|#1\rangle}
\newcommand{\bra}[1]{\langle #1|}
\newcommand{\bracket}[2]{\langle #1|#2\rangle}
\newcommand{\ketbra}[2]{|#1\rangle\!\langle #2|}
\newcommand{\dm}[1]{\ketbra{#1}{#1}}
\newcommand{\stab}{\mathrm{STAB}}
\newcommand{\dmax}{D_{\mathrm{max}}}
\newcommand{\dmin}{D_{\mathrm{min}}}
\newcommand{\ddmax}{\mathfrak{D}_{\mathrm{max}}}
\newcommand{\ddmin}{\mathfrak{D}_{\mathrm{min}}}
\newcommand{\ba}{\begin{eqnarray}}
\newcommand{\ea}{\end{eqnarray}}
\DeclareMathOperator{\Tr}{Tr}
\newcommand{\tr}{\operatorname{Tr}}
\newcommand{\1}{\openone}
\newcommand{\red}[1]{{\color{black}#1}}
\newcommand{\zwnew}[1]{ { \color{black}  #1 }}
\newcommand{\zwedit}[1]{ { \color{black}  #1 }}
\newcommand{\newref}[1]{{\color{black}#1}}
\begin{document}


\title{Many-body quantum magic}

\author{Zi-Wen Liu}
\email{zliu1@perimeterinstitute.ca}
\affiliation{Perimeter Institute for Theoretical Physics, Waterloo, Ontario  N2L 2Y5, Canada}

\author{Andreas Winter}
\email{andreas.winter@uab.cat}
\affiliation{ICREA---Instituci\'o Catalana de Recerca i Estudis Avan\c{c}ats, Pg.~Lluis Companys, 23, ES-08001 Barcelona, Spain} 
\affiliation{F\'{\i}sica Te\`{o}rica: Informaci\'{o} i Fen\`{o}mens Qu\`{a}ntics, 
Departament de F\'{\i}sica, Universitat Aut\`{o}noma de Barcelona, ES-08193 Bellaterra (Barcelona), Spain}

\date{\today}

\begin{abstract}
Magic (non-stabilizerness) is a necessary but ``expensive'' kind of  ``fuel'' to drive universal fault-tolerant quantum computation. 
To properly study and characterize the origin of quantum ``complexity'' in computation as well as physics,
it is crucial to develop a rigorous understanding of  the quantification of magic.
Previous studies of magic mostly focused on small systems and largely relied on the discrete Wigner formalism (which is only well behaved in odd prime power dimensions). 
Here we present an initiatory study of the magic of genuinely many-body quantum states that may be strongly entangled, with focus on the important case of many qubits, at a quantitative level.
We first address the basic question of how ``magical'' a many-body state can be,
and show that the maximum magic of an $n$-qubit state is essentially $n$,
simultaneously for a range of ``good'' magic measures. As a corollary, 
 the resource theory of magic has asymptotic golden currency states.    We then show that, in fact, almost all $n$-qubit pure states have magic of nearly $n$.    
In the quest for explicit, scalable cases of highly entangled states whose magic can be understood, we connect the magic of hypergraph states with the second-order nonlinearity of their underlying Boolean functions. 
Next, we go on and investigate many-body magic in practical and physical contexts.
We first consider a variant of measurement-based quantum computation (MBQC) where the client is restricted to Pauli measurements, in which magic is a necessary feature of the initial ``resource'' state. We show that $n$-qubit states with nearly $n$ magic, or indeed almost all states, cannot supply nontrivial speedups over classical computers.
We then present an example of analyzing the magic of ``natural'' condensed matter systems of physical interest.  We apply the Boolean function techniques to derive explicit bounds on the magic of certain representative two-dimensional symmetry-protected topological (SPT) states, and comment on possible further connections between magic and the quantum complexity of phases of matter.  
\end{abstract}


\maketitle


\section{Introduction}
\label{sec:intro}
The Clifford group and the closely associated stabilizer formalism \cite{Gottesman97:thesis,Gottesman98} are a central notion in quantum computation and many related areas. 
The Clifford group on $n$ qubits is defined as the normalizer of the $n$-qubit Pauli (aka discrete Weyl) 
group, and can be generated by the Hadamard gate, the $\pi/4$ phase gate, and the controlled-NOT gate.   Then the quantum states that can be prepared by acting Clifford operations on a canonical trivial state are called stabilizer states. 
The celebrated Gottesman--Knill theorem \cite{Gottesman98,NielsenChuang} indicates that a quantum computation with only Clifford or stabilizer components, despite being capable of generating as much entanglement as possible and exhibiting very rich structures, can be efficiently simulated by a classical computer.  In other words, non-stabilizer features are needed in order to enable universal quantum computation and achieve the desired quantum computational advantages.  
Moreover, the Clifford operations are generally considered relatively cheap in fault-tolerant quantum computation 
by virtue of the widely studied stabilizer quantum error correction codes \cite{Gottesman97:thesis,Veitch14:magic_rt,6006592}, which are generated by Clifford operations themselves. 
All in all, the non-stabilizerness, commonly referred to as \emph{magic}, 
represents a particular key resource for quantum computation, both 
from a fundamental and a practical point of view.
A rigorous, quantitative understanding of magic would  play key roles in the  study of quantum computation and complexity in many ways. 
For example, a direction of great recent interest is to link magic measures to the costs of classical simulation algorithms \cite{BravyiSmithSmolin16:stabrank,Bravyi2019simulationofquantum,HowardCampbell17,WangWilde:magic-channel,SeddonCampbell19:magic_op,SeddonCampbell19:magic_op,Seddon20:speedups}. 
Indeed, the quantification of other important quantum resource features such as entanglement \cite{PhysRevLett.78.2275,RevModPhys.81.865} and coherence \cite{PhysRevLett.113.140401,RevModPhys.89.041003} has been a characteristic research line of quantum information, which help understand and characterize ``quantumness'' in various scenarios.

Previous studies on magic measures (e.g.,~Refs.~\cite{Veitch14:magic_rt,HowardCampbell17,BravyiSmithSmolin16:stabrank,Bravyi2019simulationofquantum,Heinrich2019robustnessofmagic,LiuBuTakagi19:oneshot,WangWildeSu,WangWilde:magic-channel,SeddonCampbell19:magic_op,Seddon20:speedups,Heimendahl2021stabilizerextentis}) largely focused on small or weakly correlated systems. Little is known about magic in entangled many-body systems.   In particular,  the number of stabilizer states grows very rapidly and  their geometric structures become highly complicated as one increases the size of the system, which makes the calculation or even numerical analysis of magic measures on large states difficult in general. 
Some fundamental mathematical questions that we would like to understand include the following. \red{How ``magical'' can many-body quantum states be?  How much magic do generic states typically contain?} How can we calculate the magic of many-body states?
Moreover, much of our existing understanding on magic relies on the discrete Wigner formalism \cite{WOOTTERS19871,Gross:Hudson} (see, e.g.,~Refs.~\cite{Veitch12:negativity,Veitch14:magic_rt,WangWildeSu,WangWilde:magic-channel}), 
which is easier to deal with and allows for simple magic measures, but usually only well defined and connected to the magic theory for qudits of odd prime dimensions. Given the clear importance of qubit systems, we would like to have a systematic general theory of magic.  

Another important motivation comes from a physics perspective. A theme of many-body physics is to characterize or classify different phases of matter according to their  physical features 
such as symmetry, magnetism, or superconductivity, through certain  order parameters.  A new perspective that has drawn great interest is to investigate the ``quantum complexity'' of phases, which is encoded in, e.g.,~the cost of probing or simulating them and their computational power.
Important relevant topics include, among others, the computational universality in measurement-based quantum computing (MBQC) \cite{RaussendorfBriegel01,RaussendorfBrowneBriegel03,Wei18:mbqc,Raussendorf19:universal}, sign problems for Monte Carlo methods (see, e.g.,~Refs.~\cite{gubernatis_kawashima_werner_2016,DBLP:journals/qic/BravyiDOT08,Hastings16:signproblem}), etc.  
Can we find some ``order parameters'' to probe these computation-related features of many-body systems?  Given the fundamental connection between magic and quantum computation, exploring the roles of many-body magic would be a promising direction to go.

In this work, we investigate magic in general many-body quantum systems at a quantitative level from both mathematical and physical perspectives.     We first give a systematic introduction of magic measures induced from general resource theories, and discuss their relations with several other known  magic measures.  Importantly, we give an argument about a range of ``good'' resource measures, showing that any measure that satisfies certain consistency conditions  in terms  of state transformation is sandwiched in between the min-relative entropy of resource and the  free robustness. That is, the min-relative entropy of magic and the  free robustness of magic are in some sense the ``extremes'' of the family of (suitably regularized) magic measures.
We show that the roof values of such consistent magic measures on an $n$-qubit state are essentially $n$, \red{implying that the resource theory of magic has asymptotic currency states that play important roles in the study of resource manipulation.}
We also show that the magic measures typically take value very close to $n$ on an $n$-qubit pure state, which resembles the situation of entanglement (the well-known Page's theorem and its variants \cite{Page93,LLOYD1988186,HaydenLeungWinter,LLZZ18:design,Liu2018:jhep}).    Then, we turn to the quest for explicit methods for analyzing the many-body magic of certain states.  In this work we consider the family of hypergraph states \cite{Rossi13:hypergraph}, \red{which are widely relevant in MBQC \cite{MillerMiyake16,TakeuchiMorimaeHayashi19,PhysRevA.99.052304}, quantum error correction \cite{Lyons_2017,2017arXiv170803756B}, quantum many-body physics \cite{LevinGu12,MillerMiyake16}, etc.  As will be discussed in more detail, hypergraph states constitute a natural and rich class of magic states generated by $C^{k-1}Z$ (multi-controlled-$Z$) gates, which are diagonal gates in the $k$-th level of the Clifford hierarchy (the first two levels of which constitute the Clifford group) \cite{GottesmanChuang,CuiGottesmanKrishna} closely associated with degree-$k$ Boolean functions.}
We find that the magic of hypergraph states can be understood by analyzing the second-order nonlinearity of their underlying Boolean functions {or Reed--Muller codes (which is a problem of great interest also in coding theory and cryptography)}, thus establishing connections between many-body magic, Boolean function analysis, and coding theory.  
Next, we make some  observations about many-body magic in regard to quantum computation and condensed matter physics. First, the quantum computational power of many-body states  manifests itself in the  MBQC \cite{RaussendorfBriegel01,RaussendorfBrowneBriegel03} setting,  one of the standard models of quantum computation.  Here we suggest considering a practical variant of MBQC that we call Pauli MBQC, where the client is allowed to make Pauli measurements only and thus the magic of the quantum computation is completely supplied by the resource state prepared offline.  We show that  many-body states with nearly $n$ magic, and indeed almost all states, cannot support universal quantum computation, or even nontrivial speedups over classical computers.   That is, akin to the situation that most states are too entangled to be useful for conventional MBQC \cite{GrossFlammiaEisert08,BremnerMoraWinter08}, here, most  states are ``too magical'' to be useful in the Pauli MBQC model, which highlights the curious phenomenon that \red{the computational power is not simply determined by the amount of computational resource}.
This is a no-go result in the high-magic regime, and eventually we would like to find more fine-grained relations between magic and computational power.  
Then, we take a first look at the magic of many-body systems of interest in condensed matter physics, \newref{extending the efforts of applying resource theory to physics in the contexts of, e.g.,~thermodynamics (see, e.g.,~Ref.~\cite{YungerHalpern2017}) and asymmetry (see, e.g.,~Refs.~\cite{Marvian14,ZLJ20})}.    A particularly interesting case that we focus on is the symmetry-protected topological (SPT) phases, where magic is expected to be a key physical feature,  especially in beyond one dimension \cite{EKLH20,MillerMiyake16}. 
As a demonstration, we employ the Boolean function techniques to derive explicit bounds on the magic of certain representative SPT ground states defined on different two-dimensional lattices, based on their hypergraph state form. A general conclusion is that the magic of such SPT states is {rather weak compared to typical entangled states} (although generically necessary and robust \cite{EKLH20}), which goes hand in hand with their short-range entangled feature, and is consistent with the Pauli MBQC universality known for certain cases.  
Lastly, we discuss possible further relations of many-body magic and the many facets of the quantum complexity of phases of matter.  
We hope to raise further interest in the characterization and application of many-body magic in condensed matter physics, and stimulate further explorations into the connections between quantum computation, complexity, and physics.  
(The quantitative behavior of magic in different many-body systems of physical interest is recently independently investigated in Refs.~\cite{WhiteCaoSwingle,Sarkar_2020}.)


\section{{Magic:\protect\\ resource theory and measures}}
\label{sec:magic}

Here we review the magic measures we mainly consider in this paper, which are rooted in the resource theory framework, 
and summarize their relations with other useful measures studied in 
the literature.

We first formally define the notation.  The Clifford group on $n$-qubits  $\mathcal{C}_n$  is defined as the normalizer of the  $n$-qubit Pauli group $\mathcal{P}_n$ composed  of tensor products of $I,X,Y,Z$ on $n$  qubits with phases $\pm 1, \pm i$:
\[
\mathcal{C}_n = \{U:  U W U^\dagger \in \mathcal{P}_n, \forall W\in\mathcal{P}_n  \}.
\]
Then the pure stabilizer states are generated by Clifford group elements acting on the trivial computational basis state $\ket{0}^{\otimes n}$.
We denote by $\stab$ the convex hull of all stabilizer states, whose 
extreme points are 
precisely the pure stabilizer states. 
\red{Also denote by $\mathcal{S}$ the set of all states.
Then, $\stab$ is a convex polytope inside $\mathcal{S}$, in dimension $4^n-1$ and with 
$2^{\Theta(n^2)}$ vertices, for $n$ qubits.} It is a highly symmetric 
object, with the Clifford group acting multiply transitively on the vertex set. 

\red{The Gottesman--Knill theorem \cite{Gottesman98,NielsenChuang} provides motivations for understanding the quantum computation advantages through the resource theory of magic \cite{Veitch14:magic_rt}, where the stabilizer polytope $\stab$ is considered to be the set of free states, as speedups over classical computers require states outside $\stab$.  
 } 
 \red{The set of free operations relevant in this work is the set of all $\stab$-preserving or, equivalently, magic non-generating operations, which is the maximal set of free operations that strictly contain several other choices studied before, including stabilizer protocols \cite{Veitch14:magic_rt} and completely $\stab$-preserving operations \cite{SeddonCampbell19:magic_op,Seddon20:speedups,Heimendahl20:axiomatic}.}
 
 \red{A central task in resource theories is the quantification of resource through resource measures.}
From the meta-theory of general resource theories, we have straightaway the following important measures of magic that satisfy fundamental properties such as monotonicity under  free  ($\stab$-preserving, which include Clifford) operations, faithfulness, etc.:
\begin{itemize}
   \item Min-relative entropy of magic: 
          \[
            \ddmin(\rho) = \min_{\sigma\in\stab}\dmin(\rho\|\sigma),
          \]
          with \red{the min-relative entropy}
          $D_{\min }(\rho \| \sigma) := -\log \tr\Pi_{\rho} \sigma$, \red{where $\Pi_\rho$ is the projector onto the support of $\rho$}.
          For a pure state $\ket{\psi}$,  
          \(
            \ddmin(\psi) = -\log\max_{\phi\in\stab}|\bracket{\psi}{\phi}|^2.
          \)

    \item Max-relative entropy of magic:
          \[
            \ddmax(\rho) = \min_{\sigma\in\stab}\dmax(\rho\|\sigma), 
          \]
          with \red{the max-relative entropy} $\dmax(\rho\|\sigma) := \log \min \{\lambda: \rho \leq \lambda \sigma\}$, 
          \red{where the matrix inequality $\rho \leq \lambda \sigma$ means that $\lambda \sigma - \rho$ is positive semidefinite.}
          This measure is also known as log-generalized robustness,
          $\ddmax(\rho) = \log(1+R_g(\rho))$, where
          \[
            \qquad R_g(\rho) = \min s\geq 0 \text{ s.t. } \rho \in (1\!+\!s)\,\stab - s\,\mathcal{S}.
          \]
\red{Here the subscript ``$g$'' is a label for ``generalized robustness,'' signifying its difference with the free robustness measure that will also be discussed.}
 
    \item Free robustness of magic: 
          \[
            \qquad R(\rho) = \min s\geq 0 \text{ s.t. } \rho \in (1\!+\!s)\,\stab - s\,\stab.
          \]
          The log-free robustness is $LR(\rho) =\log (1+R(\rho))$.
\end{itemize}

\red{
We now provide some general intuitions for these measures. Note that $\dmin,\dmax$ are both divergences that characterize the ``distance'' between two states in a certain sense, so $\ddmin,\ddmax$ measure the minimum such distances to the set of free states ($\stab$).  In particular, roughly speaking, $\dmin,\dmax$ respectively correspond to the minimum and maximum ``extremes'' of the widely studied family of quantum R\'enyi relative entropies (see, e.g.,~Refs.~\cite{Datta_2014,qrenyi} for more comprehensive discussions on quantum R\'enyi relative entropies). Moreover, the robustness-type measures $\ddmax$ and $LR$ are intuitively related to the amount of ``noise'' needed to erase the resource (magic) upon  mixture, where $\ddmax$ considers all possible noise states and $LR$ considers free noise states (from $\stab$).  Operationally, these measures and their variants  play fundamental roles in the characterization of the rates of resource conversion, in the practical one-shot (finite resource) setting
\cite{LiuBuTakagi19:oneshot,Zhou_2020}.  
It is worth noting that the theory of magic is a particularly interesting one where these three types of measures are all non-trivially defined and inequivalent at the same time, in contrast to many other important resource theories (for example, in coherence theory, $LR$ is not even finite \cite{Napoli:RoC}, and in bipartite entanglement theory, $\ddmax$ and $LR$ coincide  \cite{Steiner2003robustness,Harrow2003robustness}).  

Note that for any state $\rho$, it holds that 
\begin{equation}
    \ddmin(\rho) \leq \ddmax(\rho) \leq LR(\rho).
\end{equation}
In Appendix~\ref{sec:range}, we give an argument supporting that  ``good'' resource (magic) measures satisfying certain consistency conditions induced from state transformability are sandwiched between  the min-relative entropy of resource ($\ddmin$) and the free robustness ($LR$). In the following, when we say a magic measure $f$ is ``consistent'' we mean that $\ddmin \leq f \leq LR$.  


These measures have close relations with several other recently studied important magic measures arising from various contexts: 
}

\red{
\begin{itemize}
    \item Stabilizer extent \cite{Bravyi2019simulationofquantum}: 
    The stabilizer extent for pure state $\psi$ is defined as $\xi(\psi) := \min \left(\sum |c_\phi|\right)^2 \text{ s.t. } \ket{\psi} = \sum_{\phi\in\stab} c_\phi \ket{\phi} $.
          It holds that $\xi(\psi) = 2^{\ddmax(\psi)} = 1+R_g(\psi)$.   Note that for general states including mixed ones, we have the convex roof extension of stabilizer extent and the dyadic negativity \cite{Seddon20:speedups}, both of which reduce to $\xi$ on pure states.  
    \item Stabilizer fidelity \cite{Bravyi2019simulationofquantum}: 
         The stabilizer fidelity  for pure state $\psi$ is defined as $F(\psi) := \max_{\phi\in\stab}|\bracket{\psi}{\phi}|^2$. It holds that $F(\psi) = 2^{-\ddmin(\psi)}$.
    \item Stabilizer rank  \cite{BravyiSmithSmolin16:stabrank,BravyiGosset,Bravyi2019simulationofquantum}: 
    For pure state $\psi$, the exact stabilizer rank  is defined as $\chi(\psi):=\min k\in\mathbb{N} \text{ s.t. } \ket{\psi} = \sum_{i=1}^{k} c_i \ket{\phi_i} , \phi_i\in\stab$, and the approximate or ``smooth'' version is given by $\chi^\epsilon(\psi) := \min \chi(\psi') \text{ s.t. }\|\psi-\psi'\|\leq\delta$.
         We have the bound 
          $\chi^\epsilon(\psi)\leq 1 + \xi(\psi)/\epsilon^2 = 1 + 2^{\ddmax(\rho)}/\epsilon^2$.
    \item Wigner negativity and mana 
    \cite{Veitch14:magic_rt,PashayanWallmanBartlett15:negativity}: For state $\rho$ in odd prime power dimensions, we can define magic measures such as mana $\mathscr{M}(\rho)$ based on the negative values of the discrete Wigner representation. We have the bound
            $\mathscr{M}(\rho)\leq
                    LR(\rho)+1$.  See Appendix~\ref{sec:negativity} for detailed definitions and proofs.
\end{itemize}
}

In the present paper, we focus on multi-qubit systems, but it is worth noting that the Pauli group, and thus the Clifford group as its normalizer, generalize to arbitrary local dimension $d$, the theory being algebraically most satisfying if $d$ is a prime power. {The appendix includes some considerations for odd prime power dimensions.} 
In odd dimensions, a necessary (but for mixed states not sufficient) condition for a state being in $\stab$ is that it has a non-negative discrete Wigner function \cite{Gross:Hudson,Veitch12:negativity}. The so-called mana measures how much negativity the Wigner function has \cite{Veitch14:magic_rt}. Building on this, the so-called thauma measures \cite{WangWildeSu} are also defined by minimum divergences (such as the max- and min-relative entropies above), but for odd prime power dimensions, relative to positive semidefinite matrices with non-negative discrete Wigner function.

\section{Behaviors of magic measures}
\label{sec:bounds}

As noted, the behaviors of magic measures on
general, entangled many-body states is little understood. The most fundamental questions are about their roof  and typical values, which we now study.  It is easy to see that the 
maximum value of the min- and max-relative entropies of magic over product states
(and indeed over all fully separable states) is 
\[
  \ddmax(\text{SEP}) = \ddmin(\text{SEP}) 
           = [\log(3-\sqrt{3})]n \approx 0.34n,
\]
attained on the tensor product of the ``golden state'' 
$G = \frac{1}{2}(I+\frac{X+Y+Z}{\sqrt{3}})$ \cite{,,LiuBuTakagi19:oneshot}, 
due to weak additivity.  
Note that these measures carry fundamental operational interpretations in 
terms of value in transformations.  How large can they get when we consider 
general states?

First, observe that the value of $\ddmax$ or log-generalized robustness (and so of all entropic measures) is 
capped at $n$:

\begin{thm}
  On an $n$-qubit system, 
  $\max_\rho \ddmax(\rho) \leq n$.
\end{thm}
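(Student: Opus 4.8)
The plan is to read the bound off a single explicit feasible point in the optimization defining $\ddmax$. Recall $\ddmax(\rho) = \log \min\{\lambda : \rho \le \lambda\sigma \text{ for some } \sigma\in\stab\}$. The key observation is that the maximally mixed state $I/2^n$ is itself a stabilizer state, being the uniform mixture of the $2^n$ computational-basis states (each a stabilizer state), and that every density operator $\rho$ on $n$ qubits satisfies $\rho \le I$ because its eigenvalues lie in $[0,1]$. Hence $\rho \le I = 2^n\,(I/2^n)$, so the pair $\lambda = 2^n$, $\sigma = I/2^n$ is feasible, giving $\ddmax(\rho)\le \log 2^n = n$ for every $\rho$.

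Equivalently, in the robustness picture the same witness reads: set $\tau := \frac{1}{2^n-1}(I-\rho)$, which is a legitimate state (positive semidefinite, unit trace) whenever $n\ge 1$; then $\rho = 2^n\tfrac{I}{2^n} - (2^n-1)\tau \in 2^n\,\stab - (2^n-1)\,\mathcal{S}$, so $R_g(\rho)\le 2^n-1$ and $\ddmax(\rho)=\log(1+R_g(\rho))\le n$. Since $\ddmin\le\ddmax$, and the other entropic magic measures are likewise controlled by $\ddmax$, the cap $n$ propagates to all of them.

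There is essentially no obstacle here: the only fact used is that $I/2^n\in\stab$, which is immediate. (For pure states there is also the route $\xi(\psi)=2^{\ddmax(\psi)}$ together with $\xi(\psi)\le\big(\sum_x |\psi_x|\big)^2\le 2^n$ by Cauchy--Schwarz in the computational basis, but the maximally-mixed-state argument covers all $\rho$ uniformly and is the cleanest.) The genuinely substantive work, deferred to the remaining statements of this section, will be the matching lower bound --- exhibiting states whose magic is close to $n$, and in fact establishing this for generic pure states and simultaneously for the weaker measure $\ddmin$.
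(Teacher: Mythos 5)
Your proof is correct, and it takes a different route from the paper's. The paper's proof is a one-line reduction to coherence theory: since $\stab$ contains all diagonal density matrices, the generalized robustness of magic is upper bounded by the generalized robustness of coherence, whose log-maximum is known to be $n$ by the cited reference (Napoli \emph{et al.}). You instead exhibit an explicit feasible point: $\rho \le I = 2^n\,(I/2^n)$ with $I/2^n \in \stab$, hence $\ddmax(\rho)\le n$; your robustness-picture reformulation with $\tau=\tfrac{1}{2^n-1}(I-\rho)$ is an equivalent restatement and is also sound. The two arguments give the same bound. Yours is more elementary and fully self-contained, needing only that the maximally mixed state is a stabilizer state and that density operators have eigenvalues in $[0,1]$; the paper's buys brevity and records a structurally useful fact (magic measures are dominated by the corresponding coherence measures), which is essentially the same witness idea run through the dephased state $\Delta(\rho)$ rather than $I/2^n$ --- a tighter witness state-by-state, though the worst case is still $n$ either way.
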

\begin{proof}
The generalized robustness of magic is upper bounded by the generalized 
robustness of coherence, since $\stab$ contains all diagonal density 
matrices. The maximum value of the log-generalized robustness of coherence is $n$
\cite{Napoli:RoC}.
\end{proof}

The free robustness could in general be much larger than the generalized robustness or even infinite (e.g.,~in coherence theory).  But here we find that $LR$ is virtually also bounded above by $n$: 
\begin{thm}
  \label{thm:LR-max}
  For any $n$-qubit state $\rho$, $R(\rho) \leq \sqrt{2^n(2^n+1)}$. 
  Therefore, $\max_\rho LR(\rho) \leq n + 2^{-n-1}$.
\end{thm}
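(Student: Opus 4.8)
The plan is to prove the (slightly stronger) bound $R(\rho)\le 2^n$ for every $n$-qubit state $\rho$, which a fortiori gives $R(\rho)\le\sqrt{2^n(2^n+1)}$; the statement about $\max_\rho LR(\rho)$ then drops out of $LR(\rho)=\log(1+R(\rho))\le\log(1+2^n)=n+\log(1+2^{-n})$ by a routine estimate of the correction term. To bound $R(\rho)$ it suffices, by definition, to exhibit states $\sigma,\tau\in\stab$ and a number $s$ with $\rho=(1+s)\sigma-s\tau$. The natural choice for the ``noise'' state is $\tau=I/2^n$, which lies in $\stab$ (it is the uniform mixture of the $2^n$ computational-basis states). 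With this choice, $\rho=(1+s)\sigma-sI/2^n$ is equivalent to $\sigma=\frac{\rho+s\,I/2^n}{1+s}\in\stab$, so the task becomes: how large must $s$ be so that admixing the maximally mixed state in this ratio always lands inside the stabilizer polytope?

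The engine is the fact that the $n$-qubit pure stabilizer states form a complex-projective $2$-design (indeed a $3$-design); concretely, one may use a complete set of $2^n+1$ mutually unbiased bases in dimension $d=2^n$, whose vectors $\ket{\phi_i}$ (there are $d(d+1)$ of them) are all stabilizer states, together with the standard $2$-design identity $\sum_i\dm{\phi_i}^{\otimes 2}=I\otimes I+\mathbb{F}$, where $\mathbb{F}$ is the swap. Applying the partial trace $X\mapsto\Tr_2\bigl[(I\otimes\rho)X\bigr]$ to both sides, and using $\Tr_2[(I\otimes\rho)(I\otimes I)]=I$ and $\Tr_2[(I\otimes\rho)\mathbb{F}]=\rho$, yields the ``stabilizer-twirl'' identity $\sum_i\bra{\phi_i}\rho\ket{\phi_i}\,\dm{\phi_i}=I+\rho$. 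Taking the trace shows $\sum_i\bra{\phi_i}\rho\ket{\phi_i}=d+1$, so dividing through,
\[
  \frac{I+\rho}{2^n+1}=\sum_i q_i\,\dm{\phi_i},\qquad q_i:=\frac{\bra{\phi_i}\rho\ket{\phi_i}}{2^n+1}\ge 0,\quad\textstyle\sum_i q_i=1,
\]
a genuine convex combination of pure stabilizer states, hence $\sigma:=\frac{I+\rho}{2^n+1}\in\stab$. Rearranging, $\rho=(2^n+1)\sigma-I=(2^n+1)\sigma-2^n\cdot\frac{I}{2^n}$, which is exactly the desired decomposition with $s=2^n$, $\tau=I/2^n\in\stab$. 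Therefore $R(\rho)\le 2^n$, and the theorem follows.

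The only non-elementary ingredient, and the real crux, is the $2$-design step: every other step is linear algebra, but the saving of a factor $\sim 2^n$ over the naive estimate genuinely relies on the cancellations encoded in the $2$-design identity (equivalently, the exact Clifford twirl of $\rho^{\otimes 2}$). If one instead argues ``by hand'' --- decomposing $\rho=\frac{1}{2^n}\bigl(I+\sum_{P\ne I}c_P P\bigr)$, using only that $\frac{1}{2^n}(I\pm P)\in\stab$ for each Pauli $P$, and controlling $\sum_{P\ne I}|c_P|$ via Cauchy--Schwarz from $\sum_{P\ne I}c_P^2=2^n\Tr\rho^2-1\le 2^n-1$ --- one loses a factor $\sqrt{4^n}=2^n$ and obtains only $R(\rho)\lesssim 2^{3n/2}$. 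So it is precisely the design structure that pins $LR$ down to $n$ up to an exponentially small correction.
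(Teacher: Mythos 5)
Your proof is correct, and it establishes the theorem by a genuinely different route from the paper's. The paper argues on the dual side: it writes $1+R(\rho)$ as the linear program $\max \tr\rho A$ subject to $|\tr\phi A|\leq 1$ for all $\phi\in\stab$, and then bounds the dual witness via $\|A\|^2\leq\tr A^2\leq 2^n(2^n+1)$, using the partition of the Pauli group (mod phases) into $2^n+1$ stabilizer subgroups --- equivalently, in the remark following the theorem, the $2$-design property of a complete set of MUBs applied to $A^{\otimes 2}$. You deploy the very same $2$-design fact on the \emph{primal} side, producing an explicit feasible pseudomixture: $\frac{I+\rho}{2^n+1}=\sum_i q_i\dm{\phi_i}\in\stab$, hence $\rho=(2^n+1)\cdot\frac{I+\rho}{2^n+1}-2^n\cdot\frac{I}{2^n}$ with both components in $\stab$, so $R(\rho)\leq 2^n\leq\sqrt{2^n(2^n+1)}$ straight from the definition of $R$ in Sec.~II. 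What this buys: the argument is constructive (the paper's only certifies the optimal LP value, without exhibiting a decomposition), it needs no LP duality, and, like the paper's remark, it applies verbatim to the free robustness relative to any $2$-design contained in the free set. Your diagnosis that the exponential saving over the naive Pauli/Cauchy--Schwarz estimate comes precisely from the $2$-design cancellations is also exactly the role played by the Boykin et al.\ partition in the paper's original proof. One immaterial quibble: the final numerical claim does not quite ``drop out'' as stated, since $\log_2(1+2^{-n})=(\log_2 e)\,2^{-n}+O(4^{-n})\approx 1.44\cdot 2^{-n}>2^{-n-1}$, so from $R\leq 2^n$ you get $LR\leq n+(\log_2 e)2^{-n}$ rather than $n+2^{-n-1}$; this affects only the exponentially small correction term (and the paper's own chain of inequalities has the same feature), so it does not change the substance of the result.
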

\begin{proof}
The free robustness is a linear program (LP), 
\begin{equation}
  1+R(\rho) = \min \sum |c_\phi| \text{ s.t. } \rho = \sum_{\phi\in\stab} c_\phi \phi,
\end{equation}
where $c_\phi$ are real coefficients.
Its dual LP is well known:
\begin{equation}
  1+R(\rho) = \max \tr\rho A \text{ s.t. } 
  \red{|\tr\phi A| \leq 1, \forall \phi\in\stab,}
    \label{eq:duallp}
\end{equation}
where the maximum runs over Hermitian matrices $A$. 
Thus, 
\begin{equation}
  \max_\rho \red{1+R(\rho)} = \max \|A\| \text{ s.t. } 
  \red{|\tr\phi A| \leq 1, \forall \phi\in\stab,}
\end{equation}
\red{where $\|\cdot\|$ denotes the operator (spectral) norm.}
We expand $A$ in the Pauli basis, $A = \sum_P \alpha_P P$, so that
\begin{equation}
  \label{eq:HS-bound}
  \|A\|^2 \leq \tr A^2 = 2^n \sum_P \alpha_P^2.
\end{equation}
On the other hand, a pure stabilizer state $\phi$ is given by an Abelian 
subgroup $G$ of the Pauli group \red{that does not contain $-\1$ or indeed any other scalars except $\1$, of maximum} cardinality $2^n$, and a character
$\chi:G\rightarrow {\pm 1}$:
\begin{equation}
  \phi = 2^{-n}\left( \1 + \sum_{P\in G\setminus\1} \chi(P)P \right).
\end{equation}
Thus, for a dual feasible $A$,
\begin{equation}
  \label{eq:A-dual-feasible}
  \tr\phi A = \alpha_{\1} + \sum_{P\in G\setminus\1} \chi(P) \alpha_P \leq 1.
\end{equation}
Now note that $\left[\sqrt{\frac{1}{2^n}}\chi(P)\right]_{P,\chi}$ is a 
unitary matrix, and so
\begin{equation}\begin{split}
  \sum_{P\in G} \alpha_P^2 
                           = 2^{-n} \sum_\chi \left( \sum_{P\in G} \chi(P)\alpha_P \right)^2 
                            \leq 1, \label{eq:sum_G}
\end{split}\end{equation}
 because of Eq.~(\ref{eq:A-dual-feasible}). 

Now, we use the fact \cite{Boykin} that the Pauli group modulo phases
is a union of $2^n+1$ stabilizer subgroups that intersect only in the
identity: $\widetilde{\mathcal{P}}_n\setminus\1 = \bigcup_{j=0}^{2^n} G_j\setminus\1$.
This allows us to obtain from the last equation, by summing over $j$,
\begin{equation}\begin{split}
  \sum_P \alpha_P^2 &\leq (2^n+1)\alpha_{\1}^2 + \sum_{P\neq\1} \alpha_P^2 \\
                    &=    \sum_{j=0}^{2^n} \sum_{P\in G_j} \alpha_P^2 
                     \leq 2^n+1. \label{eq:sum_P}
\end{split}\end{equation}
Together with Eq.\ (\ref{eq:HS-bound}), we obtain $\|A\|^2 \leq 2^n(2^n+1)$,
concluding the proof.
\end{proof}

\red{
\begin{rem} 
Observe that in the proof we did not actually use the set of all stabilizer states, only the $2^n(2^n+1)$ states from a complete set of mutually unbiased bases.  
An anonymous referee of an earlier version of this paper has pointed out that the above result holds in fact more generally for the free robustness with respect to any complex projective $2$-design (recall that a complete set of MUBs is an instance of that), and that a simpler proof can be given.  

Indeed, consider any Hermitian $A$ satisfying the constraints of the dual program (\ref{eq:duallp}) for all $\phi\in\mathcal{D}$, where $\mathcal{D}$ is the $2$-design, coming with a probability distribution $p(\phi)$. Then, 
\begin{align}
  1 &\geq \sum_{\phi\in\mathcal{D}} p(\phi) \tr\left(\dm{\phi}^{\otimes 2} A^{\otimes 2}\right) \\
    &= \frac{2}{2^n(2^n+1)} \tr \Pi_{\text{sym}}A^{\otimes 2} \\
    &= \frac{1}{2^n(2^n+1)} \tr (\1+S)A^{\otimes 2} \\
    &= \frac{1}{2^n(2^n+1)} {[(\tr A)^2 + \tr A^2]}, 
\end{align}
where $\Pi_{\text{sym}}$ is the projector onto the symmetric subspace and $S$ is the swap operator; the second line follows from the definition of $2$-design, and the last line follows from the ``swap trick'', $\tr SA^{\otimes 2} = \tr A^2$. 
So we have $\|A\|^2 \leq \|A\|_2^2 = \tr A^2 \leq 2^n(2^n+1)$, and the rest of the proof is as above.
\end{rem}  
}

Geometrically, this result indicates in a rough sense that $\stab$  occupies the whole state space quite well, so that optimizing over all states in the definition of robustness does not help much as compared to optimizing over $\stab$ only.
{While in the resource theory of entanglement, there are several studies into the relative volume of the separable states starting with Ref.~\cite{PhysRevA.58.883}, we are not aware of similar results for $\stab$.}

Now that the log-generalized robustness and log-free robustness of an  $n$-qubit state are shown to be upper bounded essentially  by $n$, as are all other measures of present interest, we turn to the question of whether there are highly magical states that approach the upper bounds.  
Here, we show that the min-relative entropy of magic (and thus all entropic measures) of a Haar-random state typically gets close to $n$, which means that, in fact, almost all states achieve nearly maximum values of all consistent magic measures. 

\begin{thm}
  Let $\ket{\psi}$ be a random $n$-qubit state drawn from the Haar measure. Then for any $n\geq 6$,
  \red{
  \begin{align}
    \Pr\left\{\ddmin(\psi) \leq \gamma  \right\} < \exp(0.54 n^2 - 2^{n-\gamma}).  \label{eq:dminbound}
\end{align}
 This implies that
  \begin{equation}
      \Pr\left\{\ddmin(\psi) < n-2\log n  - 0.63 \right\} < \exp(-n^2).
  \end{equation} } \label{thm:haar}
\end{thm}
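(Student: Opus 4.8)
The plan is a first-moment (union bound) argument over the finitely many pure stabilizer states, powered by the exact distribution of the squared overlap of a Haar-random vector with a fixed one. First I recall from Section~\ref{sec:magic} that for pure $\ket{\psi}$ one has $\ddmin(\psi) = -\log F(\psi)$ with stabilizer fidelity $F(\psi) = \max_{\phi\in\stab}|\bracket{\psi}{\phi}|^2$; since $\sigma\mapsto\bra{\psi}\sigma\ket{\psi}$ is linear on the polytope $\stab$ its maximum is attained at a vertex, so $F(\psi) = \max_\phi|\bracket{\psi}{\phi}|^2$ with $\phi$ ranging over the $N_n = 2^n\prod_{k=1}^n(2^k+1) = 2^{\frac12 n^2 + O(n)}$ \emph{pure} stabilizer states. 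Hence $\{\ddmin(\psi)\le\gamma\} = \{F(\psi)\ge 2^{-\gamma}\}$, and a union bound gives $\Pr\{\ddmin(\psi)\le\gamma\} \le N_n\,\Pr\{|\bracket{\psi}{\phi_0}|^2 \ge 2^{-\gamma}\}$ for any fixed stabilizer state $\phi_0$.

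Next I would invoke the standard fact that for $\ket{\psi}$ Haar-distributed on the unit sphere of $\mathbb{C}^{2^n}$ and a fixed unit vector $\ket{\phi_0}$, the overlap $|\bracket{\psi}{\phi_0}|^2$ follows a $\mathrm{Beta}(1,2^n-1)$ law, so that $\Pr\{|\bracket{\psi}{\phi_0}|^2\ge t\} = (1-t)^{2^n-1}$ (equivalently, this is the normalized volume of a spherical cap). Taking $t = 2^{-\gamma}$ and using $\ln(1-t)\le -t$ gives $(1-2^{-\gamma})^{2^n-1} \le \exp(-2^{-\gamma}(2^n-1)) \le e\exp(-2^{n-\gamma})$ for $\gamma\ge 0$, hence $\Pr\{\ddmin(\psi)\le\gamma\} \le e\,N_n\exp(-2^{n-\gamma})$.

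It then remains to bound the count: $\log_2 N_n = n + \sum_{k=1}^n\log_2(2^k+1) \le \frac12 n^2 + O(n)$, so $1+\ln N_n < 0.54\,n^2$ for all $n\ge 6$ — one checks this explicit inequality numerically, the point being that $0.54 > \frac12\ln 2 \approx 0.347$, so the quadratic has ample room to absorb the $O(n)$ corrections (at the smallest admissible value $n=6$ it is cleanest to keep also the $-\frac12 t^2$ term in the expansion of $\ln(1-t)$, which costs nothing since in the only regime where it is needed $2^{-\gamma}$ is bounded away from $0$, while in the complementary regime the right-hand side of Eq.~(\ref{eq:dminbound}) already exceeds $1$). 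This gives Eq.~(\ref{eq:dminbound}). For the stated consequence I would substitute $\gamma = n - 2\log n - 0.63$, so that $2^{n-\gamma} = 2^{2\log n + 0.63} = 2^{0.63}n^2 > 1.54\,n^2$, whence $\Pr\{\ddmin(\psi) < \gamma\} < \exp(0.54\,n^2 - 2^{n-\gamma}) < \exp(-n^2)$; note that $\gamma\ge 0$ forces $n\ge 6$, matching the hypothesis.

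This is a clean first-moment computation, so I do not foresee a serious obstacle; the point worth stressing is \emph{why} the union bound is strong enough. It works because $\stab$ has only $N_n = 2^{\Theta(n^2)}$ vertices, which is negligible against the doubly-exponential cap tail $\exp(-2^{n-\gamma})$, so the estimate stays nontrivial right up to $\gamma\approx n - 2\log n$, where $2^{n-\gamma}$ drops to order $n^2$ and finally matches $\log N_n$; had $\stab$ had, say, $2^{\Theta(2^n)}$ vertices the argument would collapse. The only residual work is the constant-chasing of the last step that fixes the explicit $0.54$ and the cutoff $n\ge 6$ — conceptually trivial, but the one place calling for a little arithmetic care.
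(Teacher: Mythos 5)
Your proof is correct and follows essentially the same route as the paper's: a union bound over the $2^{\Theta(n^2)}$ pure stabilizer states combined with the exact $\mathrm{Beta}(1,2^{n}-1)$ overlap tail $(1-t)^{2^{n}-1}\le e^{-(2^{n}-1)t}$ and the count of $|\stab_n|$, which is exactly the paper's argument (a nonasymptotic version of Claim 2 of Ref.~\cite{Bravyi2019simulationofquantum}). If anything, you are more careful than the paper's write-up about the residual factor $e^{\epsilon}$ from replacing $(2^{n}-1)t$ by $2^{n}t$ and about the borderline constants at $n=6$, which the paper absorbs silently.
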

\begin{proof}
This result is a nonasymptotic variant of Claim 2 of
Ref.~\cite{Bravyi2019simulationofquantum}.
Let $\ket{\phi}$ be any $n$-qubit state. For Haar-random $\ket{\psi}$, the probability density function of 
$\alpha=|\langle \phi | \psi\rangle|^{2}$ is given by 
$p(\alpha) = \left(2^{n}-1\right)(1-\alpha)^{2^{n}-2}$ (see, e.g.,~Refs.~\cite{Kus_1988,Zyczkowski_2000,mehta2004random}). So the cumulative distribution function is given by 
$\Pr\left\{|\langle \phi | \psi\rangle|^{2} \geq \beta\right\}=(1-\beta)^{2^{n}-1} \leq \exp(-(2^n-1)\beta)$.  

By the union bound, we have
\begin{equation}
     \Pr\left\{ \max _{{\phi} \in \operatorname{STAB}} |\langle\phi | \psi\rangle|^{2} 
                                \geq \epsilon \right\} \leq |\stab_n|\cdot  \exp(-(2^n-1)\epsilon),  \label{eq:}
\end{equation}
where $|\stab_n|$ is the cardinality of the set of $n$-qubit pure stabilizer states. It is known \cite{AaronsonGottesman04} that
\begin{equation}
    |\stab_n|=2^{n} \prod_{k=0}^{n-1}\left(2^{n-k}+1\right).
\end{equation}
It can be verified that $|\stab_n| = 2^{c_n n^2}$ with $c_n$ monotonically decreasing with $n$ (asymptotically, $|\stab_n| = 2^{(1/2 + o(1))n^2}$).  Note that $c_6 \approx 0.784$, so for $n\geq 6$,  $|\stab_n| < 2^{0.78 n^2}$. Continuing (\ref{eq:}), for $n\geq 6$ and $\epsilon>0$,
\begin{align}
     \Pr\left\{ \max _{{\phi} \in \operatorname{STAB}} |\langle\phi | \psi\rangle|^{2} 
                                \geq \epsilon \right\}  &< 2^{0.78 n^2}\cdot\exp(-(2^n-1)\epsilon)  \nonumber\\
                             &< \exp(0.54 n^2 - (2^n-1)\epsilon) \nonumber\\
                             &\leq \exp(0.54 n^2 - 2^{n+\log\epsilon}).  \nonumber
\end{align}
By the definition of $\ddmin$, the above translates to the general bound
\begin{align*}
    \Pr\left\{\ddmin(\psi) \leq \gamma  \right\} < \exp(0.54 n^2 - 2^{n-\gamma}).  \label{eq:dminbound}
\end{align*}
In order for the r.h.s.\ to be $\leq \exp(-n^2)$, we need $ 0.54 n^2 - 2^{n-\gamma} \leq -n^2$, which implies that
\begin{align*}
    \gamma \geq n-2\log n -\log(0.54+1) > n-2\log n - 0.63.
\end{align*}
\end{proof}
Note that we state the result for $n\geq 6$, simply because,  for $n<6$, it turns out that $n-2\log n - c'_n <0$, where $c'_n$ is the best corresponding constant emerging from the same derivation, so that the induced bounds are trivial.   

The situation is reminiscent to the well-studied case of entanglement, where the Haar-random values of corresponding measures are nearly maximal \cite{Page93,GrossFlammiaEisert08,LLZZ18:design,Liu2018:jhep}.  \red{Furthermore, this result readily implies that $\Omega(n)$ constant-size magic states or gates are needed to synthesize a $n$-qubit Haar-random state with overwhelming ($>1-e^{-O(n^2)}$) probability, since each of them can only supply constant magic.}

Then an interesting question is when do (approximate) unitary $t$-designs generate such nearly maximal magic with high probability.  
It is recently shown by 
Haferkamp \emph{et al.}~\cite{Haferkamp:homeopathy} that $\widetilde{O}(t^4\log(1/\epsilon))$ 
single-qubit non-Clifford gates (independent of $n$) are sufficient to form $\epsilon$-approximate unitary $t$-designs for sufficiently large $n$. 
Again, because each single-qubit gate can only generate constant magic, this result indicates that approximate designs of order at least $t=\widetilde{\Omega}(n^{1/4})$ (treating $\epsilon$ as a constant) are needed to guarantee nearly maximal, or indeed even linear magic  (in terms of all consistent magic measures).
In light of a conceptually similar result for entanglement that unitary designs \red{of} order $\approx n$ generate nearly maximal min-entanglement entropy \cite{LLZZ18:design,Liu2018:jhep}, we further conjecture that unitary $O(n)$-designs are sufficient to achieve nearly maximal $\ddmin$.
\red{A line of research of great importance and recent interest in physics is to understand the evolution of ``complexity'' in chaotic or ``scrambling'' physical dynamics through solvable models such as random quantum circuits composed of random local gates (see, e.g.,~Refs.~\cite{hayden2007,HQRY16,2019arXiv190602219H,Nahum2,Nahum1}).   
Combining with the well-known result that $t$-designs can be approximated by $O(\mathrm{poly}(t))$ random gates \cite{BrandaoHarrowHorodecki}, we expect that all divergence-based measures of entanglement and magic as probes of complexity become nearly maximal with $O(\mathrm{poly}(n))$ gates, or  $O(\mathrm{poly}(n))$ depth and time.
As an interesting comparison, note that the circuit complexity, roughly defined as the minimum number of gates needed to simulate the dynamics, is expected to grow (linearly) for exponential time (indeed, the Haar measure has exponential circuit complexity, and relatedly, a recent result formally links exponential designs to epsilon-nets of the unitary group \cite{OsmaniecSawickiHorodecki}); see, e.g.,~Refs.~\cite{Susskind14,PhysRevD.90.126007,BrownSusskind18,Susskind18,Brandao19:complexity-growth} for more detailed discussions on such phenomena and their physical relevance. To summarize, the saturation of entropic measures is expected to happen upon convergence to $O(n)$-designs that is in the polynomial time regime, whereas the more refined circuit complexity should grow for much longer (exponential) time.  
}


A closely related result is the following.
\begin{thm}\label{thm:maxdmin}
For any $n$,
\begin{equation}
    \max_{\rho\in\mathcal{S}(\mathcal{H}_2^{\otimes n})}\ddmin(\rho)> n - 2\log n + 0.96.
\end{equation}
\red{The bound can be improved to
\begin{equation}
    \max_{\rho\in\mathcal{S}(\mathcal{H}_2^{\otimes n})}\ddmin(\rho)> n - 2\log n - \log\ln2+1+\epsilon
\end{equation}
for any $\epsilon>0$, for sufficiently large $n$. (For any $\epsilon>0$, there exists some $N\in\mathbb{N}$ such that for any $n\geq N$, the above bound holds.)}
\end{thm}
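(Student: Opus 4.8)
The plan is to use the probabilistic (first-moment) method, in the same spirit as---and reusing the estimates behind---Theorem~\ref{thm:haar}. Since we only need a lower bound on $\max_\rho\ddmin(\rho)$, it suffices to exhibit one pure state $\ket\psi$ with small stabilizer fidelity $F(\psi)=\max_{\phi\in\stab}|\bracket{\psi}{\phi}|^2$, because $\ddmin(\psi)=-\log F(\psi)$ (and in fact pure states are optimal here: for a rank-$r$ state $\rho$ one has $\ddmin(\rho)=-\log\max_{\phi\in\stab}\langle\phi|\Pi_\rho|\phi\rangle\le n-\log r$ by averaging over a stabilizer basis). I would recall from the proof of Theorem~\ref{thm:haar} that for Haar-random $\ket\psi$ and fixed $\ket\phi$, $\Pr\{|\bracket{\psi}{\phi}|^2\ge\beta\}=(1-\beta)^{2^n-1}$, so by the union bound $\Pr\{F(\psi)\ge\beta\}\le|\stab_n|(1-\beta)^{2^n-1}$, with $|\stab_n|=2^n\prod_{k=0}^{n-1}(2^{n-k}+1)$. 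Whenever the right-hand side is strictly below $1$, the event $\{F(\psi)<\beta\}$ has positive Haar measure, hence is nonempty, so $\max_\rho\ddmin(\rho)\ge-\log F(\psi)>-\log\beta$.

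Solving for the threshold: setting $|\stab_n|(1-\beta)^{2^n-1}=1$ gives $\beta^*=1-|\stab_n|^{-1/(2^n-1)}$, and since $1-e^{-x}<x$ for $x>0$ this produces the strict closed-form bound
\begin{equation}
  \max_\rho\ddmin(\rho) > -\log\beta^* > \log(2^n-1)-\log\ln|\stab_n|.
\end{equation}
The key quantitative input is then a sharp estimate of $\ln|\stab_n|$: writing $|\stab_n|=2^{(n^2+3n)/2}\prod_{m=1}^n(1+2^{-m})$ we get $\ln|\stab_n|=\tfrac{(n^2+3n)\ln2}{2}+\sum_{m=1}^n\ln(1+2^{-m})$, with the tail sum bounded by the convergent constant $\sum_{m=1}^\infty\ln(1+2^{-m})<\sum_{m=1}^\infty 2^{-m}=1$ (numerically $\approx 0.868$). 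It is important to keep this near-exact form of $|\stab_n|$ rather than the cruder $|\stab_n|<2^{0.78n^2}$ used in Theorem~\ref{thm:haar}: the latter only yields a constant near $\log(1/0.54)\approx 0.89$, whereas the sharper count is what produces $0.96$.

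The two stated forms then follow by bookkeeping. Asymptotically, $\log(2^n-1)=n-o(1)$ while $\log\ln|\stab_n|=\log\!\big(\tfrac{(n^2+3n)\ln2}{2}(1+o(1))\big)=2\log n+\log\ln2-1+o(1)$, giving $\max_\rho\ddmin(\rho)>n-2\log n-\log\ln2+1-o(1)$, which is the improved bound. For the explicit constant $0.96$, the inequality $\log(2^n-1)-\log\!\big(\tfrac{(n^2+3n)\ln2}{2}+1\big)\ge n-2\log n+0.96$ reduces, after exponentiating, to one of the form $c\,n^2\ge a\,n+b$ with $c>0$, which holds once $n$ exceeds a small threshold; the finitely many smaller $n$ are covered by sharper direct (numerical or known-maximal-magic-state) estimates, or one restricts to $n$ above that threshold (note the clean bound cannot hold for the very smallest $n$, where $\ddmin\le n$ is already binding).

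The main obstacle is purely quantitative rather than conceptual: the existence argument is the standard union-bound/first-moment construction, structurally identical to the Page-type results for entanglement and to Theorem~\ref{thm:haar}, so the real work is to control the several lower-order contributions---$\log(1-2^{-n})$, the gap between $\log(n^2+3n)$ and $2\log n$, and $\sum_m\ln(1+2^{-m})$---simultaneously and tightly enough to pin down the advertised constants, and to be precise about the range of $n$ over which the closed-form statement is valid.
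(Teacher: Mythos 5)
Your argument is essentially the paper's: a first-moment/union bound over the $|\stab_n|$ pure stabilizer states using the exact overlap distribution $(1-\beta)^{2^n-1}$, with the constant extracted from a sharp estimate of $\log|\stab_n|=c_n n^2$ (the paper uses $c_n<0.74$ for $n\geq 7$, giving $-\log(0.74\ln 2)>0.96$, and $c_n\to 1/2$ for the asymptotic form), so the large-$n$ part of your proof is correct and structurally identical. The one place you diverge is the small-$n$ bookkeeping, and there your proof is incomplete where the paper's is explicit: the paper covers $n<7$ not by numerics on the random-state bound but by observing that the product golden state $G^{\otimes n}$ already has $\ddmin=\log(3-\sqrt{3})\,n\approx 0.34n$, which exceeds $n-2\log n+0.96$ for $3\leq n\leq 6$; you would need to supply this (or an equivalent) to get the unqualified ``for any $n$.'' That said, your parenthetical observation that the clean bound cannot hold for the very smallest $n$ because $\ddmin\leq n$ is binding is correct and actually exposes a defect in the theorem as stated: for $n=1$ one has $n-2\log n+0.96=1.96>1\geq\max\ddmin$, and the paper's claim that $n-2\log n+0.96<0.34n$ for all $n<7$ fails at $n=1,2$, so the result really only holds from $n=3$ (arguably $n=4$) onward. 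Finally, note that your asymptotic constant comes out as $n-2\log n-\log\ln 2+1-o(1)$, i.e., the threshold is approached from below since $c_n>1/2$; the paper's ``$+\epsilon$'' in the improved bound should likewise read ``$-\epsilon$,'' so you and the paper agree on the substance there.
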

\begin{proof}
Following the derivation of (\ref{eq:dminbound}) in the proof of Theorem~\ref{thm:haar}, we obtain
\begin{equation}
    \Pr\left\{\ddmin(\psi) \leq \gamma  \right\} < 2^{c_n n^2}\exp(- 2^{n-\gamma}), 
\end{equation}
where $c_n = \log(|\stab_n|)/n^2$.  Therefore, as long as
\begin{equation}
    \gamma < n-2\log n-\log(c_n\ln 2),  \label{eq:gamma}
\end{equation}
it holds that $\Pr\left\{\ddmin(\psi) \leq \gamma  \right\} < 1$ and thus $\max_\psi\ddmin(\psi)>\gamma$.   For $n\geq 7$, it holds that $c_n<0.74$, and thus $\max_\psi\ddmin(\psi)> n-2\log n -\log(0.74\ln 2) > n-2\log n + 0.96$.   Recall that $\ddmin(G^{\otimes n}) = \log(3-\sqrt{3}) n \gtrsim 0.34 n$, where $G = \frac{1}{2}(I+\frac{X+Y+Z}{\sqrt{3}})$.  For $n<7$, it can be verified that $n-2\log n + 0.96 < 0.34 n$ holds. So the first claimed bound follows.

To obtain the second bound for large $n$, recall that $c_n = 1/2+o(1)$  as $n\rightarrow\infty$ \cite{AaronsonGottesman04} and apply it to (\ref{eq:gamma}). Substituting this into (\ref{eq:gamma}) leads us to the claimed bound.
\end{proof}
\red{
\begin{rem}
The feature of $\stab$ used in the proofs of Theorems \ref{thm:haar} and \ref{thm:maxdmin} is the number of pure stabilizer states. In particular, the key message that $\ddmin > n - O(\log n)$ typically holds essentially comes from the number of free pure states being $2^{\mathrm{poly}(n)}$ and can thus be generalized to all theories with this property.
\end{rem}
}




In conclusion, we see that the maximum values of all consistent magic measures \red{($f$ such that $\ddmin\leq f\leq LR$)} are approximately $n$ for $n$-qubit states.
The results \red{potentially have} implications on the \red{asymptotic} reversibility of magic 
state transformations. We say a theory is reversible if resource states can be 
transformed back and forth using the free operations without loss. 
\red{The main result of Ref.~\cite{BrandaoGour15} is} that reversibility holds asymptotically, i.e.,
in the i.i.d.~limit \red{and with respect to the transformation rate}, for general resource theories satisfying several natural 
axioms, if the set of free operations not only includes all resource non-generating 
operations, but one allows \red{a certain class of} approximately resource non-generating operations, 
which is a bit unsatisfying from a fundamental conceptual point of view. \red{Indeed, it is recently shown \cite{LamiRegula21} that entanglement theory with exactly resource non-generating operations is asymptotically irreversible. On the other hand, for example, the resource theory of coherence under the non-generating set  MIO is already asymptotically reversible \cite{Using+reusing}. 
For magic, our Theorems \ref{thm:LR-max} and \ref{thm:maxdmin} imply the existence of a sequence of asymptotic golden currency states.
By the results in Ref.~\cite{LiuBuTakagi19:oneshot}, any resource theory such that $LR(\rho_n)$ and $\ddmin(\rho_n)$ are asymptotically equal to $n$ for some sequence
of $n$-qubit ``currency'' states $\rho_n$, is asymptotically reversible if and only if the two rates of distillation and of formation, to and from the currency states, respectively, are asymptotically equal.}

Now we \red{comment on the implications to the classical simulation of quantum computation.} 
\red{An idea that has drawn considerable interest is to devise  classical simulation algorithms based on the efficient simulation of stabilizer quantum computation \cite{Gottesman97:thesis,AaronsonGottesman04}.  
Here, one considers the general quantum computation model built upon Clifford operations aided by magic states, which are used, e.g.,~to emulate non-Clifford gates by state injection \cite{GottesmanChuang} or as the resource state of Pauli MBQC (see Sec.~\ref{sec:mbqc}). 
Since the Clifford part is ``cheap,'' we  are particularly interested in
 how the cost of the simulation algorithms depends on magic and how to optimize it.
}
For example, there are two leading methods: (i) Stabilizer decomposition, for which 
the cost depends on the (smooth) stabilizer rank \cite{BravyiSmithSmolin16:stabrank,BravyiGosset,Bravyi2019simulationofquantum}; 
(ii) Quasiprobability method based on stabilizer pseudomixture (also applies to mixed states), for which the cost depends on the free robustness of magic \cite{HowardCampbell17}.  
\red{Given the belief that the cost scales at least exponentially on magic (otherwise, there will be improbable complexity theory consequences), the efforts are devoted to reducing the exponent. Improvements over the worst-case, brute-force simulation cost rely on input magic states with special structures, such as a \red{collection} of $\ket{T}$ states that admit low-rank stabilizer decompositions \cite{BravyiSmithSmolin16:stabrank,BravyiGosset,Bravyi2019simulationofquantum}.}
The fact that almost all states must have $LR \approx n$ and maximal stabilizer rank (because lower-rank states are only a finite number of lower-dimensional manifolds, 
which form a measure-zero set)  tells us that these simulation methods typically give us 
no improvement, \red{even in the exponent}, over brute-force simulation. \red{That is, to facilitate even slight quantum advantage, let alone a significant one, the resource magic states have to have very special structures.}. 
\red{Interestingly, on the other hand, it turns out the typical magic results also indicate that most states are not able to supply nontrivial advantages over classical methods in solving $\mathsf{NP}$ problems in the Pauli MBQC model, despite being difficult to simulate using classical methods. 
This will be unraveled in Sec.~\ref{sec:mbqc}. }


\section{Hypergraph states \protect\\ and Boolean functions}
\label{sec:hyper}
\red{The preceding analysis shows that almost all quantum states have close to maximal magic, with respect to any magic measure.}
But we lack explicit constructions for highly magical states. 
Here we go in this direction by looking at hypergraph states, which are generalizations of graph states that possess highly flexible entanglement structures determined by an underlying hypergraph. 

We first formally define graph  and hypergraph states.
Graph states constitute an important family of many-body quantum 
states that plays key roles in various areas of quantum information, 
in particular, quantum error correction and MBQC. 
Given a graph $G=\{V,E\}$ defined by a set of $n$ vertices $V$ and a set 
of edges $E$, the corresponding $n$-qubit graph state is given by
\begin{equation}
  \ket{\Psi_G} := \prod_{\substack{i_1,i_2\in V\\ \{i_1,i_2\}\in E}} 
                                          CZ_{i_1 i_2}H^{\otimes n}\ket{0}^{\otimes n},
\end{equation}
\red{where $CZ$ and $H$ are respectively the standard controlled-$Z$ and Hadamard gates}.
Note that both gates are in the Clifford group, so graph states are stabilizer states. Conversely, it is known that every stabilizer state is equivalent to a graph state, up to a tensor product of local Clifford unitaries \red{\cite{Schlingemann,hein2006entanglement}}. Graph states thus already exhibit rich entanglement structures, indeed the same as general stabilizer states, which include most quantum error correcting codes known.
More generally, one can define \emph{hypergraph states} \cite{Rossi13:hypergraph} based on hypergraphs, where the hyperedges may contain $k\geq 2$ vertices and represent 
$C^{k-1}Z$ gates that acts $Z$ on one of the qubits conditioned on the $k-1$ others being 1. 
That is, given a hypergraph $\widetilde{G}=\{V,E\}$ defined by a set of $n$ vertices $V$ and a set of hyperedges $E$, the corresponding $n$-qubit hypergraph state is given by
\begin{equation}
  \ket{\Psi_{\widetilde{G}}} := \prod_{\substack{i_1,\cdots,i_k\in V\\ \{i_1,\cdots,i_k\}\in E}} 
                                 C^{k-1}Z_{i_1 \cdots i_k}H^{\otimes n}\ket{0}^{\otimes n}.
\end{equation}
\red{It is important to note that the $C^{k-1}Z$ gates when $k>2$, that are additionally allowed compared to the graph states, are not Clifford gates, and thus may generate magic. (More precisely, $C^{k-1}Z$ gates are in the $k$-th level but not the $(k-1)$-th level of the Clifford hierarchy \cite{GottesmanChuang,CuiGottesmanKrishna}.)  Because of the rich structure of the $C^{k-1}Z$ gates, the hypergraph states provide us with a natural, flexible family of many-body magic states.}

An important observation is that the hypergraph (including graph) states admit representations in terms of Boolean functions:
\begin{equation}
  \ket{\Psi} = 2^{-n/2}\sum_{x\in \mathbb{Z}_2^n}(-1)^{{f}(x)}\ket{x}.
\end{equation}
Here $f(x):\mathbb{Z}_2^n\rightarrow \mathbb{Z}_2$ is a Boolean function
\begin{equation}
    f(x) = \sum_{\substack{i_1,\cdots,i_k\in V\\ \{i_1,\cdots,i_k\}\in E}} 
                                  x_{i_1} \cdots x_{i_k},
\end{equation}
which we call the characteristic function of the hypergraph state $\ket{\Psi}$.
Each $f$ corresponds to a hypergraph state (modulo a global phase) and 
there are $2^{2^n-1}$ possibilities \cite{Rossi13:hypergraph}.
For a graph state,
$
  f(x) = \sum_{\substack{i_1,i_2\in V\\ \{i_1,i_2\}\in E}} x_{i_1} x_{i_2} 
$
is a function with only quadratic terms, where each term corresponds to an edge, 
and there are now only $2^{n \choose 2}$ possibilities.  
Any quadratic characteristic function (which may additionally include linear 
terms $x_i$) induces a stabilizer state because a term $x_i$ simply corresponds 
to a Pauli-$Z$ gate on the $i$-th qubit. 
Call such states induced by quadratic characteristic functions \emph{quadratic states} and denote the set of quadratic states $Q$. \red{Quadratic states are graph states with additional phases.}  Note that although the set of quadratic states does not include all stabilizer states (with additional local $H$ and $P$ freedom; see Eq.~(\ref{eq:HP})), that is, $Q\subsetneq \stab$, the size of $Q$ is close to that of $\stab$, both scaling roughly as $2^{n^2/2}$ asymptotically.

This formalism allows us to analyze certain magic properties of hypergraph 
states through Boolean functions.  
Given two hypergraph states $\ket{\Psi}$ and $\ket{\Psi'}$ with 
characteristic functions $f$ and $f'$ respectively, we have
\begin{equation}
  \bracket{\Psi}{\Psi'} 
      = 2^{-n}\sum_{x\in \mathbb{Z}_2^n}(-1)^{f(x)+{f}'(x)} 
      = 1-2^{1-n}\mathrm{wt}(f+f'),
  \label{eq:wt}
\end{equation}
where $\mathrm{wt}(f)$ denotes the Hamming weight of $f$, i.e.,~the number of 1's in the 
truth table of $f$. Therefore, $\mathrm{wt}(f+f')$ (also called the Hamming distance 
between $f$ and $f'$) essentially counts the number of non-collisions between $f$ and $f'$.
Here, we are interested in the minimization of $\mathrm{wt}(f+f')$ over all quadratic 
$f'$, 
namely the \emph{second-order nonlinearity} or \emph{nonquadraticity} 
of $f$, formally defined as
\begin{equation}
    \chi(f) := 
    \red{\min_{\text{quadratic}~f'}}
    \mathrm{wt}(f+f'). 
\end{equation}
\red{
Note that the codewords of the $r$-th order binary Reed--Muller codes of length $2^n$, denoted by $RM(r,n)$, are given by Boolean functions of algebraic  degree at most $r$ on  $n$ variables \cite{carlet_2010}.  That is, quadratic functions generate $RM(2,n)$.  
}
The nonquadraticity generalizes the well-studied nonlinearity of Boolean functions 
(an associated key notion is that of a ``bent function''), which has important applications in cryptography and 
coding theory \red{(see, e.g.,~Refs.~\cite{carlet_2010,Carlet2011})}.   
This leads to lower bounds on the maximum overlap between $\ket{\Psi}$ with stabilizer states,
because a quadratic characteristic function induces a stabilizer state as argued.
Using Eq.~(\ref{eq:wt}), we obtain the following bound on $\ddmin$ in terms of nonquadraticity:
\begin{equation}
    \ddmin(\Psi) \leq -\log \max_{q\in Q} |\bracket{\Psi}{q}|^2 
     = -2\log(1-2^{1-n}\chi(f)).
     \label{eq:ddmin_chi}
\end{equation}
\red{As mentioned, it is known that all stabilizer states can be generated by single-qubit Clifford gates acting on graph states (which form a subset of $Q$)  \cite{Schlingemann,hein2006entanglement}}.  Note that $Q$ is closed under single-qubit Pauli operators up to global phases.  So any pure stabilizer state $\ket{s}$ takes the form
\begin{equation}
    \ket{s} = \bigotimes_{i\in\mathcal{I}}P_i \bigotimes_{j\in\mathcal{J}}H_j \ket{q},
    \label{eq:HP}
\end{equation}
where $P,H$ are respectively the phase gate and the Hadamard gate, $\ket{q}\in Q$ is a quadratic state, and $\mathcal{I},\mathcal{J}$ are respectively the set of indices of qubits that $P,H$ act on.    
\red{So, the interesting question of how tight the inequality in (\ref{eq:ddmin_chi}) is, namely how close the maximum overlap  with respect to $Q$ is compared to that with respect to $\stab$ (note that $Q$ is almost as large as $\stab$), comes down to the effects of such local $P,H$ gates. 
We leave this as an open problem for future study.}

The above technique allows us to obtain bounds on the magic of generally highly entangled \red{states} of arbitrary size, from the nonquadraticity of Boolean functions.  In Sec.~\ref{sec:spt}, we shall use this technique to analyze certain physically motivated hypergraph states, which serve as explicit examples.
Also, results on Boolean functions and Reed--Muller codes lead to several general understandings.
\red{The maximum possible nonquadraticity $\chi$ is equivalent to the \emph{covering radius} of the 
second-order Reed--Muller code $RM(2,n)$, denoted by $r(RM(2,n))$ \cite{Cohen:covering-codes}}.
Determining the covering radii for codes is an important but generally 
difficult task. For general $n$, there are only bounds known for $r(RM(2,n))$.   
The best upper bound to our knowledge is from Ref.~\cite{CarletMesnager07:coveringradii}, 
\begin{equation}
   \max \chi(f) \equiv r(RM(2,n)) \leq 2^{n-1} - \frac{\sqrt{15}}{2}2^{n/2}+O(1).
\end{equation}
Thus, by (\ref{eq:ddmin_chi}),
\begin{equation}
    \ddmin(\Psi) \leq n-\log 15 + o(1).
\end{equation}
We learn from this bound that for any hypergraph state $\ket{\Psi}$, the 
min-relative entropy of magic $\ddmin(\Psi)$ is upper bounded by $n-\log 15 \approx n - 3.9$ 
in the large-$n$ limit.  
We also have lower bounds coming from simple covering arguments \cite{COHEN1992147}:
\begin{equation}
  \max \chi(f)  \geq 2^{n-1}-\frac{\sqrt{\ln 2}}{2} n 2^{n/2} + O(1).
\end{equation}
This implies that there exists a hypergraph state $\ket{\Psi}$ such that 
\begin{equation}
  -\log \max_{q\in Q} |\bracket{\Psi}{q}|^2 \geq n - 2\log n - \log\ln 2 + o(1). 
  \label{eq:hypergraph_max}
\end{equation}
Recall that $Q$ and $\stab$ are very similar sets, and that we believe that the left hand side is close to $\ddmin$ especially in the present high-magic regime. 
This bound is very close to the Haar-random value in Theorem \ref{thm:haar}, but unfortunately is also not constructive. 
To our knowledge, the Boolean function with the largest nonquadraticity in the literature is the modified Welch function (see, e.g.,~Ref.~\cite{Carlet08}) defined as
$
    f_{\rm W}(x) = {\rm tr}(x^{2^r+3})$ where $r=\frac{n+1}{2}, n~\text{odd}$.  We have $\chi(f_{\rm W})\approx 2^{n-1}-2^{(3n-1)/4}$, so, for the corresponding hypergraph state $\Psi_{\rm W}$, it holds that
    \begin{equation}
       \ddmin(\Psi_{\rm W}) \leq -\log \max_{q\in Q} |\bracket{\Psi_{\rm W}}{q}|^2 \approx 0.5n. 
    \end{equation}
\red{As a side note, the algebraic degree of $f_{\rm W}$ (and thus the largest size of the hyperedges of the hypergraph associated with $\Psi_{\rm W}$) is only 3.}
Although we know that the typical magic of a random state is close to $n$, we do not yet have specific constructions of many-body states with such high magic. 
The situation is reminiscent of, e.g.~the superadditivity of classical capacity 
\cite{Hastings09:superadditivity}, which is shown for certain random 
ensembles, but no deterministic construction is known.
\red{To conclude, the quantification of many-body magic provides a new, physical motivation for further studying the nonquadraticity of Boolean functions, especially high-degree ones.}

\section{Pauli Measurement based quantum computation}
\label{sec:mbqc}
MBQC \cite{RaussendorfBriegel01,RaussendorfBrowneBriegel03} is a profound and promising model for quantum computation, where one prepares a many-body entangled  state offline, and then executes the computation by a sequence of local measurements adaptively determined by a classical computer on this resource state.  

This model is naturally tied to resource theory as it essentially formalizes quantum computation as free online manipulations of a resource state. Standard MBQC only allows single-qubit measurements, so entanglement among qubits in the initial state becomes the key resource feature, and a core line of study is to understand the degree of entanglement that supports universal quantum computation (see, e.g.,~Refs.~\cite{VandenNest06,Nest_2007,GrossFlammiaEisert08,BremnerMoraWinter08}). 

Here we consider a variant where one is restricted to measuring mutually compatible Pauli observables (including multi-qubit ones such as $X\otimes X$, which cover entangled measurements, for the greatest generality), which we call Pauli MBQC.  This is desirable both practically and conceptually, in a similar spirit as the 
\red{well-known magic state model based on magic state distillation and injection \cite{BravyiKitaev,GottesmanChuang}}.  Clearly, we would like the online procedures to be as simple  as possible for implementation and fault tolerance;
moreover, the ``magic'' of the computation is now isolated to offline resource state preparation, which paves the way for understanding and analyzing the genuine ``quantumness'' in MBQC models.
\red{(In fact, the magic state model is a subclass of Pauli MBQC where no entanglement is required in the input resource state.)}
As a comparison, the standard MBQC using cluster states \cite{RaussendorfBriegel01,VandenNest06} requires online measurements in ``magical'' bases since cluster states are stabilizer states, leaving certain computational non-classicality in the online part. 

More generally, in this Pauli MBQC setting, it is clear that many-body magic states are necessary \red{to achieve quantum speedups} due to the Gottesman--Knill theorem.  Known resource states useful for Pauli MBQC include a certain hypergraph state introduced by Takeuchi, Morimae and Hayashi \cite{TakeuchiMorimaeHayashi19}, and the Miller--Miyake state \cite{MillerMiyake16} (which will be discussed in the next section).
A central question is whether all magic resource states can supply significant speedups over classical algorithms, or support universal quantum computation.   In the case of standard MBQC, it is known that resource states with ``too much'' entanglement (and thereby most states) are not useful for quantum speedups   \cite{GrossFlammiaEisert08,BremnerMoraWinter08}.  Here we show that similar rules also hold for Pauli MBQC and magic states, by adapting the arguments in Ref.~\cite{GrossFlammiaEisert08}.
\red{Intuitively, if the resource \red{state} is too ``magical,'' any Pauli measurement scheme will produce outcomes that are too uniform across all possible ones so that they can be well simulated by classical randomness, or more precisely, so that the set of witnesses for the problem is sufficiently large to allow for an efficient probabilistic search. Indeed, the known examples of Pauli MBQC such as Takeuchi--Morimae--Hayashi \cite{TakeuchiMorimaeHayashi19} and Miller--Miyake \cite{MillerMiyake16} are based on resource states prepared by Clifford+$CCZ$ circuits with specific structures, \red{which are expected to have only ``medium'' magic ($\sim cn$ where $0<c<1$)}; see Sec.~\ref{sec:spt}.  The formal result and proof go as follows.}
\begin{thm}
  Pauli MBQC with any $n$-qubit resource state $\ket{\Psi}$ with $\ddmin(\Psi)\geq n - O(\log n)$ cannot achieve superpolynomial speedups over $\mathsf{BPP}$ machines (classical randomized algorithms) for problems in $\mathsf{NP}$.
  \label{thm:mbqc}
\end{thm}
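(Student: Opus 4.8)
The plan is to transplant the obstruction argument of Gross, Flammia and Eisert \cite{GrossFlammiaEisert08} from the entanglement/single-qubit-measurement setting to the magic/Pauli-measurement setting, with $\ddmin$ (equivalently the stabilizer fidelity $F(\Psi)=2^{-\ddmin(\Psi)}$) replacing the geometric measure of entanglement. First I would pin down the model: a run of Pauli MBQC is an adaptively chosen sequence of mutually commuting Pauli measurements on $\ket{\Psi}$, with a $\mathrm{poly}(n)$-time classical side-processor choosing the next observable from the outcome record and finally emitting an output string $w$. The observables measured along a given run commute, hence generate an abelian subgroup of the Pauli group of rank $k\le n$ containing no scalar but $\1$; without loss of generality I would append further commuting Pauli measurements completing this subgroup to a maximal one. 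This leaves the computation's output untouched, but after discarding outcome-forced (redundant) measurements every run consists of exactly $n$ binary branchings, so the runs are indexed by $\vec s\in\{0,1\}^n$ and each leaf $\vec s$ corresponds to projecting $\ket{\Psi}$ onto a definite \emph{pure stabilizer state} $\ket{\phi_{\vec s}}$, reached with probability $p_{\vec s}=|\bracket{\Psi}{\phi_{\vec s}}|^2$. (This is where the Pauli restriction enters: general measurements would collapse $\ket{\Psi}$ onto arbitrary states not controlled by $\ddmin$.)

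The quantitative core is then immediate from $\ddmin(\psi)=-\log\max_{\phi\in\stab}|\bracket{\psi}{\phi}|^2$: every leaf has $p_{\vec s}\le 2^{-\ddmin(\Psi)}$, which under the hypothesis $\ddmin(\Psi)\ge n-O(\log n)$ reads $p_{\vec s}\le \mathrm{poly}(n)\,2^{-n}$, so $\sum_{\vec s}p_{\vec s}=1$ forces at least $2^{\ddmin(\Psi)}\ge 2^n/\mathrm{poly}(n)$ of the $2^n$ leaves to carry positive weight. Now let $L\in\mathsf{NP}$ with poly-time verifier $V$, and suppose Pauli MBQC with resource $\ket{\Psi}$ solves $L$, i.e.\ on a yes-instance $x$ its output $w(\vec s)$ satisfies $V(x,w(\vec s))=1$ with probability $\ge 2/3$, while on a no-instance no string passes $V$. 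The success condition $\sum_{\vec s:\,V(x,w(\vec s))=1}p_{\vec s}\ge 2/3$ then forces the number of ``good'' leaves to be at least $\frac23\,2^{\ddmin(\Psi)}\ge 2^n/\mathrm{poly}(n)$ --- a $1/\mathrm{poly}(n)$ fraction of all $2^n$ leaves.

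This yields a classical $\mathsf{BPP}$ (indeed $\mathsf{RP}$) simulation: sample $\vec s\in\{0,1\}^n$ uniformly, simulate the adaptive measurement schedule by tracking the stabilizer group through the Pauli measurements in Gottesman--Knill fashion (forced outcomes computed, the $n$ branching outcomes read from $\vec s$) to reconstruct the outcome record and hence $w(\vec s)$, and accept iff $V(x,w(\vec s))=1$. Each trial runs in $\mathrm{poly}(n)$ time, never accepts on a no-instance (perfect soundness, as $V$ rejects everything there), and accepts on a yes-instance with probability $\ge 1/\mathrm{poly}(n)$; $\mathrm{poly}(n)$ independent repetitions boost this to a constant. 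Since the resource size $n$ is polynomial in the input length and the MBQC running time, $L\in\mathsf{BPP}$ and no superpolynomial speedup is possible.

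The step I expect to be the main obstacle --- and the conceptual heart inherited from \cite{GrossFlammiaEisert08} --- is twofold. First one must organize the adaptive, possibly entangled-Pauli, possibly incomplete measurement process (the completion-to-maximal-stabilizer step together with the bookkeeping of forced versus branching outcomes) so that the flatness estimate $p_{\vec s}\le 2^{-\ddmin(\Psi)}$ genuinely applies leaf by leaf. Second, and more subtly, one must notice that the classical algorithm never needs the branch probabilities $p_{\vec s}$ themselves --- computing them would require knowing $\ket{\Psi}$ --- but only the combinatorial fact that the witness-producing leaves form a $1/\mathrm{poly}(n)$ fraction of an efficiently sampleable set, with correctness deferred to $V$. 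I would also flag that a purely decision-type guarantee does not suffice: on a no-instance the number of ``accepting'' leaves can itself be $\approx 2^n$, so the verifier structure is genuinely used, which is precisely why the statement is confined to $\mathsf{NP}$.
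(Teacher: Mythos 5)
Your proposal is correct and follows essentially the same route as the paper's proof: bound each outcome's probability by $2^{n-k-\ddmin(\Psi)}$ via the stabilizer-code decomposition of the Pauli measurement projectors, deduce that the accepting outcomes form a $1/\mathrm{poly}(n)$ fraction, and simulate classically by uniform sampling plus the $\mathsf{NP}$ verifier. Your completion to a maximal abelian subgroup and explicit handling of adaptivity are presentational refinements of the same counting argument, not a different method.
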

\begin{proof}
First note that all Pauli observables have eigenvalues $\pm 1$, and those defined nontrivially on multiple qubits (joint measurements) have degenerate eigenstates. 
Suppose that we measure $k$ (mutually compatible) observables, labeled by $P_i, i=1,...,k$. 
The measurement outcome of $P_i$ \red{is} a binary variable $y_i = \pm 1$, so the collective outcome can be represented by a bit string $y = y_1,...,y_k$ with $2^k$ possible values, each of which corresponds to a subspace of the entire Hilbert space.  
The probability of obtaining $y$ is given by
\begin{equation}
    p(y) = \Tr (\Pi_{y}\ket{\Psi}\bra{\Psi}),
\end{equation}
where $\Pi_{y}$ is the projector onto the subspace corresponding to $y$.  Note that $\Pi_y$ takes the form
\begin{equation}\label{eq:pi_y}
    \Pi_y = \sum_{j=1}^{2^{n-k}}\ketbra{s_{j,y}}{s_{j,y}},
\end{equation}
where $\{s_{j,y}:j=1,...,2^{n-k}\}$ is \red{an orthogonal basis of states that are} 
stabilized by $\{y_i P_i\}$, $i=1,...,k$. 
There are $2^{n-k}$ such stabilizer states because each measurement halves the dimension. \red{More concisely, measuring a set of mutually compatible Pauli observables is equivalent to measuring a partition of the identity composed of stabilizer codes of the form given by (\ref{eq:pi_y}).}  Therefore, we have
\begin{equation}
    p(y) = \sum_{j=1}^{2^{n-k}}|\bracket{s_j}{\Psi}|^2 \leq 2^{n-k-\ddmin(\Psi)},
\end{equation}
by using standard properties of the trace function and the definition of $\ddmin$.   Suppose that the algorithm succeeds with probability $\geq 2/3$. That is, let $G$ be the set of strings leading to valid solutions, then $\sum_{y\in G}p(y) \geq 2/3$.  Therefore, the size of $G$ obeys
\begin{equation}
    |G| \geq {2^{-n+k+\ddmin(\Psi)+1}}/{3}.
\end{equation}
As a result, one can simulate the above procedure by a classical randomized algorithm in polynomial time, namely in $\mathsf{BPP}$. \red{The more specific argument goes as follows. Let $N$ be the input size of the $\mathsf{NP}$ problem. Since the quantum computation is supposed to be efficient we have $n = \mathrm{poly}(N)$. One generates $k$ uniformly random bits from an i.i.d.\ source and feeds it into the polynomial-time verifier of the $\mathsf{NP}$ problem 
to see whether it succeeds (this checking step takes time at  most $\mathrm{poly}(N) = \mathrm{poly}(n)$).} If it fails, generate another random string and check again.  The probability that the algorithm still has not succeeded after $t$ repetitions satisfies
\begin{equation}
    p_{f}  =  (1-|G|/2^k)^t  \leq \left(1-\frac{2^{-n+\ddmin(\Psi)+1}}{3}\right)^t.
\end{equation}
So to achieve success probability $\geq 2/3$, namely $p_f \leq 1/3$, the number of repetitions needed satisfies
\begin{equation}
    t \leq 3\log 3\cdot 2^{n-\ddmin(\Psi) -1}.
\end{equation}
When $\ddmin(\Psi)\geq n - O(\log n)$, it can be directly seen that $t$ is upper bounded by $\mathrm{poly}(n)$. Multiplying the checking time, it can be concluded that the total runtime of this classical simulation is upper bounded by $\mathrm{poly}(n)$.
\end{proof}


Combining with results in Sec.~\ref{sec:bounds}, we see that almost all states are useless for Pauli MBQC in a strong sense:
\begin{cor}
 The fraction of states (w.r.t.~Haar measure) that can supply nontrivial quantum advantages via Pauli MBQC is exponentially small in $n$.
\end{cor}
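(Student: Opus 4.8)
The plan is to combine Theorem~\ref{thm:haar} with Theorem~\ref{thm:mbqc} directly, since the corollary is essentially a counting argument on top of the two results already established. First I would fix the threshold $\gamma = n - 2\log n - 0.63$ used in Theorem~\ref{thm:haar}, which guarantees that a Haar-random $n$-qubit state $\ket{\psi}$ satisfies $\ddmin(\psi) < \gamma$ only with probability at most $\exp(-n^2)$. Equivalently, with probability at least $1 - \exp(-n^2)$ we have $\ddmin(\psi) \geq n - 2\log n - 0.63 = n - O(\log n)$.

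Next I would invoke Theorem~\ref{thm:mbqc}: any resource state with $\ddmin(\Psi) \geq n - O(\log n)$ yields a Pauli MBQC protocol whose output distribution can be reproduced in $\mathsf{BPP}$ for $\mathsf{NP}$ problems, hence supplies no superpolynomial speedup. Therefore the set of states that \emph{can} supply a nontrivial quantum advantage is contained in the set of states with $\ddmin(\Psi) < n - O(\log n)$, which by the previous paragraph has Haar measure at most $\exp(-n^2)$. This is exponentially small in $n$ (indeed doubly so in the exponent), establishing the claim. One should be slightly careful to state that ``nontrivial quantum advantage'' here means a superpolynomial speedup over $\mathsf{BPP}$ for $\mathsf{NP}$ problems, matching the scope of Theorem~\ref{thm:mbqc}; the corollary inherits exactly that qualification.

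The only genuine subtlety — and the step I expect to require the most care — is making sure the implication ``$\ddmin$ large $\Rightarrow$ not useful'' is being applied in the correct logical direction when passing to measure: Theorem~\ref{thm:mbqc} says high-magic states are useless, so usefulness forces $\ddmin$ to be \emph{small}, and smallness is the rare event. This is straightforward but worth spelling out so the reader does not confuse ``almost all states are useless'' (the intended statement) with the vacuous converse. Beyond that, no new estimates are needed; the exponential smallness is literally the bound $\exp(-n^2)$ from Theorem~\ref{thm:haar}, and I would simply cite it. If one wanted a cleaner constant, one could instead use the sharper form~(\ref{eq:dminbound}) of Theorem~\ref{thm:haar} with any $\gamma = n - \omega(\log n)$, but this refinement is not necessary for the stated corollary.
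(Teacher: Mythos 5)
Your proposal is correct and matches the paper's intended argument exactly: the corollary is stated as an immediate combination of Theorem~\ref{thm:haar} (a Haar-random state has $\ddmin \geq n - 2\log n - 0.63$ except with probability $\exp(-n^2)$) with the contrapositive of Theorem~\ref{thm:mbqc}, and the paper offers no further proof beyond that observation. Your care about the logical direction and about pinning down ``nontrivial advantage'' to the scope of Theorem~\ref{thm:mbqc} is appropriate but does not change the substance.
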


\red{We conclude this section by remarking that, as with many good things, one can have too little and too much  magic to be of any good: indeed, the behavior under Clifford operations of states with ``too little'' magic can be efficiently simulated classically, ruling out any computational advantage; while ``too much'' magic means that the state may in general be hard to simulate, but its behavior in a Pauli MBQC protocol is trivial, i.e., essentially random, so in this context there is no quantum advantage either. This highlights that quantum computation requires very delicate structures or features of quantum systems---although most of them are hard to simulate classically or contain near-maximal quantum resource, most of them are also useless. Only in an intricate intermediate regime can they manifest a quantum computational advantage.}

\section{Quantum phases of matter}
\label{sec:spt}

\zwedit{ The Clifford group and stabilizer formalism have become standard notions and tools in recent studies of condensed matter physics, but so far there is little discussion on their physical relevance and the role of magic, especially at a quantitative level.
Here we would like to present some basic discussions and results on the magic of quantum many-body systems of interest from a phase of matter perspective, in the hope of stimulating further explorations in this direction. 
This section can also be viewed as a case study of the techniques introduced in Sec.~\ref{sec:hyper} for quantitatively analyzing many-body magic.   

Here we consider SPT phases, which have drawn great interest in the condensed matter community (see, e.g.,~Refs.~\cite{Senthil15,QImeetsQM} for introductions)  and in particular, been studied as a useful type of many-body resource states useful for MBQC (see, e.g.,~Ref.~\cite{Wei18:mbqc} for a review).  
\red{It has recently been realized that a wide range of nontrivial SPT phases in $\geq 2$D must contain magic that is ``robust'' in a physical sense \cite{EKLH20}},  indicating that  magic is a characteristic feature underpinning the physics of such systems.    
Here we showcase how to apply the Boolean function techniques introduced in Sec.~\ref{sec:hyper} to representative 2D SPT states. 
For example, it is known that the Levin--Gu \cite{LevinGu12} and Miller--Miyake \cite{MillerMiyake16} models have ground states that are hypergraph states prepared by Clifford+$CCZ$ circuits defined on corresponding lattices, so that the characteristic functions of these ground states are restricted to cubic ones, namely third-order Reed--Muller codes $RM(3,n)$.  
For concreteness, think about the well-known Levin--Gu state $\ket{\Psi_{\mathrm{LG}}}$ \cite{LevinGu12} defined on the 2D triangular lattice (see Fig.~\ref{fig:lattice}), which takes the form
\begin{equation}
  \ket{\Psi_{\mathrm{LG}}} = U_{{CCZ}} U_{{CZ}} U_{{Z}} H^{\otimes n}\ket{0}^{\otimes n},
  \label{eq:triangular}
\end{equation}
where $U_{CCZ}, U_{CZ}, U_{Z}$ are respectively composed of $CCZ$, $CZ$, $Z$ gates acting on all triangles, edges, and vertices. 
More generally, consider third-order hypergraph states
\begin{equation}
    \ket{\hat\Psi} = U_{{CCZ}} \ket{\Phi},\quad \ket{\Phi}\in Q,  \label{eq:cczstate}
\end{equation}
defined on 2D triangulated lattices (such as the ordinary triangular lattice and the Union Jack lattice, as depicted in Fig.~\ref{fig:lattice}), where $U_{CCZ}$ represents $CCZ$ gates acting on all triangles. 
\begin{figure}[t]
 \includegraphics[width=\columnwidth]{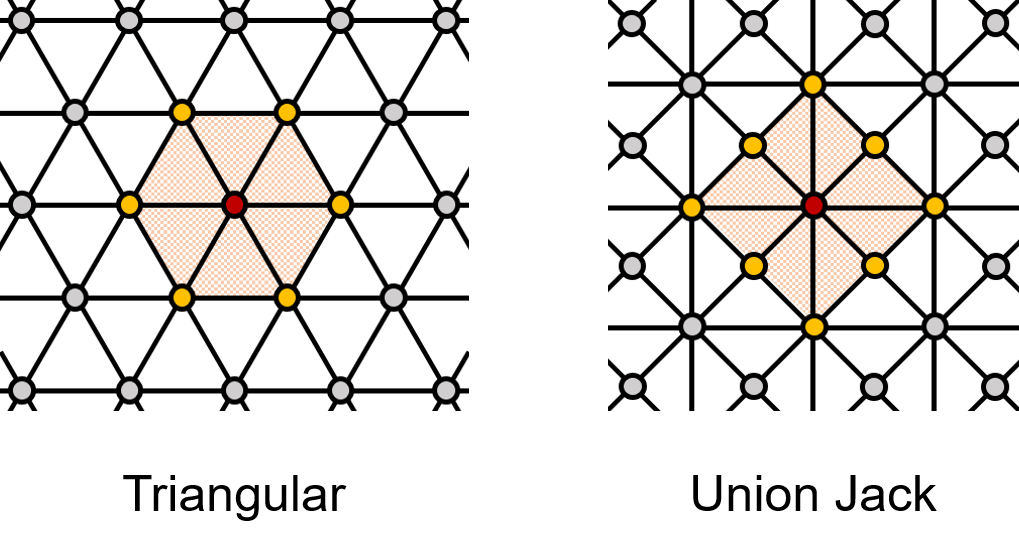}
 \caption{2D triangulated lattices.  The shaded area represents a unit cell, based on which we decompose the underlying Boolean functions of the systems and derive bounds on their nonquadracity (details in Appendix~\ref{app:ccz}).  \label{fig:lattice}}
\end{figure}
Note that the Clifford+$CCZ$ preparation circuits of such states are in the third level of the Clifford hierarchy \cite{GottesmanChuang}. 
Such states are called ``Clifford magic states'' in Ref.~\cite{Bravyi2019simulationofquantum} 
and are shown to have the property that the ``stabilizer extent,'' 
$\xi(\Psi):=\min\|c\|_1^2$ where $c$ is the amplitude vector of a decomposition 
into pure stabilizer states, is equal to $2^{\ddmin(\Psi)}$ due to convex duality.
It is known that the logarithm of the stabilizer extent $\log\xi$ and max-relative entropy 
monotone $\ddmax$ (and thus also generalized robustness) are 
equivalent \cite{Regula2017,LiuBuTakagi19:oneshot}.  Therefore we have the collapse property $\ddmax(\hat\Psi)=\ddmin(\hat\Psi)$.  Using  techniques from Refs.~\cite{Cubic_book,Kolokotronis09:quadratic}, we rigorously prove the following crude bounds for the two example lattices (which hold for both open and periodic boundary conditions):
\begin{itemize}
    \item Triangular lattice: $\ddmax(\hat\Psi)=\ddmin(\hat\Psi) < 0.56n.$
\item Union Jack lattice: $\ddmax(\hat\Psi)=\ddmin(\hat\Psi) < 0.46n.$
\end{itemize}
Roughly speaking, our approach is to find proper decompositions of the cubic characteristic functions based on cell structures of the underlying lattices (as illustrated in Fig.~\ref{fig:lattice}) that allow us to bound its distance from certain quadratic functions (and thus the nonquadraticity). See  Appendix~\ref{app:ccz} for technical details of the derivation.
Note that we expect the above bounds to be loose, and it can likely be shown that $\ddmax(\hat\Psi)=\ddmin(\hat\Psi) \leq \left(2-\frac{2}{3}\log 6\right)n \lesssim 0.28n$ for all regular triangulated lattices (also see Appendix~\ref{app:ccz} for more detailed discussions and probable ways to improve the bounds), which is achieved by disjoint $CCZ$ gates (namely, $CCZ^{\otimes \frac{n}{3}}$) because $\ddmax(CCZ|{+++}\rangle)/3 = \ddmin(CCZ|{+++}\rangle)/3 = \log(16/9)/3 = 2-\frac{2}{3}\log 6$ \cite{Bravyi2019simulationofquantum}.  
Also note that the maximum product-state value is $\log(3-\sqrt{3})n\approx 0.34n$, achieved by the product of qubit golden state  $\frac{1}{2}(I+\frac{X+Y+Z}{\sqrt{3}})$. 
\red{So an observation is that, although the $CCZ$ gates can generate rich entanglement structures that supply interesting topological properties, the many-body magic of the corresponding SPT states is rather weak (likely not even higher than certain states with no entanglement), despite being generically necessary and robust \cite{EKLH20}.} This makes the role of magic more curious.
Note that, e.g.~the fixed point of Miller--Miyake model on the Union Jack lattice (which satisfies (\ref{eq:cczstate})) is known to be universal for Pauli MBQC \cite{PhysRevLett.120.170503}, so the bound is consistent with Theorem~\ref{thm:mbqc}. 
Nevertheless, note also that the magic of such many-body states is still an extensive quantity, i.e.,~scales with the system size $n$.  A simple argument is that one can do Pauli measurements on vertices in a periodic manner (Fig.~\ref{fig:extensive} illustrates the case of the Union Jack lattice; the idea can be generalized to other regular lattices), which leaves a periodic array of $O(n)$ uncoupled $CCZ$ blocks, each containing a certain amount of magic. 
\begin{figure}[t]
 \includegraphics[width=0.42\columnwidth]{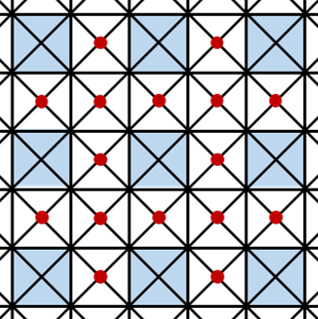}
 \caption{Extensiveness of magic. After measuring the red vertices by Pauli observables, the system is left with decoupled $CCZ$ blocks (colored in blue). \label{fig:extensive}}
\end{figure}
Note that a  characteristic feature of SPT phases is that they are short-range entangled, which accords with the rather weak magic.  It also indicates that for SPT phases, the method of calculating the magic of small lattices  and then ``scaling up''  the results may help approximate the magic of the whole system well.
For future  work, it would be particularly interesting to look into long-range  entangled, intrinsically topologically ordered systems like topological codes. 


We anticipate that the study of many-body magic will provide a new and useful perspective on characterizing and classifying quantum phases of matter. Since the family  of hypergraph states can describe very rich many-body entanglement structures that underlie the interesting physics of quantum matter, we expect the Boolean function techniques just introduced to be widely useful.
A natural direction is to further explore the connections between magic and computational complexity or power of phases of matter.  For example, a direct question following the above discussions is whether magic can be used to diagnose whether a phase is universal for Pauli MBQC, or more generally certain notions of ``quantum computational phase transitions.''  In particular, noting that the above studied Miller--Miyake and Levin--Gu models are known to be universal on the Union Jack lattice but likely not universal on the triangular lattice \cite{MillerMiyake16,Wei18:mbqc}, \red{it would be interesting to further understand what kinds of magic properties (e.g.,~scaling factors, topological and locality features) really determine the computational power.}  
On the other hand, magic determines the cost of many standard methods for preparing and simulating the systems and could plausibly be connected to related problems like the notorious sign problem in various forms (see, e.g.,~Refs.~\cite{gubernatis_kawashima_werner_2016,DBLP:journals/qic/BravyiDOT08,Hastings16:signproblem}).  For example, the extensive property directly indicates that the run time of the quasi-probability sampling algorithm of Howard and Campbell \cite{HowardCampbell17} is exponential.  
\newref{Also, as recently found in  Refs.~\cite{Sarkar_2020,WhiteCaoSwingle}, the behaviors of many-body magic have strong relevance to the phase transitions of certain important physical systems, indicating that magic could be a very useful diagnostic in many-body physics.}
We finally refer interested readers to Ref.~\cite{EKLH20}, which shows the necessity and robustness of magic throughout certain types of $\geq 2$D  SPT phases, and contains more results and discussions about magic from condensed matter perspectives, in relation to symmetries, sign problems, MBQC, and more. 


}

\section{Concluding remarks}
In this work, we formally studied the magic of many-body entangled quantum systems from multiple aspects, and proposed it as a potentially useful probe of many-body quantum complexity.
We found that magic is a highly nontrivial resource theory with complicated mathematical structures, so that the calculation and analysis of many-body magic measures are in general  difficult but very interesting.
Our results indicate an intriguing interplay between magic and entanglement worth further study: although magic and entanglement are disparate notions, they may be correlated in the highly entangled regime.  For example, we now know that quantum states are typically almost maximally entangled and magical at the same time, but do highly magical states have to be highly entangled in some sense, or vice versa?
On a related note, the problem of explicitly constructing scalable families of nearly maximally magical states with respect to any of the measures we investigated, is still wide open.


As is often the case in resource theories, some of the most interesting quantifiers are hard to compute, and even their upper and lower bounds may present serious computational challenges. In the case of magic, the complexity of calculations  scales badly with $n$ because of the exponential growth of the number of stabilizer states (despite the observation that the free and generalized robustnesses are \red{respectively linear and semidefinite} programs).
The search for easier bounds thus remains highly important. For certain condensed matter systems it may be sufficient to calculate values for small lattices, but the general case remains to be explored.

With the present work, we hope to raise further interest in  magic in entangled quantum systems, and magic as a new approach to many-body physics. Indeed, 
as discussed in the paper, many-body magic could be very relevant to the characterization of quantum complexity of phases of matter, such as the cost of simulating certain phases and the computational power of them.


\begin{acknowledgments}
We thank Tyler Ellison, David Gosset, Daniel Gottesman, Tim Hsieh, Linghang Kong, Kohdai Kuroiwa for discussions.
We also thank an anonymous referee for suggesting the generalization 
and alternative proof of Theorem~\ref{thm:LR-max}, and for permission to include 
them in the present paper. 
ZWL is supported by Perimeter Institute for Theoretical Physics.
Research at Perimeter Institute is supported in part by the Government of Canada through the Department of Innovation, Science and Economic Development Canada and by the Province of Ontario through the Ministry of Colleges and Universities. 
AW acknowledges financial support 
by the Spanish MINECO (projects FIS2016-86681-P
and PID2019-107609GB-I00/AEI/10.13039/501100011033) 
with the support of FEDER funds, 
and the Generalitat de Catalunya (project 2017-SGR-1127).
\end{acknowledgments}

\appendix

\section{Range of consistent resource measures}
\label{sec:range}
The following argument goes some way to justify that the min-relative entropy 
and the free robustness in some sense define a range  of consistent resource measures.  

Consider a theory with finite free robustness (i.e.,~satisfying condition 
FFR defined in Ref.~\cite{LiuBuTakagi19:oneshot}), such as magic theory in which we are 
interested here.  
By the definition of free robustness, there exists a free state 
$\delta\in\mathcal{F}$ such that 
$\frac{1}{1+R\left(\phi\right)} \phi+\frac{R\left(\phi\right)}{1+R\left(\phi\right)} \delta \in \mathcal{F}$.  
Consider the following cptp map
\begin{equation}
    \mathcal{E}(\omega)=(\tr\psi \omega) \phi+(1-\tr\psi \omega) \delta,
\end{equation}
where $\psi$ and $\phi$ are pure states.  Notice that $ \mathcal{E}(\psi) = \phi$.    It can be verified that if $\frac{1-\tr\psi \omega}{\tr\psi \omega} \geq R\left(\phi\right)$, that is, 
\begin{equation}
\tr\psi \omega \leq 2^{-LR(\phi)}, \label{eq:tr_lr}
\end{equation}
then $\mathcal{E}(\omega)\in\mathcal{F}$.
When $\ddmin(\psi) \geq LR(\phi)$, then, for any $\omega\in\mathcal{F}$, (\ref{eq:tr_lr}) holds and thus $\mathcal{E}(\omega)\in\mathcal{F}$.  That is, $\mathcal{E}$ is a resource non-generating operation.   To summarize, the condition $\ddmin(\psi) \geq LR(\phi)$ implies that there must exist a resource non-generating operation that accomplishes the one-shot transformation $\psi\rightarrow\phi$.

The consistency argument goes as follows.  Pick a standard reference resource measure $f_r$ such as the relative entropy of resource (or others satisfying $\ddmin\leq f_r \leq LR$).  It plays the role of fixing a standard normalization in order to, e.g.,~avoid ambiguities about constants.    Based on this, consider the following consistency condition of resource measure $f$: If the transformation $\psi\rightarrow\phi$ by free operations is possible, then 
\begin{align}
    f(\psi)\geq f_r(\phi)\quad \text{and}\quad
    f_r(\psi) \geq f(\phi).
\end{align}
If these are not satisfied then $f$ may be regarded inconsistent with the reference measure $f_r$ since comparisons with $f_r$ will rule out feasible free transformations. 

This consistency condition implies that $\ddmin\leq f \leq LR$.  Suppose that there exists $\psi$ such that $f(\psi) < \ddmin(\psi)$.  
Then there must exist some $\phi$ such that $f(\psi)< LR(\phi) \leq \ddmin(\psi)$, so $\psi\rightarrow\phi$ is feasible but $f(\psi) < f_r(\phi)$, violating the first consistency condition.  Similarly, suppose that there exists $\psi'$ such that $f(\psi') > LR(\psi')$.  Then there must exist some $\phi'$ such that $f(\psi') > \ddmin(\phi') \geq LR(\psi')$, so $\phi'\rightarrow\psi'$ is feasible but $f_r(\phi') < f(\psi')$, violating the second consistency condition.

We also refer readers to Refs.~\cite{KuroiwaYamasaki20,GourTomamichel} for other arguments about ranges of resource measures.

\section{Wigner negativity and robustness}
\label{sec:negativity}
Here we discuss the relations between computable magic measures based on Wigner negativity 
and free robustness, which have not explicitly appeared in the literature before.

Consider odd prime power dimension $D = d^n$, for which the stabilizer 
formalism in terms of the discrete Wigner function is well defined.  
Detailed introductions can be found in the literature; see, e.g.,~Ref.~\cite{Veitch14:magic_rt}.
Here we are interested in the widely used magic measures defined as follows.
Define the phase space point operators as 
\begin{equation}
  A_{\boldsymbol{0}}=\frac{1}{d^{n}} \sum_{\boldsymbol{u}} T_{\boldsymbol{u}}, 
  \text{ and } A_{\boldsymbol{u}}=T_{\boldsymbol{u}} A_{\boldsymbol{0}} T_{\boldsymbol{u}}^{\dagger},
\end{equation}
where $T_{\boldsymbol{u}}$ are the discrete Heisenberg-Weyl (generalized Pauli) operators:
\begin{equation}
  T_{\boldsymbol{u}} = \omega^{-\frac{a_1 a_2}{2}} Z^{a_1} X^{a_2}, \quad 
  \boldsymbol{u}=(a_1,a_2)\in\mathbb{Z}_d\times \mathbb{Z}_d
\end{equation}
where $\omega = e^{2\pi i/d}$. For composite systems,
\begin{equation}
  T_{\left(a_{1},a_{2}\right) \oplus\left(b_{1},b_{2}\right) \oplus \cdots \oplus \left(u_{1},u_{2}\right)} 
  = T_{\left(a_{1},a_{2}\right)} \otimes T_{\left(b_{1},b_{2}\right)} 
                                                       \otimes \cdots \otimes T_{\left(u_{1}, u_{2}\right)}.
\end{equation}
Then for state $\rho$ in dimension $D$, the corresponding discrete Wigner 
quasi-probability representation is given by
\begin{equation}
  W_\rho(\boldsymbol{u}) := \frac{1}{d^n} \tr A_{\boldsymbol{u}} \rho.
\end{equation}
It is known that pure stabilizer states are precisely those pure states that have
non-negative Wigner functions (discrete Hudson's theorem) \cite{Gross:Hudson}.
This motivates us to use the negative values of the Wigner functions to measure magic.
Define $\mathcal{W}$ to be the set of all real-valued functions $v$ on phase
space points $\boldsymbol{u}$ with the normalization 
$\sum_{\boldsymbol{u}} v(\boldsymbol{u}) = 1$. This is an affine space containing 
all Wigner functions $W_\rho$. In this space, identify the convex cone of non-negative functions, 
\begin{equation}
 \mathcal{W}_+ := \{ v\in\mathcal{W} : \forall\boldsymbol{u}, v(\boldsymbol{u}) \geq 0\},
\end{equation}
which, by definition, contains all non-negative Wigner functions, in particular those $W_\sigma$ of mixtures of stabilizer states $\sigma\in\stab$.

\begin{defn}[Sum negativity]
  The \emph{sum negativity} (or Wigner negativity) of a state $\rho$ is defined as 
  \begin{align}
    \mathscr{N}(\rho) 
          &:= \sum_{\boldsymbol{u}: W_\rho(\boldsymbol{u})<0 }\left|W_\rho(\boldsymbol{u})\right| \\
          &=  \frac{1}{2}\left(\sum_{\boldsymbol{u}}\left|W_\rho(\boldsymbol{u})\right|-1\right). 
  \end{align}
\end{defn}

\begin{thm}
  \label{thm:nr}
  For all states $\rho$,
  $\mathscr{N}(\rho) \leq R(\rho)$.
\end{thm}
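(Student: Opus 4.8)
The plan is to take the decomposition that defines the free robustness and simply push it through the discrete Wigner transform, using that the transform is linear and affine and that mixtures of stabilizer states have pointwise non-negative Wigner functions. If $R(\rho)=\infty$ there is nothing to prove, so assume it is finite. Then, by definition of the free robustness (or by solving the associated linear program, whose optimum is attained since $\stab$ is compact; alternatively one works with near-optimal decompositions and takes a limit), there exist $\sigma_+,\sigma_-\in\stab$ and $s=R(\rho)\geq 0$ with
\begin{equation}
  \rho = (1+s)\,\sigma_+ - s\,\sigma_-.
\end{equation}

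The key steps then go as follows. First, apply the Wigner representation $\rho\mapsto W_\rho$, which is a linear map of Hermitian operators into the affine space $\mathcal{W}$; this gives $W_\rho(\boldsymbol{u}) = (1+s)\,W_{\sigma_+}(\boldsymbol{u}) - s\,W_{\sigma_-}(\boldsymbol{u})$ for every phase space point $\boldsymbol{u}$. Second, invoke the discrete Hudson theorem: every pure stabilizer state has a non-negative Wigner function, and by linearity so does every convex mixture of stabilizer states, so $W_{\sigma_\pm}\in\mathcal{W}_+$, i.e.\ $W_{\sigma_\pm}(\boldsymbol{u})\geq 0$ for all $\boldsymbol{u}$, with $\sum_{\boldsymbol{u}}W_{\sigma_\pm}(\boldsymbol{u})=1$. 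Third, on any point $\boldsymbol{u}$ where $W_\rho(\boldsymbol{u})<0$, rearrange the identity above and drop the non-negative term $(1+s)W_{\sigma_+}(\boldsymbol{u})$ to get $|W_\rho(\boldsymbol{u})| = s\,W_{\sigma_-}(\boldsymbol{u}) - (1+s)\,W_{\sigma_+}(\boldsymbol{u}) \leq s\,W_{\sigma_-}(\boldsymbol{u})$. Fourth, sum over the negative points and then extend the sum over all $\boldsymbol{u}$ (only adding non-negative terms):
\begin{equation}
  \mathscr{N}(\rho) = \sum_{\boldsymbol{u}:\,W_\rho(\boldsymbol{u})<0}|W_\rho(\boldsymbol{u})| \leq s\sum_{\boldsymbol{u}:\,W_\rho(\boldsymbol{u})<0}W_{\sigma_-}(\boldsymbol{u}) \leq s\sum_{\boldsymbol{u}}W_{\sigma_-}(\boldsymbol{u}) = s = R(\rho).
\end{equation}

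I do not expect a genuine obstacle here; the argument is essentially a one-line convexity estimate once the right objects are lined up. The only points meriting care are (i) that the defining decomposition exists, which is guaranteed for magic theory since the free robustness is finite and the feasible set is compact, and (ii) the elementary but essential fact that non-negativity of the Wigner function is preserved under convex mixtures, so that $\sigma_\pm\in\stab$ (a convex hull) and not merely pure stabilizer states suffices. As a side benefit, this bound feeds directly into the mana estimate quoted in Sec.~\ref{sec:magic}: since $\mathscr{M}(\rho)=\log(1+2\mathscr{N}(\rho))$, we get $\mathscr{M}(\rho)\leq\log(1+2R(\rho))\leq\log\bigl(2(1+R(\rho))\bigr)=LR(\rho)+1$.
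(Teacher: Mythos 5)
Your argument is correct and is essentially the paper's own proof: both push the optimal free-robustness decomposition $\rho=(1+s)\sigma_+-s\sigma_-$ through the (linear, normalization-preserving) Wigner map and use that $W_{\sigma_\pm}\in\mathcal{W}_+$, so that this decomposition is feasible for the minimization defining $\mathscr{N}$. The paper states this abstractly as ``the negativity program is a relaxation of the robustness program,'' whereas you carry out the equivalent pointwise estimate explicitly; there is no substantive difference.
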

\begin{proof}
The definition of the Wigner negativity is evidently equivalent to
\[
  \mathscr{N}(\rho) = \min s \text{ s.t. } W_\rho = (1+s)v - sv',\ v,v' \in \mathcal{W}_+.
\]
The optimal functions are $v \propto (W_\rho)_+ = \max\{W_\rho,0\}$
and $v' \propto (W_\rho)_- = \min\{W_\rho,0\}$. 

On the other hand, the definition of the free robustness is equivalent to
\[
  R(\rho) = \min s \text{ s.t. } W_\rho = (1+s)W_\sigma - sW_{\sigma'},\ \sigma,\sigma'\in\stab,
\]
because the Wigner function is an isomorphism, so the above condition is 
just one way to express $\rho = (1+s)\sigma - s\sigma'$.
Since all $W_\sigma,W_{\sigma'} \in \mathcal{W}_+$, the former minimization 
is a relaxation of the latter, and hence the claim follows. 
\end{proof}

Ref.~\cite{Veitch14:magic_rt} also defined an additive version of the sum negativity:
\begin{defn}[Mana]
The \emph{mana} of state $\rho$ is defined as 
$\mathscr{M}(\rho) := \log \left(\sum_{\boldsymbol{u}}\left|W_{\rho}(\boldsymbol{u})\right|\right)
                    = \log(2\mathscr{N}(\rho)+1)$.
\end{defn}

\begin{cor}
  For all states $\rho$, $\mathscr{M}(\rho)<LR(\rho)+1$.
\end{cor}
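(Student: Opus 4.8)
The plan is simply to chain the three ingredients already in hand. Recall that $\mathscr{M}(\rho) = \log(2\mathscr{N}(\rho)+1)$ by definition, that $\mathscr{N}(\rho)\le R(\rho)$ by Theorem~\ref{thm:nr}, and that $LR(\rho)=\log(1+R(\rho))$. First I would use monotonicity of $\log$ together with the bound on $\mathscr{N}$ to get $\mathscr{M}(\rho)\le\log\bigl(2R(\rho)+1\bigr)$.

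Next I would pull the factor of two out of the logarithm. Since $2R(\rho)+1 < 2R(\rho)+2 = 2\bigl(1+R(\rho)\bigr)$, strict monotonicity of $\log$ gives $\log\bigl(2R(\rho)+1\bigr) < \log 2 + \log\bigl(1+R(\rho)\bigr) = 1 + LR(\rho)$, where $\log 2 = 1$ because $\log$ is taken base $2$ throughout. Combining the two estimates yields $\mathscr{M}(\rho) < LR(\rho)+1$, as claimed.

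There is essentially no obstacle here: all of the content of the corollary is carried by Theorem~\ref{thm:nr}, the additive $+1$ in the statement is exactly the $\log 2$ produced by the factor of $2$ that relates sum negativity to mana, and the strictness comes for free from $1<2$. The only point worth remarking on is that all quantities involved are finite — which holds since magic theory satisfies finite free robustness (indeed $R(\rho)$ is bounded, e.g.\ by Theorem~\ref{thm:LR-max} in the qubit case and analogously in odd prime power dimension) — and that in the degenerate case $\rho\in\stab$, where $\mathscr{N}(\rho)=R(\rho)=0$, the claim reduces to $0<1$.
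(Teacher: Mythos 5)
Your proof is correct and follows exactly the same route as the paper: apply Theorem~\ref{thm:nr} inside $\mathscr{M}(\rho)=\log(2\mathscr{N}(\rho)+1)$ and then bound $2R(\rho)+1<2(1+R(\rho))$ to extract the $\log 2=1$. The extra remarks on finiteness and the degenerate stabilizer case are harmless but not needed.
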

\begin{proof}
Elementary calculation yields:
\begin{align*}
  \mathscr{M}(\rho) &=    \log (2 \mathscr{N}(\rho)+1) \\ 
                    &\leq \log (2 R(\rho)+1)           \\ 
                    &<    \log (2 R(\rho)+2) = LR(\rho)+1,
\end{align*}
by Theorem \ref{thm:nr}.
\end{proof}

Therefore, each upper bound on $LR$ also give bounds on mana.

\zwnew{
This positive Wigner simplex contains the stabilizer polytope and is 
geometrically simpler, so the associated measures are easier to compute. 
Note that Theorem 8.4 of Ref.~\cite{GrossNezamiWalter17} gives a lower bound of $\mathscr{M}$ in terms of $\ddmin$ that works for small $\mathscr{M}$.   In general, can we lower bound $\mathscr{M}$ in terms of other smaller measures associated with STAB?
}

Also note that the negativity and free robustness measures are related to the 
runtimes or other costs of quasiprobability methods of classical simulation \cite{PashayanWallmanBartlett15:negativity,HowardCampbell17}.


\section{Clifford+$CCZ$ circuits, \protect\\ cubic Boolean functions, and SPT phases}
\label{app:ccz}

Here we provide technical details and extensive discussions on how to employ coding theory tools to analyze the magic of Clifford+$CCZ$ circuits of certain structures which correspond to 2D SPT phases of interest, supplementing Sec.~\ref{sec:spt}.

\red{Below we simply denote by $RM(r,n)$ the set of degree-$r$ Boolean functions on $n$ binary variables.} Let $f\in RM(3,n)$ be a cubic Boolean function with $n$ variables.
Suppose that there is a set of indices $\mathcal{X}=\{x_{c_1},\cdots,x_{c_s}\}$ such that every cubic term
involves one variable from $\mathcal{X}$, that is, $f$ takes the following form
\begin{equation}
    f(x) = \sum_{i=1}^s x_{c_i}q_i + q, \label{eq:class}
\end{equation}
where $q_i\in RM(2,n-s)$ is the quadratic function associated with $x_{c_i}$ so that $\sum_{i\in\{1,\cdots,s\}} x_{c_i}q_i$ is the cubic part of $f$, and $q$ is quadratic (containing linear terms as well).  We call (\ref{eq:class}) an order-$s$ decomposition of $f$.  Given some quadratic function $q = xQx^T$, let $2h_{q}$ be the rank of the \red{symmetric} matrix $Q+Q^T$.
Following the arguments in Section 4 of Ref.~\cite{Cubic_book}, we see that there exists a quadratic function $\tilde{q}$ (given in Theorem 3) such that 
\begin{align}
   \mathrm{wt}(f+\tilde{q}) &= 2^{n-1} - \sum_{r\in\mathbb{F}_2^s} 2^{n-s-1-h_{\sum_{i=1}^s r_i q_i}} \label{eq:ineq1}\\
   &\leq 2^{n-1} - \sum_{r\in\mathbb{F}_2^s} 2^{n-s-1-\sum_{i=1}^s r_i h_{q_i}} \label{eq:ineq} \\
   &= 2^{n-1}-2^{n-s-1} \prod_{i=1}^{s}\left(1+2^{-h_{q_{i}}}\right).  \label{eq:lattice_dmin}
\end{align}
Therefore, given an order-$s$ decomposition of $f$,  for its nonquadraticity, we have the following bound
\begin{equation}
    \chi(f) \leq 2^{n-1}-2^{n-s-1} \prod_{i=1}^{s}\left(1+2^{-h_{q_{i}}}\right),
\end{equation}
and thus for the corresponding state $\ket{\Psi_f}$, we have
\begin{align}
   \ddmax(\Psi_f) = \ddmin(\Psi_f) &\leq  -2\log(1-2^{1-n}\chi(f)) \\
    &\leq 2s-2\log\prod_{i=1}^{s}\left(1+2^{-h_{q_{i}}}\right),
\end{align}
by (\ref{eq:ddmin_chi}).

We now apply the above general technique to hypergraph states on 2D triangulated lattices (depicted in Fig.~\ref{fig:lattice2}) given by (\ref{eq:cczstate}).
\begin{figure}[t]
 \includegraphics[width=\columnwidth]{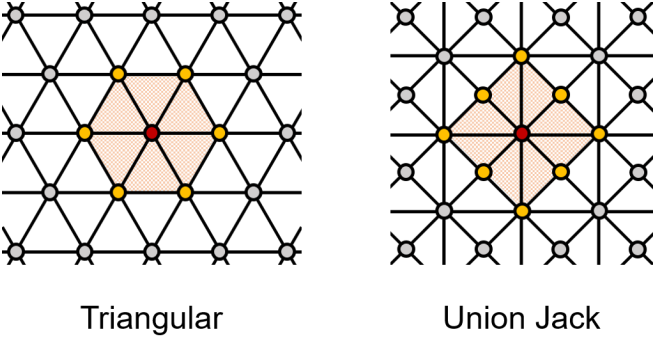}
 \caption{2D triangulated lattices.  For each lattice, the shaded area is the unit cell we choose; the red vertex is the center vertex that only belongs to its cell; the yellow vertices are shared with neighboring cells. The Boolean functions are decomposed  according  to the cells.  \label{fig:lattice2}}
\end{figure}
As we shall see, the decomposition of the cubic characteristic functions is based on the lattice structure.  

First consider the triangular lattice.   As illustrated in Fig.~\ref{fig:lattice2}, consider the lattice to be a tiling of unit cells, each of which is a hexagon (such as the shaded one). 
Label the center vertex of each cell as $x_i$, which is involved in 6 CCZ gates (cubic terms) with the 6 boundary vertices ($x_{i1},\cdots, x_{i6}$) of the cell. So the characteristic function can be expressed as
\begin{align}
    f_{\mathrm{Tri}}(x) =& \sum_i x_i(x_{i1}x_{i2}+x_{i2}x_{i3}+x_{i3}x_{i4}\nonumber\\&+x_{i4}x_{i5}+x_{i5}x_{i6}+x_{i6}x_{i1}) + q,   \label{eq:triangle_decomp}
\end{align}
where $q$ is quadratic, with further constraints that each boundary vertex is shared by 3 neighboring hexagons, so the corresponding variables are actually the same.  Also, boundary vertices never serve as a center vertex.  For simplicity, we make the minor assumption that the lattice is composed of complete cells.   If we consider periodic boundary conditions then, for $m$ cells, there are $6m/3 = 2m$ boundary vertices in total.  Therefore, we have $n = m+2m = 3m$, so (\ref{eq:triangle_decomp}) gives an order-$(s = m = n/3)$ decomposition.  Also note that $q_i = x_{i1}x_{i2}+x_{i2}x_{i3}+x_{i3}x_{i4}+x_{i4}x_{i5}+x_{i5}x_{i6}+x_{i6}x_{i1}$, so, correspondingly
\begin{equation}
    Q_i = \begin{pmatrix}
0 & 1 & 0 & 0 & 0 & 0\\
0 & 0 & 1 & 0 & 0 & 0\\
0 & 0 & 0 & 1 & 0 & 0\\
0 & 0 & 0 & 0 & 1 & 0\\
0 & 0 & 0 & 0 & 0 & 1\\
1 & 0 & 0 & 0 & 0 & 0\\
\end{pmatrix},
\end{equation}
and hence 
\begin{equation}
    h_{q_i} = \frac{1}{2}\mathrm{rank}(Q_i + Q_i^T) = 3.
\end{equation}
Subtituting everything into (\ref{eq:lattice_dmin}), we obtain
\begin{align}
    \ddmax(\Psi_{f_{\mathrm{Tri}}})=  \ddmin(\Psi_{f_{\mathrm{Tri}}}) &\leq \left(\frac{2}{3} - \frac{2}{3}\log\frac{9}{8}\right)n\nonumber\\
    & \lesssim 0.56n.   \label{eq:tri}
\end{align}
For open boundary conditions, the difference is that there are $O(\sqrt{n})$ boundary vertices that are shared by less than 3 cells, which leads to $s = n/3 - O(\sqrt{n})$. As a result, the bound is modified by $-O(\sqrt{n})$, and (\ref{eq:tri}) still holds.

For the Union Jack lattice, we follow a similar procedure.  Now we define the unit cell to be a square involving 9 vertices as illustrated in Fig.~\ref{fig:lattice2}, each of which has 1 center vertex involved in 8 CCZ gates (cubic terms) with the 8 boundary vertices shared with neighboring cells. Again, assume that the lattice is consisted of complete cells. Consider periodic boundary conditions.  Note that among the 8 boundary vertices, 4 in the corner are shared by 4 cells, and 4 on the edge are shared by 2 cells. So for $m$ cells, there are $4m/4 + 4m/2 = 3m$ boundary vertices in total. So $n = m + 3m = 4m$ and thus this cell structure gives $s = n/4$.  It can also be verified that $h_{q_i} = 4$. So, for the corresponding states $\ket{\Psi_{f_{\mathrm{UJ}}}}$, we have
\begin{align}
    \ddmax(\Psi_{f_{\mathrm{UJ}}})=  \ddmin(\Psi_{f_{\mathrm{UJ}}}) &\leq \left(\frac{1}{2} - \frac{1}{2}\log\frac{17}{16}\right)n\nonumber\\
    & \lesssim 0.46n,   \label{eq:uj}
\end{align}
by (\ref{eq:lattice_dmin}).
Similarly, the open boundary conditions lead to a $-O(\sqrt{n})$ correction.
Note that for the Union Jack lattice, another natural definition of the unit cell is the small square with 1 center vertex and 4 corner vertices. However, it can be verified that this cell structure gives $s = \frac12$, which leads to a bound worse than (\ref{eq:uj}).

Finally, we note that we expect the constant factors in the above bounds to be loose, although they already imply that the many-body magic of corresponding SPT phases are much weaker than typical states.  Below we outline two promising paths towards improved bounds: (i) In the above method, the inequality (\ref{eq:ineq}) can be loose, because the shared terms in the $q_i$'s corresponding to neighboring cells will be canceled out in the summation, which leads to the general effect that $h_{\sum q_i} < \sum h_{q_i}$.  Therefore, a direct possibility is a more refined calculation of (\ref{eq:ineq1}) by analyzing $h_{\sum_{i=1}^s r_i q_i}$.
(ii) Show that the characteristic cubic functions are \emph{separable} \cite{Cubic_book, Kolokotronis09:quadratic}, meaning that in the decomposition with the smallest possible $\mathcal{X}$ under all affine transformations of the variables, each cubic term involves exactly one variable from $\mathcal{X}$.
Then, by Theorem 4 of Ref.~\cite{Cubic_book}, the maximum nonquadracity of
such separable cubic functions is 
\begin{equation}
    \max_{{f\in RM(3,n), f~\text{separable}}}\chi(f) = 2^{n-1} - \frac{1}{2}6^{\lfloor n/3 \rfloor},
\end{equation}
leading to the bound for corresponding third-order hypergraph states $\ket{\hat\Psi}$:
\begin{equation}
  \ddmax(\hat\Psi)=\ddmin(\hat\Psi)\leq \left(2-\frac{2}{3}\log 6\right)n \lesssim 0.28n, 
  \label{eq:n/3_upperbound}
\end{equation}
which is already attained by 
$CCZ^{\otimes \frac{n}{3}}\ket{{+}\ldots{+}}$ without interesting entanglement.  Note that most cubic functions studied before are indeed separable \cite{Kasami,KASAMI1976380,Cubic_book}, so the above strong upper bound may hold quite generally for third-order hypergraph states. We believe that the above cases indeed have separable characteristic functions since the cubic terms are highly regular.

\bibliography{t}

\begin{thebibliography}{105}%
\makeatletter
\providecommand \@ifxundefined [1]{%
 \@ifx{#1\undefined}
}%
\providecommand \@ifnum [1]{%
 \ifnum #1\expandafter \@firstoftwo
 \else \expandafter \@secondoftwo
 \fi
}%
\providecommand \@ifx [1]{%
 \ifx #1\expandafter \@firstoftwo
 \else \expandafter \@secondoftwo
 \fi
}%
\providecommand \natexlab [1]{#1}%
\providecommand \enquote  [1]{``#1''}%
\providecommand \bibnamefont  [1]{#1}%
\providecommand \bibfnamefont [1]{#1}%
\providecommand \citenamefont [1]{#1}%
\providecommand \href@noop [0]{\@secondoftwo}%
\providecommand \href [0]{\begingroup \@sanitize@url \@href}%
\providecommand \@href[1]{\@@startlink{#1}\@@href}%
\providecommand \@@href[1]{\endgroup#1\@@endlink}%
\providecommand \@sanitize@url [0]{\catcode `\\12\catcode `\$12\catcode
  `\&12\catcode `\#12\catcode `\^12\catcode `\_12\catcode `\%12\relax}%
\providecommand \@@startlink[1]{}%
\providecommand \@@endlink[0]{}%
\providecommand \url  [0]{\begingroup\@sanitize@url \@url }%
\providecommand \@url [1]{\endgroup\@href {#1}{\urlprefix }}%
\providecommand \urlprefix  [0]{URL }%
\providecommand \Eprint [0]{\href }%
\providecommand \doibase [0]{http://dx.doi.org/}%
\providecommand \selectlanguage [0]{\@gobble}%
\providecommand \bibinfo  [0]{\@secondoftwo}%
\providecommand \bibfield  [0]{\@secondoftwo}%
\providecommand \translation [1]{[#1]}%
\providecommand \BibitemOpen [0]{}%
\providecommand \bibitemStop [0]{}%
\providecommand \bibitemNoStop [0]{.\EOS\space}%
\providecommand \EOS [0]{\spacefactor3000\relax}%
\providecommand \BibitemShut  [1]{\csname bibitem#1\endcsname}%
\let\auto@bib@innerbib\@empty
\bibitem [{\citenamefont {{Gottesman}}(1997)}]{Gottesman97:thesis}%
  \BibitemOpen
  \bibfield  {author} {\bibinfo {author} {\bibfnamefont {Daniel}\ \bibnamefont
  {{Gottesman}}},\ }\emph {\bibinfo {title} {{Stabilizer codes and quantum
  error correction}}},\ \href@noop {} {Ph.D. thesis},\ \bibinfo  {school}
  {California Institute of Technology} (\bibinfo {year} {1997})\BibitemShut
  {NoStop}%
\bibitem [{\citenamefont {{Gottesman}}(1998)}]{Gottesman98}%
  \BibitemOpen
  \bibfield  {author} {\bibinfo {author} {\bibfnamefont {Daniel}\ \bibnamefont
  {{Gottesman}}},\ }\bibfield  {title} {\enquote {\bibinfo {title} {{The
  Heisenberg Representation of Quantum Computers}},}\ }\href@noop {} {\bibfield
   {journal} {\bibinfo  {journal} {arXiv e-prints}\ ,\ \bibinfo {eid}
  {quant-ph/9807006}} (\bibinfo {year} {1998})},\ \Eprint
  {http://arxiv.org/abs/quant-ph/9807006} {arXiv:quant-ph/9807006 [quant-ph]}
  \BibitemShut {NoStop}%
\bibitem [{\citenamefont {Nielsen}\ and\ \citenamefont
  {Chuang}(2011)}]{NielsenChuang}%
  \BibitemOpen
  \bibfield  {author} {\bibinfo {author} {\bibfnamefont {Michael~A.}\
  \bibnamefont {Nielsen}}\ and\ \bibinfo {author} {\bibfnamefont {Isaac~L.}\
  \bibnamefont {Chuang}},\ }\href@noop {} {\emph {\bibinfo {title} {Quantum
  Computation and Quantum Information: 10th Anniversary Edition}}},\ \bibinfo
  {edition} {10th}\ ed.\ (\bibinfo  {publisher} {Cambridge University Press},\
  \bibinfo {address} {New York, NY, USA},\ \bibinfo {year} {2011})\BibitemShut
  {NoStop}%
\bibitem [{\citenamefont {Veitch}\ \emph {et~al.}(2014)\citenamefont {Veitch},
  \citenamefont {Mousavian}, \citenamefont {Gottesman},\ and\ \citenamefont
  {Emerson}}]{Veitch14:magic_rt}%
  \BibitemOpen
  \bibfield  {author} {\bibinfo {author} {\bibfnamefont {Victor}\ \bibnamefont
  {Veitch}}, \bibinfo {author} {\bibfnamefont {S~A~Hamed}\ \bibnamefont
  {Mousavian}}, \bibinfo {author} {\bibfnamefont {Daniel}\ \bibnamefont
  {Gottesman}}, \ and\ \bibinfo {author} {\bibfnamefont {Joseph}\ \bibnamefont
  {Emerson}},\ }\bibfield  {title} {\enquote {\bibinfo {title} {The resource
  theory of stabilizer quantum computation},}\ }\href
  {http://stacks.iop.org/1367-2630/16/i=1/a=013009} {\bibfield  {journal}
  {\bibinfo  {journal} {New J. Phys.}\ }\textbf {\bibinfo {volume} {16}},\
  \bibinfo {pages} {013009} (\bibinfo {year} {2014})}\BibitemShut {NoStop}%
\bibitem [{\citenamefont {Zeng}\ \emph {et~al.}(2011)\citenamefont {Zeng},
  \citenamefont {Cross},\ and\ \citenamefont {Chuang}}]{6006592}%
  \BibitemOpen
  \bibfield  {author} {\bibinfo {author} {\bibfnamefont {Bei}\ \bibnamefont
  {Zeng}}, \bibinfo {author} {\bibfnamefont {Andrew}\ \bibnamefont {Cross}}, \
  and\ \bibinfo {author} {\bibfnamefont {Isaac~L.}\ \bibnamefont {Chuang}},\
  }\bibfield  {title} {\enquote {\bibinfo {title} {Transversality versus
  universality for additive quantum codes},}\ }\href {\doibase
  10.1109/TIT.2011.2161917} {\bibfield  {journal} {\bibinfo  {journal} {IEEE
  Transactions on Information Theory}\ }\textbf {\bibinfo {volume} {57}},\
  \bibinfo {pages} {6272--6284} (\bibinfo {year} {2011})}\BibitemShut {NoStop}%
\bibitem [{\citenamefont {Bravyi}\ \emph {et~al.}(2016)\citenamefont {Bravyi},
  \citenamefont {Smith},\ and\ \citenamefont
  {Smolin}}]{BravyiSmithSmolin16:stabrank}%
  \BibitemOpen
  \bibfield  {author} {\bibinfo {author} {\bibfnamefont {Sergey}\ \bibnamefont
  {Bravyi}}, \bibinfo {author} {\bibfnamefont {Graeme}\ \bibnamefont {Smith}},
  \ and\ \bibinfo {author} {\bibfnamefont {John~A.}\ \bibnamefont {Smolin}},\
  }\bibfield  {title} {\enquote {\bibinfo {title} {Trading classical and
  quantum computational resources},}\ }\href {\doibase
  10.1103/PhysRevX.6.021043} {\bibfield  {journal} {\bibinfo  {journal} {Phys.
  Rev. X}\ }\textbf {\bibinfo {volume} {6}},\ \bibinfo {pages} {021043}
  (\bibinfo {year} {2016})}\BibitemShut {NoStop}%
\bibitem [{\citenamefont {Bravyi}\ \emph {et~al.}(2019)\citenamefont {Bravyi},
  \citenamefont {Browne}, \citenamefont {Calpin}, \citenamefont {Campbell},
  \citenamefont {Gosset},\ and\ \citenamefont
  {Howard}}]{Bravyi2019simulationofquantum}%
  \BibitemOpen
  \bibfield  {author} {\bibinfo {author} {\bibfnamefont {Sergey}\ \bibnamefont
  {Bravyi}}, \bibinfo {author} {\bibfnamefont {Dan}\ \bibnamefont {Browne}},
  \bibinfo {author} {\bibfnamefont {Padraic}\ \bibnamefont {Calpin}}, \bibinfo
  {author} {\bibfnamefont {Earl}\ \bibnamefont {Campbell}}, \bibinfo {author}
  {\bibfnamefont {David}\ \bibnamefont {Gosset}}, \ and\ \bibinfo {author}
  {\bibfnamefont {Mark}\ \bibnamefont {Howard}},\ }\bibfield  {title} {\enquote
  {\bibinfo {title} {Simulation of quantum circuits by low-rank stabilizer
  decompositions},}\ }\href {\doibase 10.22331/q-2019-09-02-181} {\bibfield
  {journal} {\bibinfo  {journal} {{Quantum}}\ }\textbf {\bibinfo {volume}
  {3}},\ \bibinfo {pages} {181} (\bibinfo {year} {2019})}\BibitemShut {NoStop}%
\bibitem [{\citenamefont {Howard}\ and\ \citenamefont
  {Campbell}(2017)}]{HowardCampbell17}%
  \BibitemOpen
  \bibfield  {author} {\bibinfo {author} {\bibfnamefont {Mark}\ \bibnamefont
  {Howard}}\ and\ \bibinfo {author} {\bibfnamefont {Earl}\ \bibnamefont
  {Campbell}},\ }\bibfield  {title} {\enquote {\bibinfo {title} {Application of
  a resource theory for magic states to fault-tolerant quantum computing},}\
  }\href {\doibase 10.1103/PhysRevLett.118.090501} {\bibfield  {journal}
  {\bibinfo  {journal} {Phys. Rev. Lett.}\ }\textbf {\bibinfo {volume} {118}},\
  \bibinfo {pages} {090501} (\bibinfo {year} {2017})}\BibitemShut {NoStop}%
\bibitem [{\citenamefont {Wang}\ \emph {et~al.}(2019)\citenamefont {Wang},
  \citenamefont {Wilde},\ and\ \citenamefont {Su}}]{WangWilde:magic-channel}%
  \BibitemOpen
  \bibfield  {author} {\bibinfo {author} {\bibfnamefont {Xin}\ \bibnamefont
  {Wang}}, \bibinfo {author} {\bibfnamefont {Mark~M}\ \bibnamefont {Wilde}}, \
  and\ \bibinfo {author} {\bibfnamefont {Yuan}\ \bibnamefont {Su}},\ }\bibfield
   {title} {\enquote {\bibinfo {title} {Quantifying the magic of quantum
  channels},}\ }\href@noop {} {\bibfield  {journal} {\bibinfo  {journal} {New
  Journal of Physics}\ }\textbf {\bibinfo {volume} {21}},\ \bibinfo {pages}
  {103002} (\bibinfo {year} {2019})}\BibitemShut {NoStop}%
\bibitem [{\citenamefont {Seddon}\ and\ \citenamefont
  {Campbell}(2019)}]{SeddonCampbell19:magic_op}%
  \BibitemOpen
  \bibfield  {author} {\bibinfo {author} {\bibfnamefont {James~R.}\
  \bibnamefont {Seddon}}\ and\ \bibinfo {author} {\bibfnamefont {Earl~T.}\
  \bibnamefont {Campbell}},\ }\bibfield  {title} {\enquote {\bibinfo {title}
  {Quantifying magic for multi-qubit operations},}\ }\href {\doibase
  10.1098/rspa.2019.0251} {\bibfield  {journal} {\bibinfo  {journal}
  {Proceedings of the Royal Society A: Mathematical, Physical and Engineering
  Sciences}\ }\textbf {\bibinfo {volume} {475}},\ \bibinfo {pages} {20190251}
  (\bibinfo {year} {2019})}\BibitemShut {NoStop}%
\bibitem [{\citenamefont {Seddon}\ \emph {et~al.}(2021)\citenamefont {Seddon},
  \citenamefont {Regula}, \citenamefont {Pashayan}, \citenamefont {Ouyang},\
  and\ \citenamefont {Campbell}}]{Seddon20:speedups}%
  \BibitemOpen
  \bibfield  {author} {\bibinfo {author} {\bibfnamefont {James~R.}\
  \bibnamefont {Seddon}}, \bibinfo {author} {\bibfnamefont {Bartosz}\
  \bibnamefont {Regula}}, \bibinfo {author} {\bibfnamefont {Hakop}\
  \bibnamefont {Pashayan}}, \bibinfo {author} {\bibfnamefont {Yingkai}\
  \bibnamefont {Ouyang}}, \ and\ \bibinfo {author} {\bibfnamefont {Earl~T.}\
  \bibnamefont {Campbell}},\ }\bibfield  {title} {\enquote {\bibinfo {title}
  {Quantifying quantum speedups: Improved classical simulation from tighter
  magic monotones},}\ }\href {\doibase 10.1103/PRXQuantum.2.010345} {\bibfield
  {journal} {\bibinfo  {journal} {PRX Quantum}\ }\textbf {\bibinfo {volume}
  {2}},\ \bibinfo {pages} {010345} (\bibinfo {year} {2021})}\BibitemShut
  {NoStop}%
\bibitem [{\citenamefont {Vedral}\ \emph {et~al.}(1997)\citenamefont {Vedral},
  \citenamefont {Plenio}, \citenamefont {Rippin},\ and\ \citenamefont
  {Knight}}]{PhysRevLett.78.2275}%
  \BibitemOpen
  \bibfield  {author} {\bibinfo {author} {\bibfnamefont {V.}~\bibnamefont
  {Vedral}}, \bibinfo {author} {\bibfnamefont {M.~B.}\ \bibnamefont {Plenio}},
  \bibinfo {author} {\bibfnamefont {M.~A.}\ \bibnamefont {Rippin}}, \ and\
  \bibinfo {author} {\bibfnamefont {P.~L.}\ \bibnamefont {Knight}},\ }\bibfield
   {title} {\enquote {\bibinfo {title} {Quantifying entanglement},}\ }\href
  {\doibase 10.1103/PhysRevLett.78.2275} {\bibfield  {journal} {\bibinfo
  {journal} {Phys. Rev. Lett.}\ }\textbf {\bibinfo {volume} {78}},\ \bibinfo
  {pages} {2275--2279} (\bibinfo {year} {1997})}\BibitemShut {NoStop}%
\bibitem [{\citenamefont {Horodecki}\ \emph {et~al.}(2009)\citenamefont
  {Horodecki}, \citenamefont {Horodecki}, \citenamefont {Horodecki},\ and\
  \citenamefont {Horodecki}}]{RevModPhys.81.865}%
  \BibitemOpen
  \bibfield  {author} {\bibinfo {author} {\bibfnamefont {Ryszard}\ \bibnamefont
  {Horodecki}}, \bibinfo {author} {\bibfnamefont {Pawe\l{}}\ \bibnamefont
  {Horodecki}}, \bibinfo {author} {\bibfnamefont {Micha\l{}}\ \bibnamefont
  {Horodecki}}, \ and\ \bibinfo {author} {\bibfnamefont {Karol}\ \bibnamefont
  {Horodecki}},\ }\bibfield  {title} {\enquote {\bibinfo {title} {Quantum
  entanglement},}\ }\href {\doibase 10.1103/RevModPhys.81.865} {\bibfield
  {journal} {\bibinfo  {journal} {Rev. Mod. Phys.}\ }\textbf {\bibinfo {volume}
  {81}},\ \bibinfo {pages} {865--942} (\bibinfo {year} {2009})}\BibitemShut
  {NoStop}%
\bibitem [{\citenamefont {Baumgratz}\ \emph {et~al.}(2014)\citenamefont
  {Baumgratz}, \citenamefont {Cramer},\ and\ \citenamefont
  {Plenio}}]{PhysRevLett.113.140401}%
  \BibitemOpen
  \bibfield  {author} {\bibinfo {author} {\bibfnamefont {T.}~\bibnamefont
  {Baumgratz}}, \bibinfo {author} {\bibfnamefont {M.}~\bibnamefont {Cramer}}, \
  and\ \bibinfo {author} {\bibfnamefont {M.~B.}\ \bibnamefont {Plenio}},\
  }\bibfield  {title} {\enquote {\bibinfo {title} {Quantifying coherence},}\
  }\href {\doibase 10.1103/PhysRevLett.113.140401} {\bibfield  {journal}
  {\bibinfo  {journal} {Phys. Rev. Lett.}\ }\textbf {\bibinfo {volume} {113}},\
  \bibinfo {pages} {140401} (\bibinfo {year} {2014})}\BibitemShut {NoStop}%
\bibitem [{\citenamefont {Streltsov}\ \emph {et~al.}(2017)\citenamefont
  {Streltsov}, \citenamefont {Adesso},\ and\ \citenamefont
  {Plenio}}]{RevModPhys.89.041003}%
  \BibitemOpen
  \bibfield  {author} {\bibinfo {author} {\bibfnamefont {Alexander}\
  \bibnamefont {Streltsov}}, \bibinfo {author} {\bibfnamefont {Gerardo}\
  \bibnamefont {Adesso}}, \ and\ \bibinfo {author} {\bibfnamefont {Martin~B.}\
  \bibnamefont {Plenio}},\ }\bibfield  {title} {\enquote {\bibinfo {title}
  {Colloquium: Quantum coherence as a resource},}\ }\href {\doibase
  10.1103/RevModPhys.89.041003} {\bibfield  {journal} {\bibinfo  {journal}
  {Rev. Mod. Phys.}\ }\textbf {\bibinfo {volume} {89}},\ \bibinfo {pages}
  {041003} (\bibinfo {year} {2017})}\BibitemShut {NoStop}%
\bibitem [{\citenamefont {Heinrich}\ and\ \citenamefont
  {Gross}(2019)}]{Heinrich2019robustnessofmagic}%
  \BibitemOpen
  \bibfield  {author} {\bibinfo {author} {\bibfnamefont {Markus}\ \bibnamefont
  {Heinrich}}\ and\ \bibinfo {author} {\bibfnamefont {David}\ \bibnamefont
  {Gross}},\ }\bibfield  {title} {\enquote {\bibinfo {title} {Robustness of
  {M}agic and {S}ymmetries of the {S}tabiliser {P}olytope},}\ }\href {\doibase
  10.22331/q-2019-04-08-132} {\bibfield  {journal} {\bibinfo  {journal}
  {{Quantum}}\ }\textbf {\bibinfo {volume} {3}},\ \bibinfo {pages} {132}
  (\bibinfo {year} {2019})}\BibitemShut {NoStop}%
\bibitem [{\citenamefont {Liu}\ \emph {et~al.}(2019)\citenamefont {Liu},
  \citenamefont {Bu},\ and\ \citenamefont {Takagi}}]{LiuBuTakagi19:oneshot}%
  \BibitemOpen
  \bibfield  {author} {\bibinfo {author} {\bibfnamefont {Zi-Wen}\ \bibnamefont
  {Liu}}, \bibinfo {author} {\bibfnamefont {Kaifeng}\ \bibnamefont {Bu}}, \
  and\ \bibinfo {author} {\bibfnamefont {Ryuji}\ \bibnamefont {Takagi}},\
  }\bibfield  {title} {\enquote {\bibinfo {title} {One-shot operational quantum
  resource theory},}\ }\href {\doibase 10.1103/PhysRevLett.123.020401}
  {\bibfield  {journal} {\bibinfo  {journal} {Phys. Rev. Lett.}\ }\textbf
  {\bibinfo {volume} {123}},\ \bibinfo {pages} {020401} (\bibinfo {year}
  {2019})}\BibitemShut {NoStop}%
\bibitem [{\citenamefont {Wang}\ \emph {et~al.}(2020)\citenamefont {Wang},
  \citenamefont {Wilde},\ and\ \citenamefont {Su}}]{WangWildeSu}%
  \BibitemOpen
  \bibfield  {author} {\bibinfo {author} {\bibfnamefont {Xin}\ \bibnamefont
  {Wang}}, \bibinfo {author} {\bibfnamefont {Mark~M.}\ \bibnamefont {Wilde}}, \
  and\ \bibinfo {author} {\bibfnamefont {Yuan}\ \bibnamefont {Su}},\ }\bibfield
   {title} {\enquote {\bibinfo {title} {Efficiently computable bounds for magic
  state distillation},}\ }\href {\doibase 10.1103/PhysRevLett.124.090505}
  {\bibfield  {journal} {\bibinfo  {journal} {Phys. Rev. Lett.}\ }\textbf
  {\bibinfo {volume} {124}},\ \bibinfo {pages} {090505} (\bibinfo {year}
  {2020})}\BibitemShut {NoStop}%
\bibitem [{\citenamefont {Heimendahl}\ \emph {et~al.}(2021)\citenamefont
  {Heimendahl}, \citenamefont {Montealegre-Mora}, \citenamefont {Vallentin},\
  and\ \citenamefont {Gross}}]{Heimendahl2021stabilizerextentis}%
  \BibitemOpen
  \bibfield  {author} {\bibinfo {author} {\bibfnamefont {Arne}\ \bibnamefont
  {Heimendahl}}, \bibinfo {author} {\bibfnamefont {Felipe}\ \bibnamefont
  {Montealegre-Mora}}, \bibinfo {author} {\bibfnamefont {Frank}\ \bibnamefont
  {Vallentin}}, \ and\ \bibinfo {author} {\bibfnamefont {David}\ \bibnamefont
  {Gross}},\ }\bibfield  {title} {\enquote {\bibinfo {title} {Stabilizer extent
  is not multiplicative},}\ }\href {\doibase 10.22331/q-2021-02-24-400}
  {\bibfield  {journal} {\bibinfo  {journal} {{Quantum}}\ }\textbf {\bibinfo
  {volume} {5}},\ \bibinfo {pages} {400} (\bibinfo {year} {2021})}\BibitemShut
  {NoStop}%
\bibitem [{\citenamefont {Wootters}(1987)}]{WOOTTERS19871}%
  \BibitemOpen
  \bibfield  {author} {\bibinfo {author} {\bibfnamefont {William~K}\
  \bibnamefont {Wootters}},\ }\bibfield  {title} {\enquote {\bibinfo {title} {A
  wigner-function formulation of finite-state quantum mechanics},}\ }\href
  {\doibase https://doi.org/10.1016/0003-4916(87)90176-X} {\bibfield  {journal}
  {\bibinfo  {journal} {Annals of Physics}\ }\textbf {\bibinfo {volume}
  {176}},\ \bibinfo {pages} {1 -- 21} (\bibinfo {year} {1987})}\BibitemShut
  {NoStop}%
\bibitem [{\citenamefont {Gross}(2006)}]{Gross:Hudson}%
  \BibitemOpen
  \bibfield  {author} {\bibinfo {author} {\bibfnamefont {D.}~\bibnamefont
  {Gross}},\ }\bibfield  {title} {\enquote {\bibinfo {title} {Hudson’s
  theorem for finite-dimensional quantum systems},}\ }\href {\doibase
  10.1063/1.2393152} {\bibfield  {journal} {\bibinfo  {journal} {Journal of
  Mathematical Physics}\ }\textbf {\bibinfo {volume} {47}},\ \bibinfo {pages}
  {122107} (\bibinfo {year} {2006})}\BibitemShut {NoStop}%
\bibitem [{\citenamefont {Veitch}\ \emph {et~al.}(2012)\citenamefont {Veitch},
  \citenamefont {Ferrie}, \citenamefont {Gross},\ and\ \citenamefont
  {Emerson}}]{Veitch12:negativity}%
  \BibitemOpen
  \bibfield  {author} {\bibinfo {author} {\bibfnamefont {Victor}\ \bibnamefont
  {Veitch}}, \bibinfo {author} {\bibfnamefont {Christopher}\ \bibnamefont
  {Ferrie}}, \bibinfo {author} {\bibfnamefont {David}\ \bibnamefont {Gross}}, \
  and\ \bibinfo {author} {\bibfnamefont {Joseph}\ \bibnamefont {Emerson}},\
  }\bibfield  {title} {\enquote {\bibinfo {title} {Negative quasi-probability
  as a resource for quantum computation},}\ }\href {\doibase
  10.1088/1367-2630/14/11/113011} {\bibfield  {journal} {\bibinfo  {journal}
  {New Journal of Physics}\ }\textbf {\bibinfo {volume} {14}},\ \bibinfo
  {pages} {113011} (\bibinfo {year} {2012})}\BibitemShut {NoStop}%
\bibitem [{\citenamefont {Raussendorf}\ and\ \citenamefont
  {Briegel}(2001)}]{RaussendorfBriegel01}%
  \BibitemOpen
  \bibfield  {author} {\bibinfo {author} {\bibfnamefont {Robert}\ \bibnamefont
  {Raussendorf}}\ and\ \bibinfo {author} {\bibfnamefont {Hans~J.}\ \bibnamefont
  {Briegel}},\ }\bibfield  {title} {\enquote {\bibinfo {title} {A one-way
  quantum computer},}\ }\href {\doibase 10.1103/PhysRevLett.86.5188} {\bibfield
   {journal} {\bibinfo  {journal} {Phys. Rev. Lett.}\ }\textbf {\bibinfo
  {volume} {86}},\ \bibinfo {pages} {5188--5191} (\bibinfo {year}
  {2001})}\BibitemShut {NoStop}%
\bibitem [{\citenamefont {Raussendorf}\ \emph {et~al.}(2003)\citenamefont
  {Raussendorf}, \citenamefont {Browne},\ and\ \citenamefont
  {Briegel}}]{RaussendorfBrowneBriegel03}%
  \BibitemOpen
  \bibfield  {author} {\bibinfo {author} {\bibfnamefont {Robert}\ \bibnamefont
  {Raussendorf}}, \bibinfo {author} {\bibfnamefont {Daniel~E.}\ \bibnamefont
  {Browne}}, \ and\ \bibinfo {author} {\bibfnamefont {Hans~J.}\ \bibnamefont
  {Briegel}},\ }\bibfield  {title} {\enquote {\bibinfo {title}
  {Measurement-based quantum computation on cluster states},}\ }\href {\doibase
  10.1103/PhysRevA.68.022312} {\bibfield  {journal} {\bibinfo  {journal} {Phys.
  Rev. A}\ }\textbf {\bibinfo {volume} {68}},\ \bibinfo {pages} {022312}
  (\bibinfo {year} {2003})}\BibitemShut {NoStop}%
\bibitem [{\citenamefont {Wei}(2018)}]{Wei18:mbqc}%
  \BibitemOpen
  \bibfield  {author} {\bibinfo {author} {\bibfnamefont {Tzu-Chieh}\
  \bibnamefont {Wei}},\ }\bibfield  {title} {\enquote {\bibinfo {title}
  {Quantum spin models for measurement-based quantum computation},}\ }\href
  {\doibase 10.1080/23746149.2018.1461026} {\bibfield  {journal} {\bibinfo
  {journal} {Advances in Physics: X}\ }\textbf {\bibinfo {volume} {3}},\
  \bibinfo {pages} {1461026} (\bibinfo {year} {2018})}\BibitemShut {NoStop}%
\bibitem [{\citenamefont {Raussendorf}\ \emph {et~al.}(2019)\citenamefont
  {Raussendorf}, \citenamefont {Okay}, \citenamefont {Wang}, \citenamefont
  {Stephen},\ and\ \citenamefont {Nautrup}}]{Raussendorf19:universal}%
  \BibitemOpen
  \bibfield  {author} {\bibinfo {author} {\bibfnamefont {Robert}\ \bibnamefont
  {Raussendorf}}, \bibinfo {author} {\bibfnamefont {Cihan}\ \bibnamefont
  {Okay}}, \bibinfo {author} {\bibfnamefont {Dong-Sheng}\ \bibnamefont {Wang}},
  \bibinfo {author} {\bibfnamefont {David~T.}\ \bibnamefont {Stephen}}, \ and\
  \bibinfo {author} {\bibfnamefont {Hendrik~Poulsen}\ \bibnamefont {Nautrup}},\
  }\bibfield  {title} {\enquote {\bibinfo {title} {Computationally universal
  phase of quantum matter},}\ }\href {\doibase 10.1103/PhysRevLett.122.090501}
  {\bibfield  {journal} {\bibinfo  {journal} {Phys. Rev. Lett.}\ }\textbf
  {\bibinfo {volume} {122}},\ \bibinfo {pages} {090501} (\bibinfo {year}
  {2019})}\BibitemShut {NoStop}%
\bibitem [{\citenamefont {Gubernatis}\ \emph {et~al.}(2016)\citenamefont
  {Gubernatis}, \citenamefont {Kawashima},\ and\ \citenamefont
  {Werner}}]{gubernatis_kawashima_werner_2016}%
  \BibitemOpen
  \bibfield  {author} {\bibinfo {author} {\bibfnamefont {James}\ \bibnamefont
  {Gubernatis}}, \bibinfo {author} {\bibfnamefont {Naoki}\ \bibnamefont
  {Kawashima}}, \ and\ \bibinfo {author} {\bibfnamefont {Philipp}\ \bibnamefont
  {Werner}},\ }\href {\doibase 10.1017/CBO9780511902581} {\emph {\bibinfo
  {title} {Quantum Monte Carlo Methods: Algorithms for Lattice Models}}}\
  (\bibinfo  {publisher} {Cambridge University Press},\ \bibinfo {year}
  {2016})\BibitemShut {NoStop}%
\bibitem [{\citenamefont {Bravyi}\ \emph {et~al.}(2008)\citenamefont {Bravyi},
  \citenamefont {DiVincenzo}, \citenamefont {Oliveira},\ and\ \citenamefont
  {Terhal}}]{DBLP:journals/qic/BravyiDOT08}%
  \BibitemOpen
  \bibfield  {author} {\bibinfo {author} {\bibfnamefont {Sergey}\ \bibnamefont
  {Bravyi}}, \bibinfo {author} {\bibfnamefont {David~P.}\ \bibnamefont
  {DiVincenzo}}, \bibinfo {author} {\bibfnamefont {Roberto}\ \bibnamefont
  {Oliveira}}, \ and\ \bibinfo {author} {\bibfnamefont {Barbara~M.}\
  \bibnamefont {Terhal}},\ }\bibfield  {title} {\enquote {\bibinfo {title} {The
  complexity of stoquastic local hamiltonian problems},}\ }\href
  {http://www.rintonpress.com/xxqic8/qic-8-5/0361-0385.pdf} {\bibfield
  {journal} {\bibinfo  {journal} {Quantum Inf. Comput.}\ }\textbf {\bibinfo
  {volume} {8}},\ \bibinfo {pages} {361--385} (\bibinfo {year}
  {2008})}\BibitemShut {NoStop}%
\bibitem [{\citenamefont {Hastings}(2016)}]{Hastings16:signproblem}%
  \BibitemOpen
  \bibfield  {author} {\bibinfo {author} {\bibfnamefont {M.~B.}\ \bibnamefont
  {Hastings}},\ }\bibfield  {title} {\enquote {\bibinfo {title} {How quantum
  are non-negative wavefunctions?}}\ }\href {\doibase 10.1063/1.4936216}
  {\bibfield  {journal} {\bibinfo  {journal} {Journal of Mathematical Physics}\
  }\textbf {\bibinfo {volume} {57}},\ \bibinfo {pages} {015210} (\bibinfo
  {year} {2016})}\BibitemShut {NoStop}%
\bibitem [{\citenamefont {Page}(1993)}]{Page93}%
  \BibitemOpen
  \bibfield  {author} {\bibinfo {author} {\bibfnamefont {Don~N.}\ \bibnamefont
  {Page}},\ }\bibfield  {title} {\enquote {\bibinfo {title} {Average entropy of
  a subsystem},}\ }\href {\doibase 10.1103/PhysRevLett.71.1291} {\bibfield
  {journal} {\bibinfo  {journal} {Phys. Rev. Lett.}\ }\textbf {\bibinfo
  {volume} {71}},\ \bibinfo {pages} {1291--1294} (\bibinfo {year}
  {1993})}\BibitemShut {NoStop}%
\bibitem [{\citenamefont {Lloyd}\ and\ \citenamefont
  {Pagels}(1988)}]{LLOYD1988186}%
  \BibitemOpen
  \bibfield  {author} {\bibinfo {author} {\bibfnamefont {Seth}\ \bibnamefont
  {Lloyd}}\ and\ \bibinfo {author} {\bibfnamefont {Heinz}\ \bibnamefont
  {Pagels}},\ }\bibfield  {title} {\enquote {\bibinfo {title} {Complexity as
  thermodynamic depth},}\ }\href {\doibase
  https://doi.org/10.1016/0003-4916(88)90094-2} {\bibfield  {journal} {\bibinfo
   {journal} {Annals of Physics}\ }\textbf {\bibinfo {volume} {188}},\ \bibinfo
  {pages} {186 -- 213} (\bibinfo {year} {1988})}\BibitemShut {NoStop}%
\bibitem [{\citenamefont {Hayden}\ \emph {et~al.}(2006)\citenamefont {Hayden},
  \citenamefont {Leung},\ and\ \citenamefont {Winter}}]{HaydenLeungWinter}%
  \BibitemOpen
  \bibfield  {author} {\bibinfo {author} {\bibfnamefont {Patrick}\ \bibnamefont
  {Hayden}}, \bibinfo {author} {\bibfnamefont {Debbie~W.}\ \bibnamefont
  {Leung}}, \ and\ \bibinfo {author} {\bibfnamefont {Andreas}\ \bibnamefont
  {Winter}},\ }\bibfield  {title} {\enquote {\bibinfo {title} {Aspects of
  generic entanglement},}\ }\href {\doibase 10.1007/s00220-006-1535-6}
  {\bibfield  {journal} {\bibinfo  {journal} {Communications in Mathematical
  Physics}\ }\textbf {\bibinfo {volume} {265}},\ \bibinfo {pages} {95--117}
  (\bibinfo {year} {2006})}\BibitemShut {NoStop}%
\bibitem [{\citenamefont {Liu}\ \emph {et~al.}(2018{\natexlab{a}})\citenamefont
  {Liu}, \citenamefont {Lloyd}, \citenamefont {Zhu},\ and\ \citenamefont
  {Zhu}}]{LLZZ18:design}%
  \BibitemOpen
  \bibfield  {author} {\bibinfo {author} {\bibfnamefont {Zi-Wen}\ \bibnamefont
  {Liu}}, \bibinfo {author} {\bibfnamefont {Seth}\ \bibnamefont {Lloyd}},
  \bibinfo {author} {\bibfnamefont {Elton~Yechao}\ \bibnamefont {Zhu}}, \ and\
  \bibinfo {author} {\bibfnamefont {Huangjun}\ \bibnamefont {Zhu}},\ }\bibfield
   {title} {\enquote {\bibinfo {title} {Generalized entanglement entropies of
  quantum designs},}\ }\href {\doibase 10.1103/PhysRevLett.120.130502}
  {\bibfield  {journal} {\bibinfo  {journal} {Phys. Rev. Lett.}\ }\textbf
  {\bibinfo {volume} {120}},\ \bibinfo {pages} {130502} (\bibinfo {year}
  {2018}{\natexlab{a}})}\BibitemShut {NoStop}%
\bibitem [{\citenamefont {Liu}\ \emph {et~al.}(2018{\natexlab{b}})\citenamefont
  {Liu}, \citenamefont {Lloyd}, \citenamefont {Zhu},\ and\ \citenamefont
  {Zhu}}]{Liu2018:jhep}%
  \BibitemOpen
  \bibfield  {author} {\bibinfo {author} {\bibfnamefont {Zi-Wen}\ \bibnamefont
  {Liu}}, \bibinfo {author} {\bibfnamefont {Seth}\ \bibnamefont {Lloyd}},
  \bibinfo {author} {\bibfnamefont {Elton}\ \bibnamefont {Zhu}}, \ and\
  \bibinfo {author} {\bibfnamefont {Huangjun}\ \bibnamefont {Zhu}},\ }\bibfield
   {title} {\enquote {\bibinfo {title} {{Entanglement, quantum randomness, and
  complexity beyond scrambling}},}\ }\href {\doibase 10.1007/JHEP07(2018)041}
  {\bibfield  {journal} {\bibinfo  {journal} {Journal of High Energy Physics}\
  }\textbf {\bibinfo {volume} {2018}},\ \bibinfo {pages} {41} (\bibinfo {year}
  {2018}{\natexlab{b}})}\BibitemShut {NoStop}%
\bibitem [{\citenamefont {Rossi}\ \emph {et~al.}(2013)\citenamefont {Rossi},
  \citenamefont {Huber}, \citenamefont {Bru{\ss}},\ and\ \citenamefont
  {Macchiavello}}]{Rossi13:hypergraph}%
  \BibitemOpen
  \bibfield  {author} {\bibinfo {author} {\bibfnamefont {M}~\bibnamefont
  {Rossi}}, \bibinfo {author} {\bibfnamefont {M}~\bibnamefont {Huber}},
  \bibinfo {author} {\bibfnamefont {D}~\bibnamefont {Bru{\ss}}}, \ and\
  \bibinfo {author} {\bibfnamefont {C}~\bibnamefont {Macchiavello}},\
  }\bibfield  {title} {\enquote {\bibinfo {title} {Quantum hypergraph
  states},}\ }\href {\doibase 10.1088/1367-2630/15/11/113022} {\bibfield
  {journal} {\bibinfo  {journal} {New Journal of Physics}\ }\textbf {\bibinfo
  {volume} {15}},\ \bibinfo {pages} {113022} (\bibinfo {year}
  {2013})}\BibitemShut {NoStop}%
\bibitem [{\citenamefont {Miller}\ and\ \citenamefont
  {Miyake}(2016)}]{MillerMiyake16}%
  \BibitemOpen
  \bibfield  {author} {\bibinfo {author} {\bibfnamefont {Jacob}\ \bibnamefont
  {Miller}}\ and\ \bibinfo {author} {\bibfnamefont {Akimasa}\ \bibnamefont
  {Miyake}},\ }\bibfield  {title} {\enquote {\bibinfo {title} {Hierarchy of
  universal entanglement in 2d measurement-based quantum computation},}\ }\href
  {\doibase 10.1038/npjqi.2016.36} {\bibfield  {journal} {\bibinfo  {journal}
  {npj Quantum Information}\ }\textbf {\bibinfo {volume} {2}},\ \bibinfo
  {pages} {16036} (\bibinfo {year} {2016})}\BibitemShut {NoStop}%
\bibitem [{\citenamefont {Takeuchi}\ \emph {et~al.}(2019)\citenamefont
  {Takeuchi}, \citenamefont {Morimae},\ and\ \citenamefont
  {Hayashi}}]{TakeuchiMorimaeHayashi19}%
  \BibitemOpen
  \bibfield  {author} {\bibinfo {author} {\bibfnamefont {Yuki}\ \bibnamefont
  {Takeuchi}}, \bibinfo {author} {\bibfnamefont {Tomoyuki}\ \bibnamefont
  {Morimae}}, \ and\ \bibinfo {author} {\bibfnamefont {Masahito}\ \bibnamefont
  {Hayashi}},\ }\bibfield  {title} {\enquote {\bibinfo {title} {Quantum
  computational universality of hypergraph states with pauli-x and z basis
  measurements},}\ }\href {\doibase 10.1038/s41598-019-49968-3} {\bibfield
  {journal} {\bibinfo  {journal} {Scientific Reports}\ }\textbf {\bibinfo
  {volume} {9}},\ \bibinfo {pages} {13585} (\bibinfo {year}
  {2019})}\BibitemShut {NoStop}%
\bibitem [{\citenamefont {Gachechiladze}\ \emph {et~al.}(2019)\citenamefont
  {Gachechiladze}, \citenamefont {G\"uhne},\ and\ \citenamefont
  {Miyake}}]{PhysRevA.99.052304}%
  \BibitemOpen
  \bibfield  {author} {\bibinfo {author} {\bibfnamefont {Mariami}\ \bibnamefont
  {Gachechiladze}}, \bibinfo {author} {\bibfnamefont {Otfried}\ \bibnamefont
  {G\"uhne}}, \ and\ \bibinfo {author} {\bibfnamefont {Akimasa}\ \bibnamefont
  {Miyake}},\ }\bibfield  {title} {\enquote {\bibinfo {title} {Changing the
  circuit-depth complexity of measurement-based quantum computation with
  hypergraph states},}\ }\href {\doibase 10.1103/PhysRevA.99.052304} {\bibfield
   {journal} {\bibinfo  {journal} {Phys. Rev. A}\ }\textbf {\bibinfo {volume}
  {99}},\ \bibinfo {pages} {052304} (\bibinfo {year} {2019})}\BibitemShut
  {NoStop}%
\bibitem [{\citenamefont {Lyons}\ \emph {et~al.}(2017)\citenamefont {Lyons},
  \citenamefont {Gibbons}, \citenamefont {Peters}, \citenamefont {Upchurch},
  \citenamefont {Walck},\ and\ \citenamefont {Wertz}}]{Lyons_2017}%
  \BibitemOpen
  \bibfield  {author} {\bibinfo {author} {\bibfnamefont {David~W}\ \bibnamefont
  {Lyons}}, \bibinfo {author} {\bibfnamefont {Nathaniel~P}\ \bibnamefont
  {Gibbons}}, \bibinfo {author} {\bibfnamefont {Mark~A}\ \bibnamefont
  {Peters}}, \bibinfo {author} {\bibfnamefont {Daniel~J}\ \bibnamefont
  {Upchurch}}, \bibinfo {author} {\bibfnamefont {Scott~N}\ \bibnamefont
  {Walck}}, \ and\ \bibinfo {author} {\bibfnamefont {Ezekiel~W}\ \bibnamefont
  {Wertz}},\ }\bibfield  {title} {\enquote {\bibinfo {title} {Local pauli
  stabilizers of symmetric hypergraph states},}\ }\href {\doibase
  10.1088/1751-8121/aa70b4} {\bibfield  {journal} {\bibinfo  {journal} {Journal
  of Physics A: Mathematical and Theoretical}\ }\textbf {\bibinfo {volume}
  {50}},\ \bibinfo {pages} {245303} (\bibinfo {year} {2017})}\BibitemShut
  {NoStop}%
\bibitem [{\citenamefont {{Balakuntala}}\ and\ \citenamefont
  {{Paul}}(2017)}]{2017arXiv170803756B}%
  \BibitemOpen
  \bibfield  {author} {\bibinfo {author} {\bibfnamefont {Shashanka}\
  \bibnamefont {{Balakuntala}}}\ and\ \bibinfo {author} {\bibfnamefont
  {Goutam}\ \bibnamefont {{Paul}}},\ }\bibfield  {title} {\enquote {\bibinfo
  {title} {{Quantum Error Correction using Hypergraph States}},}\ }\href@noop
  {} {\bibfield  {journal} {\bibinfo  {journal} {arXiv e-prints}\ ,\ \bibinfo
  {eid} {arXiv:1708.03756}} (\bibinfo {year} {2017})},\ \Eprint
  {http://arxiv.org/abs/1708.03756} {arXiv:1708.03756 [quant-ph]} \BibitemShut
  {NoStop}%
\bibitem [{\citenamefont {Levin}\ and\ \citenamefont {Gu}(2012)}]{LevinGu12}%
  \BibitemOpen
  \bibfield  {author} {\bibinfo {author} {\bibfnamefont {Michael}\ \bibnamefont
  {Levin}}\ and\ \bibinfo {author} {\bibfnamefont {Zheng-Cheng}\ \bibnamefont
  {Gu}},\ }\bibfield  {title} {\enquote {\bibinfo {title} {Braiding statistics
  approach to symmetry-protected topological phases},}\ }\href {\doibase
  10.1103/PhysRevB.86.115109} {\bibfield  {journal} {\bibinfo  {journal} {Phys.
  Rev. B}\ }\textbf {\bibinfo {volume} {86}},\ \bibinfo {pages} {115109}
  (\bibinfo {year} {2012})}\BibitemShut {NoStop}%
\bibitem [{\citenamefont {Gottesman}\ and\ \citenamefont
  {Chuang}(1999)}]{GottesmanChuang}%
  \BibitemOpen
  \bibfield  {author} {\bibinfo {author} {\bibfnamefont {Daniel}\ \bibnamefont
  {Gottesman}}\ and\ \bibinfo {author} {\bibfnamefont {Isaac~L.}\ \bibnamefont
  {Chuang}},\ }\bibfield  {title} {\enquote {\bibinfo {title} {Demonstrating
  the viability of universal quantum computation using teleportation and
  single-qubit operations},}\ }\href {\doibase 10.1038/46503} {\bibfield
  {journal} {\bibinfo  {journal} {Nature}\ }\textbf {\bibinfo {volume} {402}},\
  \bibinfo {pages} {390--393} (\bibinfo {year} {1999})}\BibitemShut {NoStop}%
\bibitem [{\citenamefont {Cui}\ \emph {et~al.}(2017)\citenamefont {Cui},
  \citenamefont {Gottesman},\ and\ \citenamefont
  {Krishna}}]{CuiGottesmanKrishna}%
  \BibitemOpen
  \bibfield  {author} {\bibinfo {author} {\bibfnamefont {Shawn~X.}\
  \bibnamefont {Cui}}, \bibinfo {author} {\bibfnamefont {Daniel}\ \bibnamefont
  {Gottesman}}, \ and\ \bibinfo {author} {\bibfnamefont {Anirudh}\ \bibnamefont
  {Krishna}},\ }\bibfield  {title} {\enquote {\bibinfo {title} {Diagonal gates
  in the clifford hierarchy},}\ }\href {\doibase 10.1103/PhysRevA.95.012329}
  {\bibfield  {journal} {\bibinfo  {journal} {Phys. Rev. A}\ }\textbf {\bibinfo
  {volume} {95}},\ \bibinfo {pages} {012329} (\bibinfo {year}
  {2017})}\BibitemShut {NoStop}%
\bibitem [{\citenamefont {Gross}\ \emph {et~al.}(2009)\citenamefont {Gross},
  \citenamefont {Flammia},\ and\ \citenamefont
  {Eisert}}]{GrossFlammiaEisert08}%
  \BibitemOpen
  \bibfield  {author} {\bibinfo {author} {\bibfnamefont {D.}~\bibnamefont
  {Gross}}, \bibinfo {author} {\bibfnamefont {S.~T.}\ \bibnamefont {Flammia}},
  \ and\ \bibinfo {author} {\bibfnamefont {J.}~\bibnamefont {Eisert}},\
  }\bibfield  {title} {\enquote {\bibinfo {title} {Most quantum states are too
  entangled to be useful as computational resources},}\ }\href {\doibase
  10.1103/PhysRevLett.102.190501} {\bibfield  {journal} {\bibinfo  {journal}
  {Phys. Rev. Lett.}\ }\textbf {\bibinfo {volume} {102}},\ \bibinfo {pages}
  {190501} (\bibinfo {year} {2009})}\BibitemShut {NoStop}%
\bibitem [{\citenamefont {Bremner}\ \emph {et~al.}(2009)\citenamefont
  {Bremner}, \citenamefont {Mora},\ and\ \citenamefont
  {Winter}}]{BremnerMoraWinter08}%
  \BibitemOpen
  \bibfield  {author} {\bibinfo {author} {\bibfnamefont {Michael~J.}\
  \bibnamefont {Bremner}}, \bibinfo {author} {\bibfnamefont {Caterina}\
  \bibnamefont {Mora}}, \ and\ \bibinfo {author} {\bibfnamefont {Andreas}\
  \bibnamefont {Winter}},\ }\bibfield  {title} {\enquote {\bibinfo {title} {Are
  random pure states useful for quantum computation?}}\ }\href {\doibase
  10.1103/PhysRevLett.102.190502} {\bibfield  {journal} {\bibinfo  {journal}
  {Phys. Rev. Lett.}\ }\textbf {\bibinfo {volume} {102}},\ \bibinfo {pages}
  {190502} (\bibinfo {year} {2009})}\BibitemShut {NoStop}%
\bibitem [{\citenamefont {Yunger~Halpern}(2017)}]{YungerHalpern2017}%
  \BibitemOpen
  \bibfield  {author} {\bibinfo {author} {\bibfnamefont {Nicole}\ \bibnamefont
  {Yunger~Halpern}},\ }\enquote {\bibinfo {title} {Toward physical realizations
  of thermodynamic resource theories},}\ in\ \href {\doibase
  10.1007/978-3-319-43760-6_8} {\emph {\bibinfo {booktitle} {Information and
  Interaction: Eddington, Wheeler, and the Limits of Knowledge}}},\ \bibinfo
  {editor} {edited by\ \bibinfo {editor} {\bibfnamefont {Ian~T.}\ \bibnamefont
  {Durham}}\ and\ \bibinfo {editor} {\bibfnamefont {Dean}\ \bibnamefont
  {Rickles}}}\ (\bibinfo  {publisher} {Springer International Publishing},\
  \bibinfo {address} {Cham},\ \bibinfo {year} {2017})\ pp.\ \bibinfo {pages}
  {135--166}\BibitemShut {NoStop}%
\bibitem [{\citenamefont {Marvian}\ and\ \citenamefont
  {Spekkens}(2014)}]{Marvian14}%
  \BibitemOpen
  \bibfield  {author} {\bibinfo {author} {\bibfnamefont {Iman}\ \bibnamefont
  {Marvian}}\ and\ \bibinfo {author} {\bibfnamefont {Robert~W}\ \bibnamefont
  {Spekkens}},\ }\bibfield  {title} {\enquote {\bibinfo {title} {Extending
  noether's theorem by quantifying the asymmetry of quantum states},}\ }\href
  {\doibase 10.1038/ncomms4821} {\bibfield  {journal} {\bibinfo  {journal}
  {Nature Communications}\ }\textbf {\bibinfo {volume} {5}},\ \bibinfo {pages}
  {3821} (\bibinfo {year} {2014})}\BibitemShut {NoStop}%
\bibitem [{\citenamefont {Zhou}\ \emph {et~al.}(2021)\citenamefont {Zhou},
  \citenamefont {Liu},\ and\ \citenamefont {Jiang}}]{ZLJ20}%
  \BibitemOpen
  \bibfield  {author} {\bibinfo {author} {\bibfnamefont {Sisi}\ \bibnamefont
  {Zhou}}, \bibinfo {author} {\bibfnamefont {Zi-Wen}\ \bibnamefont {Liu}}, \
  and\ \bibinfo {author} {\bibfnamefont {Liang}\ \bibnamefont {Jiang}},\
  }\bibfield  {title} {\enquote {\bibinfo {title} {New perspectives on
  covariant quantum error correction},}\ }\href {\doibase
  10.22331/q-2021-08-09-521} {\bibfield  {journal} {\bibinfo  {journal}
  {{Quantum}}\ }\textbf {\bibinfo {volume} {5}},\ \bibinfo {pages} {521}
  (\bibinfo {year} {2021})}\BibitemShut {NoStop}%
\bibitem [{\citenamefont {Ellison}\ \emph {et~al.}(2021)\citenamefont
  {Ellison}, \citenamefont {Kato}, \citenamefont {Liu},\ and\ \citenamefont
  {Hsieh}}]{EKLH20}%
  \BibitemOpen
  \bibfield  {author} {\bibinfo {author} {\bibfnamefont {Tyler~D.}\
  \bibnamefont {Ellison}}, \bibinfo {author} {\bibfnamefont {Kohtaro}\
  \bibnamefont {Kato}}, \bibinfo {author} {\bibfnamefont {Zi-Wen}\ \bibnamefont
  {Liu}}, \ and\ \bibinfo {author} {\bibfnamefont {Timothy~H.}\ \bibnamefont
  {Hsieh}},\ }\bibfield  {title} {\enquote {\bibinfo {title}
  {Symmetry-protected sign problem and magic in quantum phases of matter},}\
  }\href {\doibase 10.22331/q-2021-12-28-612} {\bibfield  {journal} {\bibinfo
  {journal} {{Quantum}}\ }\textbf {\bibinfo {volume} {5}},\ \bibinfo {pages}
  {612} (\bibinfo {year} {2021})}\BibitemShut {NoStop}%
\bibitem [{\citenamefont {White}\ \emph {et~al.}(2021)\citenamefont {White},
  \citenamefont {Cao},\ and\ \citenamefont {Swingle}}]{WhiteCaoSwingle}%
  \BibitemOpen
  \bibfield  {author} {\bibinfo {author} {\bibfnamefont {Christopher~David}\
  \bibnamefont {White}}, \bibinfo {author} {\bibfnamefont {ChunJun}\
  \bibnamefont {Cao}}, \ and\ \bibinfo {author} {\bibfnamefont {Brian}\
  \bibnamefont {Swingle}},\ }\bibfield  {title} {\enquote {\bibinfo {title}
  {Conformal field theories are magical},}\ }\href {\doibase
  10.1103/PhysRevB.103.075145} {\bibfield  {journal} {\bibinfo  {journal}
  {Phys. Rev. B}\ }\textbf {\bibinfo {volume} {103}},\ \bibinfo {pages}
  {075145} (\bibinfo {year} {2021})}\BibitemShut {NoStop}%
\bibitem [{\citenamefont {Sarkar}\ \emph {et~al.}(2020)\citenamefont {Sarkar},
  \citenamefont {Mukhopadhyay},\ and\ \citenamefont {Bayat}}]{Sarkar_2020}%
  \BibitemOpen
  \bibfield  {author} {\bibinfo {author} {\bibfnamefont {S}~\bibnamefont
  {Sarkar}}, \bibinfo {author} {\bibfnamefont {C}~\bibnamefont {Mukhopadhyay}},
  \ and\ \bibinfo {author} {\bibfnamefont {A}~\bibnamefont {Bayat}},\
  }\bibfield  {title} {\enquote {\bibinfo {title} {Characterization of an
  operational quantum resource in a critical many-body system},}\ }\href
  {\doibase 10.1088/1367-2630/aba919} {\bibfield  {journal} {\bibinfo
  {journal} {New Journal of Physics}\ }\textbf {\bibinfo {volume} {22}},\
  \bibinfo {pages} {083077} (\bibinfo {year} {2020})}\BibitemShut {NoStop}%
\bibitem [{\citenamefont {{Heimendahl}}\ \emph {et~al.}(2020)\citenamefont
  {{Heimendahl}}, \citenamefont {{Heinrich}},\ and\ \citenamefont
  {{Gross}}}]{Heimendahl20:axiomatic}%
  \BibitemOpen
  \bibfield  {author} {\bibinfo {author} {\bibfnamefont {Arne}\ \bibnamefont
  {{Heimendahl}}}, \bibinfo {author} {\bibfnamefont {Markus}\ \bibnamefont
  {{Heinrich}}}, \ and\ \bibinfo {author} {\bibfnamefont {David}\ \bibnamefont
  {{Gross}}},\ }\bibfield  {title} {\enquote {\bibinfo {title} {{The axiomatic
  and the operational approaches to resource theories of magic do not
  coincide}},}\ }\href@noop {} {\bibfield  {journal} {\bibinfo  {journal}
  {arXiv e-prints}\ ,\ \bibinfo {eid} {arXiv:2011.11651}} (\bibinfo {year}
  {2020})},\ \Eprint {http://arxiv.org/abs/2011.11651} {arXiv:2011.11651
  [quant-ph]} \BibitemShut {NoStop}%
\bibitem [{\citenamefont {Datta}\ and\ \citenamefont
  {Leditzky}(2014)}]{Datta_2014}%
  \BibitemOpen
  \bibfield  {author} {\bibinfo {author} {\bibfnamefont {Nilanjana}\
  \bibnamefont {Datta}}\ and\ \bibinfo {author} {\bibfnamefont {Felix}\
  \bibnamefont {Leditzky}},\ }\bibfield  {title} {\enquote {\bibinfo {title} {A
  limit of the quantum r{\'{e}}nyi divergence},}\ }\href {\doibase
  10.1088/1751-8113/47/4/045304} {\bibfield  {journal} {\bibinfo  {journal}
  {Journal of Physics A: Mathematical and Theoretical}\ }\textbf {\bibinfo
  {volume} {47}},\ \bibinfo {pages} {045304} (\bibinfo {year}
  {2014})}\BibitemShut {NoStop}%
\bibitem [{\citenamefont {M{\"u}ller-Lennert}\ \emph
  {et~al.}(2013)\citenamefont {M{\"u}ller-Lennert}, \citenamefont {Dupuis},
  \citenamefont {Szehr}, \citenamefont {Fehr},\ and\ \citenamefont
  {Tomamichel}}]{qrenyi}%
  \BibitemOpen
  \bibfield  {author} {\bibinfo {author} {\bibfnamefont {Martin}\ \bibnamefont
  {M{\"u}ller-Lennert}}, \bibinfo {author} {\bibfnamefont {Fr{\'e}d{\'e}ric}\
  \bibnamefont {Dupuis}}, \bibinfo {author} {\bibfnamefont {Oleg}\ \bibnamefont
  {Szehr}}, \bibinfo {author} {\bibfnamefont {Serge}\ \bibnamefont {Fehr}}, \
  and\ \bibinfo {author} {\bibfnamefont {Marco}\ \bibnamefont {Tomamichel}},\
  }\bibfield  {title} {\enquote {\bibinfo {title} {On quantum {R}{\'e}nyi
  entropies: A new generalization and some properties},}\ }\href {\doibase
  10.1063/1.4838856} {\bibfield  {journal} {\bibinfo  {journal} {Journal of
  Mathematical Physics}\ }\textbf {\bibinfo {volume} {54}},\ \bibinfo {pages}
  {122203} (\bibinfo {year} {2013})}\BibitemShut {NoStop}%
\bibitem [{\citenamefont {Zhou}\ and\ \citenamefont
  {Buscemi}(2020)}]{Zhou_2020}%
  \BibitemOpen
  \bibfield  {author} {\bibinfo {author} {\bibfnamefont {Wenbin}\ \bibnamefont
  {Zhou}}\ and\ \bibinfo {author} {\bibfnamefont {Francesco}\ \bibnamefont
  {Buscemi}},\ }\bibfield  {title} {\enquote {\bibinfo {title} {General state
  transitions with exact resource morphisms: a unified resource-theoretic
  approach},}\ }\href {\doibase 10.1088/1751-8121/abafe5} {\bibfield  {journal}
  {\bibinfo  {journal} {Journal of Physics A: Mathematical and Theoretical}\
  }\textbf {\bibinfo {volume} {53}},\ \bibinfo {pages} {445303} (\bibinfo
  {year} {2020})}\BibitemShut {NoStop}%
\bibitem [{\citenamefont {Napoli}\ \emph {et~al.}(2016)\citenamefont {Napoli},
  \citenamefont {Bromley}, \citenamefont {Cianciaruso}, \citenamefont {Piani},
  \citenamefont {Johnston},\ and\ \citenamefont {Adesso}}]{Napoli:RoC}%
  \BibitemOpen
  \bibfield  {author} {\bibinfo {author} {\bibfnamefont {Carmine}\ \bibnamefont
  {Napoli}}, \bibinfo {author} {\bibfnamefont {Thomas~R.}\ \bibnamefont
  {Bromley}}, \bibinfo {author} {\bibfnamefont {Marco}\ \bibnamefont
  {Cianciaruso}}, \bibinfo {author} {\bibfnamefont {Marco}\ \bibnamefont
  {Piani}}, \bibinfo {author} {\bibfnamefont {Nathaniel}\ \bibnamefont
  {Johnston}}, \ and\ \bibinfo {author} {\bibfnamefont {Gerardo}\ \bibnamefont
  {Adesso}},\ }\bibfield  {title} {\enquote {\bibinfo {title} {Robustness of
  coherence: An operational and observable measure of quantum coherence},}\
  }\href {\doibase 10.1103/PhysRevLett.116.150502} {\bibfield  {journal}
  {\bibinfo  {journal} {Phys. Rev. Lett.}\ }\textbf {\bibinfo {volume} {116}},\
  \bibinfo {pages} {150502} (\bibinfo {year} {2016})}\BibitemShut {NoStop}%
\bibitem [{\citenamefont {Steiner}(2003)}]{Steiner2003robustness}%
  \BibitemOpen
  \bibfield  {author} {\bibinfo {author} {\bibfnamefont {Michael}\ \bibnamefont
  {Steiner}},\ }\bibfield  {title} {\enquote {\bibinfo {title} {Generalized
  robustness of entanglement},}\ }\href {\doibase 10.1103/PhysRevA.67.054305}
  {\bibfield  {journal} {\bibinfo  {journal} {Phys. Rev. A}\ }\textbf {\bibinfo
  {volume} {67}},\ \bibinfo {pages} {054305} (\bibinfo {year}
  {2003})}\BibitemShut {NoStop}%
\bibitem [{\citenamefont {Harrow}\ and\ \citenamefont
  {Nielsen}(2003)}]{Harrow2003robustness}%
  \BibitemOpen
  \bibfield  {author} {\bibinfo {author} {\bibfnamefont {Aram~W.}\ \bibnamefont
  {Harrow}}\ and\ \bibinfo {author} {\bibfnamefont {Michael~A.}\ \bibnamefont
  {Nielsen}},\ }\bibfield  {title} {\enquote {\bibinfo {title} {Robustness of
  quantum gates in the presence of noise},}\ }\href {\doibase
  10.1103/PhysRevA.68.012308} {\bibfield  {journal} {\bibinfo  {journal} {Phys.
  Rev. A}\ }\textbf {\bibinfo {volume} {68}},\ \bibinfo {pages} {012308}
  (\bibinfo {year} {2003})}\BibitemShut {NoStop}%
\bibitem [{\citenamefont {Bravyi}\ and\ \citenamefont
  {Gosset}(2016)}]{BravyiGosset}%
  \BibitemOpen
  \bibfield  {author} {\bibinfo {author} {\bibfnamefont {Sergey}\ \bibnamefont
  {Bravyi}}\ and\ \bibinfo {author} {\bibfnamefont {David}\ \bibnamefont
  {Gosset}},\ }\bibfield  {title} {\enquote {\bibinfo {title} {Improved
  classical simulation of quantum circuits dominated by clifford gates},}\
  }\href {\doibase 10.1103/PhysRevLett.116.250501} {\bibfield  {journal}
  {\bibinfo  {journal} {Phys. Rev. Lett.}\ }\textbf {\bibinfo {volume} {116}},\
  \bibinfo {pages} {250501} (\bibinfo {year} {2016})}\BibitemShut {NoStop}%
\bibitem [{\citenamefont {Pashayan}\ \emph {et~al.}(2015)\citenamefont
  {Pashayan}, \citenamefont {Wallman},\ and\ \citenamefont
  {Bartlett}}]{PashayanWallmanBartlett15:negativity}%
  \BibitemOpen
  \bibfield  {author} {\bibinfo {author} {\bibfnamefont {Hakop}\ \bibnamefont
  {Pashayan}}, \bibinfo {author} {\bibfnamefont {Joel~J.}\ \bibnamefont
  {Wallman}}, \ and\ \bibinfo {author} {\bibfnamefont {Stephen~D.}\
  \bibnamefont {Bartlett}},\ }\bibfield  {title} {\enquote {\bibinfo {title}
  {Estimating outcome probabilities of quantum circuits using
  quasiprobabilities},}\ }\href {\doibase 10.1103/PhysRevLett.115.070501}
  {\bibfield  {journal} {\bibinfo  {journal} {Phys. Rev. Lett.}\ }\textbf
  {\bibinfo {volume} {115}},\ \bibinfo {pages} {070501} (\bibinfo {year}
  {2015})}\BibitemShut {NoStop}%
\bibitem [{\citenamefont {Bandyopadhyay}\ \emph {et~al.}(2002)\citenamefont
  {Bandyopadhyay}, \citenamefont {Boykin}, \citenamefont {Roychowdhury},\ and\
  \citenamefont {Vatan}}]{Boykin}%
  \BibitemOpen
  \bibfield  {author} {\bibinfo {author} {\bibfnamefont {Somshubhro}\
  \bibnamefont {Bandyopadhyay}}, \bibinfo {author} {\bibfnamefont {P.~Oscar}\
  \bibnamefont {Boykin}}, \bibinfo {author} {\bibfnamefont {Vwani}\
  \bibnamefont {Roychowdhury}}, \ and\ \bibinfo {author} {\bibfnamefont
  {Farrokh}\ \bibnamefont {Vatan}},\ }\bibfield  {title} {\enquote {\bibinfo
  {title} {A new proof for the existence of mutually unbiased bases},}\ }\href
  {\doibase https://doi.org/10.1007/s00453-002-0980-7} {\bibfield  {journal}
  {\bibinfo  {journal} {Algorithmica}\ }\textbf {\bibinfo {volume} {34}},\
  \bibinfo {pages} {512--528} (\bibinfo {year} {2002})}\BibitemShut {NoStop}%
\bibitem [{\citenamefont {\ifmmode~\dot{Z}\else \.{Z}\fi{}yczkowski}\ \emph
  {et~al.}(1998)\citenamefont {\ifmmode~\dot{Z}\else \.{Z}\fi{}yczkowski},
  \citenamefont {Horodecki}, \citenamefont {Sanpera},\ and\ \citenamefont
  {Lewenstein}}]{PhysRevA.58.883}%
  \BibitemOpen
  \bibfield  {author} {\bibinfo {author} {\bibfnamefont {Karol}\ \bibnamefont
  {\ifmmode~\dot{Z}\else \.{Z}\fi{}yczkowski}}, \bibinfo {author}
  {\bibfnamefont {Pawe\l{}}\ \bibnamefont {Horodecki}}, \bibinfo {author}
  {\bibfnamefont {Anna}\ \bibnamefont {Sanpera}}, \ and\ \bibinfo {author}
  {\bibfnamefont {Maciej}\ \bibnamefont {Lewenstein}},\ }\bibfield  {title}
  {\enquote {\bibinfo {title} {Volume of the set of separable states},}\ }\href
  {\doibase 10.1103/PhysRevA.58.883} {\bibfield  {journal} {\bibinfo  {journal}
  {Phys. Rev. A}\ }\textbf {\bibinfo {volume} {58}},\ \bibinfo {pages}
  {883--892} (\bibinfo {year} {1998})}\BibitemShut {NoStop}%
\bibitem [{\citenamefont {Kus}\ \emph {et~al.}(1988)\citenamefont {Kus},
  \citenamefont {Mostowski},\ and\ \citenamefont {Haake}}]{Kus_1988}%
  \BibitemOpen
  \bibfield  {author} {\bibinfo {author} {\bibfnamefont {M}~\bibnamefont
  {Kus}}, \bibinfo {author} {\bibfnamefont {J}~\bibnamefont {Mostowski}}, \
  and\ \bibinfo {author} {\bibfnamefont {F}~\bibnamefont {Haake}},\ }\bibfield
  {title} {\enquote {\bibinfo {title} {Universality of eigenvector statistics
  of kicked tops of different symmetries},}\ }\href {\doibase
  10.1088/0305-4470/21/22/006} {\bibfield  {journal} {\bibinfo  {journal}
  {Journal of Physics A: Mathematical and General}\ }\textbf {\bibinfo {volume}
  {21}},\ \bibinfo {pages} {L1073--L1077} (\bibinfo {year} {1988})}\BibitemShut
  {NoStop}%
\bibitem [{\citenamefont {Zyczkowski}\ and\ \citenamefont
  {Sommers}(2000)}]{Zyczkowski_2000}%
  \BibitemOpen
  \bibfield  {author} {\bibinfo {author} {\bibfnamefont {Karol}\ \bibnamefont
  {Zyczkowski}}\ and\ \bibinfo {author} {\bibfnamefont {Hans-Jürgen}\
  \bibnamefont {Sommers}},\ }\bibfield  {title} {\enquote {\bibinfo {title}
  {Truncations of random unitary matrices},}\ }\href {\doibase
  10.1088/0305-4470/33/10/307} {\bibfield  {journal} {\bibinfo  {journal}
  {Journal of Physics A: Mathematical and General}\ }\textbf {\bibinfo {volume}
  {33}},\ \bibinfo {pages} {2045--2057} (\bibinfo {year} {2000})}\BibitemShut
  {NoStop}%
\bibitem [{\citenamefont {Mehta}(2004)}]{mehta2004random}%
  \BibitemOpen
  \bibfield  {author} {\bibinfo {author} {\bibfnamefont {M.L.}\ \bibnamefont
  {Mehta}},\ }\href {https://books.google.ca/books?id=Kp3Nx03\_gMwC} {\emph
  {\bibinfo {title} {Random Matrices}}},\ ISSN\ (\bibinfo  {publisher}
  {Elsevier Science},\ \bibinfo {year} {2004})\BibitemShut {NoStop}%
\bibitem [{\citenamefont {Aaronson}\ and\ \citenamefont
  {Gottesman}(2004)}]{AaronsonGottesman04}%
  \BibitemOpen
  \bibfield  {author} {\bibinfo {author} {\bibfnamefont {Scott}\ \bibnamefont
  {Aaronson}}\ and\ \bibinfo {author} {\bibfnamefont {Daniel}\ \bibnamefont
  {Gottesman}},\ }\bibfield  {title} {\enquote {\bibinfo {title} {Improved
  simulation of stabilizer circuits},}\ }\href {\doibase
  10.1103/PhysRevA.70.052328} {\bibfield  {journal} {\bibinfo  {journal} {Phys.
  Rev. A}\ }\textbf {\bibinfo {volume} {70}},\ \bibinfo {pages} {052328}
  (\bibinfo {year} {2004})}\BibitemShut {NoStop}%
\bibitem [{\citenamefont {{Haferkamp}}\ \emph {et~al.}(2020)\citenamefont
  {{Haferkamp}}, \citenamefont {{Montealegre-Mora}}, \citenamefont
  {{Heinrich}}, \citenamefont {{Eisert}}, \citenamefont {{Gross}},\ and\
  \citenamefont {{Roth}}}]{Haferkamp:homeopathy}%
  \BibitemOpen
  \bibfield  {author} {\bibinfo {author} {\bibfnamefont {Jonas}\ \bibnamefont
  {{Haferkamp}}}, \bibinfo {author} {\bibfnamefont {Felipe}\ \bibnamefont
  {{Montealegre-Mora}}}, \bibinfo {author} {\bibfnamefont {Markus}\
  \bibnamefont {{Heinrich}}}, \bibinfo {author} {\bibfnamefont {Jens}\
  \bibnamefont {{Eisert}}}, \bibinfo {author} {\bibfnamefont {David}\
  \bibnamefont {{Gross}}}, \ and\ \bibinfo {author} {\bibfnamefont {Ingo}\
  \bibnamefont {{Roth}}},\ }\bibfield  {title} {\enquote {\bibinfo {title}
  {{Quantum homeopathy works: Efficient unitary designs with a system-size
  independent number of non-Clifford gates}},}\ }\href@noop {} {\bibfield
  {journal} {\bibinfo  {journal} {arXiv e-prints}\ ,\ \bibinfo {eid}
  {arXiv:2002.09524}} (\bibinfo {year} {2020})},\ \Eprint
  {http://arxiv.org/abs/2002.09524} {arXiv:2002.09524 [quant-ph]} \BibitemShut
  {NoStop}%
\bibitem [{\citenamefont {Hayden}\ and\ \citenamefont
  {Preskill}(2007)}]{hayden2007}%
  \BibitemOpen
  \bibfield  {author} {\bibinfo {author} {\bibfnamefont {Patrick}\ \bibnamefont
  {Hayden}}\ and\ \bibinfo {author} {\bibfnamefont {John}\ \bibnamefont
  {Preskill}},\ }\bibfield  {title} {\enquote {\bibinfo {title} {Black holes as
  mirrors: quantum information in random subsystems},}\ }\href {\doibase
  10.1088/1126-6708/2007/09/120} {\bibfield  {journal} {\bibinfo  {journal}
  {Journal of High Energy Physics}\ }\textbf {\bibinfo {volume} {2007}},\
  \bibinfo {pages} {120--120} (\bibinfo {year} {2007})}\BibitemShut {NoStop}%
\bibitem [{\citenamefont {Hosur}\ \emph {et~al.}(2016)\citenamefont {Hosur},
  \citenamefont {Qi}, \citenamefont {Roberts},\ and\ \citenamefont
  {Yoshida}}]{HQRY16}%
  \BibitemOpen
  \bibfield  {author} {\bibinfo {author} {\bibfnamefont {Pavan}\ \bibnamefont
  {Hosur}}, \bibinfo {author} {\bibfnamefont {Xiao-Liang}\ \bibnamefont {Qi}},
  \bibinfo {author} {\bibfnamefont {Daniel~A.}\ \bibnamefont {Roberts}}, \ and\
  \bibinfo {author} {\bibfnamefont {Beni}\ \bibnamefont {Yoshida}},\ }\bibfield
   {title} {\enquote {\bibinfo {title} {Chaos in quantum channels},}\ }\href
  {\doibase 10.1007/JHEP02(2016)004} {\bibfield  {journal} {\bibinfo  {journal}
  {Journal of High Energy Physics}\ }\textbf {\bibinfo {volume} {2016}},\
  \bibinfo {pages} {4} (\bibinfo {year} {2016})}\BibitemShut {NoStop}%
\bibitem [{\citenamefont {Harrow}\ \emph {et~al.}(2021)\citenamefont {Harrow},
  \citenamefont {Kong}, \citenamefont {Liu}, \citenamefont {Mehraban},\ and\
  \citenamefont {Shor}}]{2019arXiv190602219H}%
  \BibitemOpen
  \bibfield  {author} {\bibinfo {author} {\bibfnamefont {Aram~W.}\ \bibnamefont
  {Harrow}}, \bibinfo {author} {\bibfnamefont {Linghang}\ \bibnamefont {Kong}},
  \bibinfo {author} {\bibfnamefont {Zi-Wen}\ \bibnamefont {Liu}}, \bibinfo
  {author} {\bibfnamefont {Saeed}\ \bibnamefont {Mehraban}}, \ and\ \bibinfo
  {author} {\bibfnamefont {Peter~W.}\ \bibnamefont {Shor}},\ }\bibfield
  {title} {\enquote {\bibinfo {title} {Separation of out-of-time-ordered
  correlation and entanglement},}\ }\href {\doibase
  10.1103/PRXQuantum.2.020339} {\bibfield  {journal} {\bibinfo  {journal} {PRX
  Quantum}\ }\textbf {\bibinfo {volume} {2}},\ \bibinfo {pages} {020339}
  (\bibinfo {year} {2021})}\BibitemShut {NoStop}%
\bibitem [{\citenamefont {Nahum}\ \emph {et~al.}(2017)\citenamefont {Nahum},
  \citenamefont {Ruhman}, \citenamefont {Vijay},\ and\ \citenamefont
  {Haah}}]{Nahum2}%
  \BibitemOpen
  \bibfield  {author} {\bibinfo {author} {\bibfnamefont {Adam}\ \bibnamefont
  {Nahum}}, \bibinfo {author} {\bibfnamefont {Jonathan}\ \bibnamefont
  {Ruhman}}, \bibinfo {author} {\bibfnamefont {Sagar}\ \bibnamefont {Vijay}}, \
  and\ \bibinfo {author} {\bibfnamefont {Jeongwan}\ \bibnamefont {Haah}},\
  }\bibfield  {title} {\enquote {\bibinfo {title} {Quantum entanglement growth
  under random unitary dynamics},}\ }\href {\doibase 10.1103/PhysRevX.7.031016}
  {\bibfield  {journal} {\bibinfo  {journal} {Phys. Rev. X}\ }\textbf {\bibinfo
  {volume} {7}},\ \bibinfo {pages} {031016} (\bibinfo {year}
  {2017})}\BibitemShut {NoStop}%
\bibitem [{\citenamefont {Nahum}\ \emph {et~al.}(2018)\citenamefont {Nahum},
  \citenamefont {Vijay},\ and\ \citenamefont {Haah}}]{Nahum1}%
  \BibitemOpen
  \bibfield  {author} {\bibinfo {author} {\bibfnamefont {Adam}\ \bibnamefont
  {Nahum}}, \bibinfo {author} {\bibfnamefont {Sagar}\ \bibnamefont {Vijay}}, \
  and\ \bibinfo {author} {\bibfnamefont {Jeongwan}\ \bibnamefont {Haah}},\
  }\bibfield  {title} {\enquote {\bibinfo {title} {Operator spreading in random
  unitary circuits},}\ }\href {\doibase 10.1103/PhysRevX.8.021014} {\bibfield
  {journal} {\bibinfo  {journal} {Phys. Rev. X}\ }\textbf {\bibinfo {volume}
  {8}},\ \bibinfo {pages} {021014} (\bibinfo {year} {2018})}\BibitemShut
  {NoStop}%
\bibitem [{\citenamefont {Brandao}\ \emph {et~al.}(2016)\citenamefont
  {Brandao}, \citenamefont {Harrow},\ and\ \citenamefont
  {Horodecki}}]{BrandaoHarrowHorodecki}%
  \BibitemOpen
  \bibfield  {author} {\bibinfo {author} {\bibfnamefont {Fernando~GSL}\
  \bibnamefont {Brandao}}, \bibinfo {author} {\bibfnamefont {Aram~W}\
  \bibnamefont {Harrow}}, \ and\ \bibinfo {author} {\bibfnamefont {Micha{\l}}\
  \bibnamefont {Horodecki}},\ }\bibfield  {title} {\enquote {\bibinfo {title}
  {Local random quantum circuits are approximate polynomial-designs},}\ }\href
  {\doibase https://doi.org/10.1007/s00220-016-2706-8} {\bibfield  {journal}
  {\bibinfo  {journal} {Communications in Mathematical Physics}\ }\textbf
  {\bibinfo {volume} {346}},\ \bibinfo {pages} {397--434} (\bibinfo {year}
  {2016})}\BibitemShut {NoStop}%
\bibitem [{\citenamefont {{Oszmaniec}}\ \emph {et~al.}(2020)\citenamefont
  {{Oszmaniec}}, \citenamefont {{Sawicki}},\ and\ \citenamefont
  {{Horodecki}}}]{OsmaniecSawickiHorodecki}%
  \BibitemOpen
  \bibfield  {author} {\bibinfo {author} {\bibfnamefont {Micha{\l}}\
  \bibnamefont {{Oszmaniec}}}, \bibinfo {author} {\bibfnamefont {Adam}\
  \bibnamefont {{Sawicki}}}, \ and\ \bibinfo {author} {\bibfnamefont
  {Micha{\l}}\ \bibnamefont {{Horodecki}}},\ }\bibfield  {title} {\enquote
  {\bibinfo {title} {{Epsilon-nets, unitary designs and random quantum
  circuits}},}\ }\href@noop {} {\bibfield  {journal} {\bibinfo  {journal}
  {arXiv e-prints}\ ,\ \bibinfo {eid} {arXiv:2007.10885}} (\bibinfo {year}
  {2020})},\ \Eprint {http://arxiv.org/abs/2007.10885} {arXiv:2007.10885
  [quant-ph]} \BibitemShut {NoStop}%
\bibitem [{\citenamefont {{Susskind}}(2014)}]{Susskind14}%
  \BibitemOpen
  \bibfield  {author} {\bibinfo {author} {\bibfnamefont {Leonard}\ \bibnamefont
  {{Susskind}}},\ }\bibfield  {title} {\enquote {\bibinfo {title}
  {{Computational Complexity and Black Hole Horizons}},}\ }\href@noop {}
  {\bibfield  {journal} {\bibinfo  {journal} {arXiv e-prints}\ ,\ \bibinfo
  {eid} {arXiv:1402.5674}} (\bibinfo {year} {2014})},\ \Eprint
  {http://arxiv.org/abs/1402.5674} {arXiv:1402.5674 [hep-th]} \BibitemShut
  {NoStop}%
\bibitem [{\citenamefont {Stanford}\ and\ \citenamefont
  {Susskind}(2014)}]{PhysRevD.90.126007}%
  \BibitemOpen
  \bibfield  {author} {\bibinfo {author} {\bibfnamefont {Douglas}\ \bibnamefont
  {Stanford}}\ and\ \bibinfo {author} {\bibfnamefont {Leonard}\ \bibnamefont
  {Susskind}},\ }\bibfield  {title} {\enquote {\bibinfo {title} {Complexity and
  shock wave geometries},}\ }\href {\doibase 10.1103/PhysRevD.90.126007}
  {\bibfield  {journal} {\bibinfo  {journal} {Phys. Rev. D}\ }\textbf {\bibinfo
  {volume} {90}},\ \bibinfo {pages} {126007} (\bibinfo {year}
  {2014})}\BibitemShut {NoStop}%
\bibitem [{\citenamefont {Brown}\ and\ \citenamefont
  {Susskind}(2018)}]{BrownSusskind18}%
  \BibitemOpen
  \bibfield  {author} {\bibinfo {author} {\bibfnamefont {Adam~R.}\ \bibnamefont
  {Brown}}\ and\ \bibinfo {author} {\bibfnamefont {Leonard}\ \bibnamefont
  {Susskind}},\ }\bibfield  {title} {\enquote {\bibinfo {title} {Second law of
  quantum complexity},}\ }\href {\doibase 10.1103/PhysRevD.97.086015}
  {\bibfield  {journal} {\bibinfo  {journal} {Phys. Rev. D}\ }\textbf {\bibinfo
  {volume} {97}},\ \bibinfo {pages} {086015} (\bibinfo {year}
  {2018})}\BibitemShut {NoStop}%
\bibitem [{\citenamefont {{Susskind}}(2018)}]{Susskind18}%
  \BibitemOpen
  \bibfield  {author} {\bibinfo {author} {\bibfnamefont {Leonard}\ \bibnamefont
  {{Susskind}}},\ }\bibfield  {title} {\enquote {\bibinfo {title} {{Black Holes
  and Complexity Classes}},}\ }\href@noop {} {\bibfield  {journal} {\bibinfo
  {journal} {arXiv e-prints}\ ,\ \bibinfo {eid} {arXiv:1802.02175}} (\bibinfo
  {year} {2018})},\ \Eprint {http://arxiv.org/abs/1802.02175} {arXiv:1802.02175
  [hep-th]} \BibitemShut {NoStop}%
\bibitem [{\citenamefont {Brand\~ao}\ \emph {et~al.}(2021)\citenamefont
  {Brand\~ao}, \citenamefont {Chemissany}, \citenamefont {Hunter-Jones},
  \citenamefont {Kueng},\ and\ \citenamefont
  {Preskill}}]{Brandao19:complexity-growth}%
  \BibitemOpen
  \bibfield  {author} {\bibinfo {author} {\bibfnamefont {Fernando~G.S.L.}\
  \bibnamefont {Brand\~ao}}, \bibinfo {author} {\bibfnamefont {Wissam}\
  \bibnamefont {Chemissany}}, \bibinfo {author} {\bibfnamefont {Nicholas}\
  \bibnamefont {Hunter-Jones}}, \bibinfo {author} {\bibfnamefont {Richard}\
  \bibnamefont {Kueng}}, \ and\ \bibinfo {author} {\bibfnamefont {John}\
  \bibnamefont {Preskill}},\ }\bibfield  {title} {\enquote {\bibinfo {title}
  {Models of quantum complexity growth},}\ }\href {\doibase
  10.1103/PRXQuantum.2.030316} {\bibfield  {journal} {\bibinfo  {journal} {PRX
  Quantum}\ }\textbf {\bibinfo {volume} {2}},\ \bibinfo {pages} {030316}
  (\bibinfo {year} {2021})}\BibitemShut {NoStop}%
\bibitem [{\citenamefont {Brand\~ao}\ and\ \citenamefont
  {Gour}(2015)}]{BrandaoGour15}%
  \BibitemOpen
  \bibfield  {author} {\bibinfo {author} {\bibfnamefont {Fernando G. S.~L.}\
  \bibnamefont {Brand\~ao}}\ and\ \bibinfo {author} {\bibfnamefont {Gilad}\
  \bibnamefont {Gour}},\ }\bibfield  {title} {\enquote {\bibinfo {title}
  {Reversible framework for quantum resource theories},}\ }\href {\doibase
  10.1103/PhysRevLett.115.070503} {\bibfield  {journal} {\bibinfo  {journal}
  {Phys. Rev. Lett.}\ }\textbf {\bibinfo {volume} {115}},\ \bibinfo {pages}
  {070503} (\bibinfo {year} {2015})}\BibitemShut {NoStop}%
\bibitem [{\citenamefont {Lami}\ and\ \citenamefont
  {Regula}(2021)}]{LamiRegula21}%
  \BibitemOpen
  \bibfield  {author} {\bibinfo {author} {\bibfnamefont {Ludovico}\
  \bibnamefont {Lami}}\ and\ \bibinfo {author} {\bibfnamefont {Bartosz}\
  \bibnamefont {Regula}},\ }\href@noop {} {\enquote {\bibinfo {title} {No
  second law of entanglement manipulation after all},}\ } (\bibinfo {year}
  {2021}),\ \bibinfo {note} {arXiv[quant-ph]:2111.02438}\BibitemShut {NoStop}%
\bibitem [{\citenamefont {D\'iaz}\ \emph {et~al.}(2018)\citenamefont {D\'iaz},
  \citenamefont {Fang}, \citenamefont {Wang}, \citenamefont {Rosati},
  \citenamefont {Skotiniotis}, \citenamefont {Calsamiglia},\ and\ \citenamefont
  {Winter}}]{Using+reusing}%
  \BibitemOpen
  \bibfield  {author} {\bibinfo {author} {\bibfnamefont {Mar\'ia~Garc\'ia}\
  \bibnamefont {D\'iaz}}, \bibinfo {author} {\bibfnamefont {Kun}\ \bibnamefont
  {Fang}}, \bibinfo {author} {\bibfnamefont {Xin}\ \bibnamefont {Wang}},
  \bibinfo {author} {\bibfnamefont {Matteo}\ \bibnamefont {Rosati}}, \bibinfo
  {author} {\bibfnamefont {Michalis}\ \bibnamefont {Skotiniotis}}, \bibinfo
  {author} {\bibfnamefont {John}\ \bibnamefont {Calsamiglia}}, \ and\ \bibinfo
  {author} {\bibfnamefont {Andreas}\ \bibnamefont {Winter}},\ }\bibfield
  {title} {\enquote {\bibinfo {title} {Using and reusing coherence to realize
  quantum processes},}\ }\href {\doibase 10.22331/q-2018-10-19-100} {\bibfield
  {journal} {\bibinfo  {journal} {Quantum}\ }\textbf {\bibinfo {volume} {2}},\
  \bibinfo {pages} {100} (\bibinfo {year} {2018})}\BibitemShut {NoStop}%
\bibitem [{\citenamefont {{Schlingemann}}(2001)}]{Schlingemann}%
  \BibitemOpen
  \bibfield  {author} {\bibinfo {author} {\bibfnamefont {D.}~\bibnamefont
  {{Schlingemann}}},\ }\bibfield  {title} {\enquote {\bibinfo {title}
  {{Stabilizer codes can be realized as graph codes}},}\ }\href@noop {}
  {\bibfield  {journal} {\bibinfo  {journal} {arXiv e-prints}\ ,\ \bibinfo
  {eid} {quant-ph/0111080}} (\bibinfo {year} {2001})},\ \Eprint
  {http://arxiv.org/abs/quant-ph/0111080} {arXiv:quant-ph/0111080 [quant-ph]}
  \BibitemShut {NoStop}%
\bibitem [{\citenamefont {Hein}\ \emph {et~al.}(2006)\citenamefont {Hein},
  \citenamefont {Dür}, \citenamefont {Eisert}, \citenamefont {Raussendorf},
  \citenamefont {den Nest},\ and\ \citenamefont
  {Briegel}}]{hein2006entanglement}%
  \BibitemOpen
  \bibfield  {author} {\bibinfo {author} {\bibfnamefont {M.}~\bibnamefont
  {Hein}}, \bibinfo {author} {\bibfnamefont {W.}~\bibnamefont {Dür}}, \bibinfo
  {author} {\bibfnamefont {J.}~\bibnamefont {Eisert}}, \bibinfo {author}
  {\bibfnamefont {R.}~\bibnamefont {Raussendorf}}, \bibinfo {author}
  {\bibfnamefont {M.~Van}\ \bibnamefont {den Nest}}, \ and\ \bibinfo {author}
  {\bibfnamefont {H.~J.}\ \bibnamefont {Briegel}},\ }\href@noop {} {\enquote
  {\bibinfo {title} {Entanglement in graph states and its applications},}\ }
  (\bibinfo {year} {2006}),\ \Eprint {http://arxiv.org/abs/quant-ph/0602096}
  {arXiv:quant-ph/0602096 [quant-ph]} \BibitemShut {NoStop}%
\bibitem [{\citenamefont {Carlet}(2010)}]{carlet_2010}%
  \BibitemOpen
  \bibfield  {author} {\bibinfo {author} {\bibfnamefont {Claude}\ \bibnamefont
  {Carlet}},\ }\enquote {\bibinfo {title} {Boolean functions for cryptography
  and error-correcting codes},}\ in\ \href {\doibase
  10.1017/CBO9780511780448.011} {\emph {\bibinfo {booktitle} {Boolean Models
  and Methods in Mathematics, Computer Science, and Engineering}}},\ \bibinfo
  {series and number} {Encyclopedia of Mathematics and its Applications},\
  \bibinfo {editor} {edited by\ \bibinfo {editor} {\bibfnamefont {Yves}\
  \bibnamefont {Crama}}\ and\ \bibinfo {editor} {\bibfnamefont
  {Peter~L.Editors}\ \bibnamefont {Hammer}}}\ (\bibinfo  {publisher} {Cambridge
  University Press},\ \bibinfo {year} {2010})\ p.\ \bibinfo {pages}
  {257–397}\BibitemShut {NoStop}%
\bibitem [{\citenamefont {Carlet}(2011)}]{Carlet2011}%
  \BibitemOpen
  \bibfield  {author} {\bibinfo {author} {\bibfnamefont {Claude}\ \bibnamefont
  {Carlet}},\ }\enquote {\bibinfo {title} {Nonlinearity of boolean
  functions},}\ in\ \href {\doibase 10.1007/978-1-4419-5906-5_362} {\emph
  {\bibinfo {booktitle} {Encyclopedia of Cryptography and Security}}},\
  \bibinfo {editor} {edited by\ \bibinfo {editor} {\bibfnamefont {Henk C.~A.}\
  \bibnamefont {van Tilborg}}\ and\ \bibinfo {editor} {\bibfnamefont {Sushil}\
  \bibnamefont {Jajodia}}}\ (\bibinfo  {publisher} {Springer US},\ \bibinfo
  {address} {Boston, MA},\ \bibinfo {year} {2011})\ pp.\ \bibinfo {pages}
  {848--849}\BibitemShut {NoStop}%
\bibitem [{\citenamefont {Cohen}\ \emph {et~al.}(1997)\citenamefont {Cohen},
  \citenamefont {Honkala}, \citenamefont {Litsyn},\ and\ \citenamefont
  {Lobstein}}]{Cohen:covering-codes}%
  \BibitemOpen
  \bibfield  {author} {\bibinfo {author} {\bibfnamefont {G.}~\bibnamefont
  {Cohen}}, \bibinfo {author} {\bibfnamefont {I.}~\bibnamefont {Honkala}},
  \bibinfo {author} {\bibfnamefont {S.}~\bibnamefont {Litsyn}}, \ and\ \bibinfo
  {author} {\bibfnamefont {A.}~\bibnamefont {Lobstein}},\ }\href@noop {} {\emph
  {\bibinfo {title} {Covering Codes}}}\ (\bibinfo  {publisher}
  {North-Holland},\ \bibinfo {year} {1997})\BibitemShut {NoStop}%
\bibitem [{\citenamefont {{Carlet}}\ and\ \citenamefont
  {{Mesnager}}(2007)}]{CarletMesnager07:coveringradii}%
  \BibitemOpen
  \bibfield  {author} {\bibinfo {author} {\bibfnamefont {C.}~\bibnamefont
  {{Carlet}}}\ and\ \bibinfo {author} {\bibfnamefont {S.}~\bibnamefont
  {{Mesnager}}},\ }\bibfield  {title} {\enquote {\bibinfo {title} {Improving
  the upper bounds on the covering radii of binary reed–muller codes},}\
  }\href {\doibase 10.1109/TIT.2006.887494} {\bibfield  {journal} {\bibinfo
  {journal} {IEEE Transactions on Information Theory}\ }\textbf {\bibinfo
  {volume} {53}},\ \bibinfo {pages} {162--173} (\bibinfo {year}
  {2007})}\BibitemShut {NoStop}%
\bibitem [{\citenamefont {Cohen}\ and\ \citenamefont
  {Litsyn}(1992)}]{COHEN1992147}%
  \BibitemOpen
  \bibfield  {author} {\bibinfo {author} {\bibfnamefont {Gérard~D.}\
  \bibnamefont {Cohen}}\ and\ \bibinfo {author} {\bibfnamefont {Simon~N.}\
  \bibnamefont {Litsyn}},\ }\bibfield  {title} {\enquote {\bibinfo {title} {On
  the covering radius of reed-muller codes},}\ }\href {\doibase
  https://doi.org/10.1016/0012-365X(92)90542-N} {\bibfield  {journal} {\bibinfo
   {journal} {Discrete Mathematics}\ }\textbf {\bibinfo {volume} {106-107}},\
  \bibinfo {pages} {147 -- 155} (\bibinfo {year} {1992})}\BibitemShut {NoStop}%
\bibitem [{\citenamefont {{Carlet}}(2008)}]{Carlet08}%
  \BibitemOpen
  \bibfield  {author} {\bibinfo {author} {\bibfnamefont {C.}~\bibnamefont
  {{Carlet}}},\ }\bibfield  {title} {\enquote {\bibinfo {title} {Recursive
  lower bounds on the nonlinearity profile of boolean functions and their
  applications},}\ }\href {\doibase 10.1109/TIT.2007.915704} {\bibfield
  {journal} {\bibinfo  {journal} {IEEE Transactions on Information Theory}\
  }\textbf {\bibinfo {volume} {54}},\ \bibinfo {pages} {1262--1272} (\bibinfo
  {year} {2008})}\BibitemShut {NoStop}%
\bibitem [{\citenamefont {Hastings}(2009)}]{Hastings09:superadditivity}%
  \BibitemOpen
  \bibfield  {author} {\bibinfo {author} {\bibfnamefont {M.~B.}\ \bibnamefont
  {Hastings}},\ }\bibfield  {title} {\enquote {\bibinfo {title}
  {Superadditivity of communication capacity using entangled inputs},}\ }\href
  {\doibase 10.1038/nphys1224} {\bibfield  {journal} {\bibinfo  {journal}
  {Nature Physics}\ }\textbf {\bibinfo {volume} {5}},\ \bibinfo {pages}
  {255--257} (\bibinfo {year} {2009})}\BibitemShut {NoStop}%
\bibitem [{\citenamefont {Van~den Nest}\ \emph {et~al.}(2006)\citenamefont
  {Van~den Nest}, \citenamefont {Miyake}, \citenamefont {D\"ur},\ and\
  \citenamefont {Briegel}}]{VandenNest06}%
  \BibitemOpen
  \bibfield  {author} {\bibinfo {author} {\bibfnamefont {Maarten}\ \bibnamefont
  {Van~den Nest}}, \bibinfo {author} {\bibfnamefont {Akimasa}\ \bibnamefont
  {Miyake}}, \bibinfo {author} {\bibfnamefont {Wolfgang}\ \bibnamefont
  {D\"ur}}, \ and\ \bibinfo {author} {\bibfnamefont {Hans~J.}\ \bibnamefont
  {Briegel}},\ }\bibfield  {title} {\enquote {\bibinfo {title} {Universal
  resources for measurement-based quantum computation},}\ }\href {\doibase
  10.1103/PhysRevLett.97.150504} {\bibfield  {journal} {\bibinfo  {journal}
  {Phys. Rev. Lett.}\ }\textbf {\bibinfo {volume} {97}},\ \bibinfo {pages}
  {150504} (\bibinfo {year} {2006})}\BibitemShut {NoStop}%
\bibitem [{\citenamefont {den Nest}\ \emph {et~al.}(2007)\citenamefont {den
  Nest}, \citenamefont {Dür}, \citenamefont {Miyake},\ and\ \citenamefont
  {Briegel}}]{Nest_2007}%
  \BibitemOpen
  \bibfield  {author} {\bibinfo {author} {\bibfnamefont {M~Van}\ \bibnamefont
  {den Nest}}, \bibinfo {author} {\bibfnamefont {W}~\bibnamefont {Dür}},
  \bibinfo {author} {\bibfnamefont {A}~\bibnamefont {Miyake}}, \ and\ \bibinfo
  {author} {\bibfnamefont {H~J}\ \bibnamefont {Briegel}},\ }\bibfield  {title}
  {\enquote {\bibinfo {title} {Fundamentals of universality in one-way quantum
  computation},}\ }\href {\doibase 10.1088/1367-2630/9/6/204} {\bibfield
  {journal} {\bibinfo  {journal} {New Journal of Physics}\ }\textbf {\bibinfo
  {volume} {9}},\ \bibinfo {pages} {204--204} (\bibinfo {year}
  {2007})}\BibitemShut {NoStop}%
\bibitem [{\citenamefont {Bravyi}\ and\ \citenamefont
  {Kitaev}(2005)}]{BravyiKitaev}%
  \BibitemOpen
  \bibfield  {author} {\bibinfo {author} {\bibfnamefont {Sergey}\ \bibnamefont
  {Bravyi}}\ and\ \bibinfo {author} {\bibfnamefont {Alexei}\ \bibnamefont
  {Kitaev}},\ }\bibfield  {title} {\enquote {\bibinfo {title} {Universal
  quantum computation with ideal clifford gates and noisy ancillas},}\ }\href
  {\doibase 10.1103/PhysRevA.71.022316} {\bibfield  {journal} {\bibinfo
  {journal} {Phys. Rev. A}\ }\textbf {\bibinfo {volume} {71}},\ \bibinfo
  {pages} {022316} (\bibinfo {year} {2005})}\BibitemShut {NoStop}%
\bibitem [{\citenamefont {Senthil}(2015)}]{Senthil15}%
  \BibitemOpen
  \bibfield  {author} {\bibinfo {author} {\bibfnamefont {T.}~\bibnamefont
  {Senthil}},\ }\bibfield  {title} {\enquote {\bibinfo {title}
  {Symmetry-protected topological phases of quantum matter},}\ }\href {\doibase
  10.1146/annurev-conmatphys-031214-014740} {\bibfield  {journal} {\bibinfo
  {journal} {Annual Review of Condensed Matter Physics}\ }\textbf {\bibinfo
  {volume} {6}},\ \bibinfo {pages} {299--324} (\bibinfo {year}
  {2015})}\BibitemShut {NoStop}%
\bibitem [{\citenamefont {{Zeng}}\ \emph {et~al.}(2015)\citenamefont {{Zeng}},
  \citenamefont {{Chen}}, \citenamefont {{Zhou}},\ and\ \citenamefont
  {{Wen}}}]{QImeetsQM}%
  \BibitemOpen
  \bibfield  {author} {\bibinfo {author} {\bibfnamefont {Bei}\ \bibnamefont
  {{Zeng}}}, \bibinfo {author} {\bibfnamefont {Xie}\ \bibnamefont {{Chen}}},
  \bibinfo {author} {\bibfnamefont {Duan-Lu}\ \bibnamefont {{Zhou}}}, \ and\
  \bibinfo {author} {\bibfnamefont {Xiao-Gang}\ \bibnamefont {{Wen}}},\
  }\bibfield  {title} {\enquote {\bibinfo {title} {{Quantum Information Meets
  Quantum Matter -- From Quantum Entanglement to Topological Phase in Many-Body
  Systems}},}\ }\href@noop {} {\bibfield  {journal} {\bibinfo  {journal} {arXiv
  e-prints}\ ,\ \bibinfo {eid} {arXiv:1508.02595}} (\bibinfo {year} {2015})},\
  \Eprint {http://arxiv.org/abs/1508.02595} {arXiv:1508.02595
  [cond-mat.str-el]} \BibitemShut {NoStop}%
\bibitem [{\citenamefont {Regula}(2017)}]{Regula2017}%
  \BibitemOpen
  \bibfield  {author} {\bibinfo {author} {\bibfnamefont {Bartosz}\ \bibnamefont
  {Regula}},\ }\bibfield  {title} {\enquote {\bibinfo {title} {Convex geometry
  of quantum resource quantification},}\ }\href {\doibase
  10.1088/1751-8121/aa9100} {\bibfield  {journal} {\bibinfo  {journal} {Journal
  of Physics A: Mathematical and Theoretical}\ }\textbf {\bibinfo {volume}
  {51}},\ \bibinfo {pages} {045303} (\bibinfo {year} {2017})}\BibitemShut
  {NoStop}%
\bibitem [{\citenamefont {Kolokotronis}\ \emph {et~al.}(2007)\citenamefont
  {Kolokotronis}, \citenamefont {Limniotis},\ and\ \citenamefont
  {Kalouptsidis}}]{Cubic_book}%
  \BibitemOpen
  \bibfield  {author} {\bibinfo {author} {\bibfnamefont {Nicholas}\
  \bibnamefont {Kolokotronis}}, \bibinfo {author} {\bibfnamefont
  {Konstantinos}\ \bibnamefont {Limniotis}}, \ and\ \bibinfo {author}
  {\bibfnamefont {Nicholas}\ \bibnamefont {Kalouptsidis}},\ }\bibfield  {title}
  {\enquote {\bibinfo {title} {Efficient computation of the best quadratic
  approximations of cubic boolean functions},}\ }in\ \href@noop {} {\emph
  {\bibinfo {booktitle} {Cryptography and Coding}}},\ \bibinfo {editor} {edited
  by\ \bibinfo {editor} {\bibfnamefont {Steven~D.}\ \bibnamefont {Galbraith}}}\
  (\bibinfo  {publisher} {Springer Berlin Heidelberg},\ \bibinfo {address}
  {Berlin, Heidelberg},\ \bibinfo {year} {2007})\ pp.\ \bibinfo {pages}
  {73--91}\BibitemShut {NoStop}%
\bibitem [{\citenamefont {{Kolokotronis}}\ \emph {et~al.}(2009)\citenamefont
  {{Kolokotronis}}, \citenamefont {{Limniotis}},\ and\ \citenamefont
  {{Kalouptsidis}}}]{Kolokotronis09:quadratic}%
  \BibitemOpen
  \bibfield  {author} {\bibinfo {author} {\bibfnamefont {N.}~\bibnamefont
  {{Kolokotronis}}}, \bibinfo {author} {\bibfnamefont {K.}~\bibnamefont
  {{Limniotis}}}, \ and\ \bibinfo {author} {\bibfnamefont {N.}~\bibnamefont
  {{Kalouptsidis}}},\ }\bibfield  {title} {\enquote {\bibinfo {title} {Best
  affine and quadratic approximations of particular classes of boolean
  functions},}\ }\href {\doibase 10.1109/TIT.2009.2030452} {\bibfield
  {journal} {\bibinfo  {journal} {IEEE Transactions on Information Theory}\
  }\textbf {\bibinfo {volume} {55}},\ \bibinfo {pages} {5211--5222} (\bibinfo
  {year} {2009})}\BibitemShut {NoStop}%
\bibitem [{\citenamefont {Miller}\ and\ \citenamefont
  {Miyake}(2018)}]{PhysRevLett.120.170503}%
  \BibitemOpen
  \bibfield  {author} {\bibinfo {author} {\bibfnamefont {Jacob}\ \bibnamefont
  {Miller}}\ and\ \bibinfo {author} {\bibfnamefont {Akimasa}\ \bibnamefont
  {Miyake}},\ }\bibfield  {title} {\enquote {\bibinfo {title} {Latent
  computational complexity of symmetry-protected topological order with
  fractional symmetry},}\ }\href {\doibase 10.1103/PhysRevLett.120.170503}
  {\bibfield  {journal} {\bibinfo  {journal} {Phys. Rev. Lett.}\ }\textbf
  {\bibinfo {volume} {120}},\ \bibinfo {pages} {170503} (\bibinfo {year}
  {2018})}\BibitemShut {NoStop}%
\bibitem [{\citenamefont {Kuroiwa}\ and\ \citenamefont
  {Yamasaki}(2020)}]{KuroiwaYamasaki20}%
  \BibitemOpen
  \bibfield  {author} {\bibinfo {author} {\bibfnamefont {Kohdai}\ \bibnamefont
  {Kuroiwa}}\ and\ \bibinfo {author} {\bibfnamefont {Hayata}\ \bibnamefont
  {Yamasaki}},\ }\bibfield  {title} {\enquote {\bibinfo {title} {General
  {Q}uantum {R}esource {T}heories: {D}istillation, {F}ormation and {C}onsistent
  {R}esource {M}easures},}\ }\href {\doibase 10.22331/q-2020-11-01-355}
  {\bibfield  {journal} {\bibinfo  {journal} {{Quantum}}\ }\textbf {\bibinfo
  {volume} {4}},\ \bibinfo {pages} {355} (\bibinfo {year} {2020})}\BibitemShut
  {NoStop}%
\bibitem [{\citenamefont {Gour}\ and\ \citenamefont
  {Tomamichel}(2020)}]{GourTomamichel}%
  \BibitemOpen
  \bibfield  {author} {\bibinfo {author} {\bibfnamefont {Gilad}\ \bibnamefont
  {Gour}}\ and\ \bibinfo {author} {\bibfnamefont {Marco}\ \bibnamefont
  {Tomamichel}},\ }\bibfield  {title} {\enquote {\bibinfo {title} {Optimal
  extensions of resource measures and their applications},}\ }\href {\doibase
  10.1103/PhysRevA.102.062401} {\bibfield  {journal} {\bibinfo  {journal}
  {Phys. Rev. A}\ }\textbf {\bibinfo {volume} {102}},\ \bibinfo {pages}
  {062401} (\bibinfo {year} {2020})}\BibitemShut {NoStop}%
\bibitem [{\citenamefont {Gross}\ \emph {et~al.}(2021)\citenamefont {Gross},
  \citenamefont {Nezami},\ and\ \citenamefont {Walter}}]{GrossNezamiWalter17}%
  \BibitemOpen
  \bibfield  {author} {\bibinfo {author} {\bibfnamefont {David}\ \bibnamefont
  {Gross}}, \bibinfo {author} {\bibfnamefont {Sepehr}\ \bibnamefont {Nezami}},
  \ and\ \bibinfo {author} {\bibfnamefont {Michael}\ \bibnamefont {Walter}},\
  }\bibfield  {title} {\enquote {\bibinfo {title} {Schur--weyl duality for the
  clifford group with applications: Property testing, a robust hudson theorem,
  and de finetti representations},}\ }\href {\doibase
  10.1007/s00220-021-04118-7} {\bibfield  {journal} {\bibinfo  {journal}
  {Communications in Mathematical Physics}\ }\textbf {\bibinfo {volume}
  {385}},\ \bibinfo {pages} {1325--1393} (\bibinfo {year} {2021})}\BibitemShut
  {NoStop}%
\bibitem [{\citenamefont {{Kasami}}\ and\ \citenamefont
  {{Tokura}}(1970)}]{Kasami}%
  \BibitemOpen
  \bibfield  {author} {\bibinfo {author} {\bibfnamefont {T.}~\bibnamefont
  {{Kasami}}}\ and\ \bibinfo {author} {\bibfnamefont {N.}~\bibnamefont
  {{Tokura}}},\ }\bibfield  {title} {\enquote {\bibinfo {title} {On the weight
  structure of reed-muller codes},}\ }\href {\doibase 10.1109/TIT.1970.1054545}
  {\bibfield  {journal} {\bibinfo  {journal} {IEEE Transactions on Information
  Theory}\ }\textbf {\bibinfo {volume} {16}},\ \bibinfo {pages} {752--759}
  (\bibinfo {year} {1970})}\BibitemShut {NoStop}%
\bibitem [{\citenamefont {Kasami}\ \emph {et~al.}(1976)\citenamefont {Kasami},
  \citenamefont {Tokura},\ and\ \citenamefont {Azumi}}]{KASAMI1976380}%
  \BibitemOpen
  \bibfield  {author} {\bibinfo {author} {\bibfnamefont {Tadao}\ \bibnamefont
  {Kasami}}, \bibinfo {author} {\bibfnamefont {Nobuki}\ \bibnamefont {Tokura}},
  \ and\ \bibinfo {author} {\bibfnamefont {Saburo}\ \bibnamefont {Azumi}},\
  }\bibfield  {title} {\enquote {\bibinfo {title} {On the weight enumeration of
  weights less than 2.5d of reed—muller codes},}\ }\href {\doibase
  https://doi.org/10.1016/S0019-9958(76)90355-7} {\bibfield  {journal}
  {\bibinfo  {journal} {Information and Control}\ }\textbf {\bibinfo {volume}
  {30}},\ \bibinfo {pages} {380 -- 395} (\bibinfo {year} {1976})}\BibitemShut
  {NoStop}%
\end{thebibliography}%

\end{document}